\theoremstyle{remark}
\newcommand\ASTART{\bigskip\noindent\begin{minipage}[b]{0.5\linewidth}}
	\newcommand\AENDSKIP{\end{minipage}\bigskip}
\newcommand\AEND{\end{minipage}}
\let\MYoriglatexcaption\caption
\renewcommand{\caption}[2][\relax]{\MYoriglatexcaption[#2]{#2}}
\theoremstyle{plain}
\newtheorem{thm}{\textbf{Theorem}}
\newtheorem{lem}{\textbf{Lemma}}
\newtheorem{corl}{\textbf{Corollary}}
\theoremstyle{definition}
\theoremstyle{remark}
\newtheorem{rem}{\bf Remark}
\newcommand*{\rom}[1]{\expandafter\@slowromancap\romannumeral #1@}
\newcommand{\RN}[1]{%
\textup{\uppercase\expandafter{\romannumeral#1}}%
}
\begin{document}
%
%\onecolumn
% paper title
% can use linebreaks \\ within to get better formatting as desired

%\title{Blind Goal-oriented Massive Access: A Novel Strategy for Ultra-reliable Low-latency Massive Connectivity in Beyond 5G}
\title{Blind Goal-Oriented Massive Access\\ for Future Wireless Networks}

\author{Sajad Daei, 
\IEEEmembership{Member, IEEE} and Marios Kountouris, \IEEEmembership{Senior Member, IEEE}
\thanks{The authors are with the Communication Systems Department at EURECOM, Sophia-Antipolis, France, email: \{\texttt{sajad.daei, marios.kountouris\}@eurecom.fr}}}

% make the title area
\maketitle

\begin{abstract}
Emerging communication networks are envisioned to support massive wireless connectivity of heterogeneous devices with sporadic traffic and diverse requirements in terms of latency, reliability, and bandwidth. Providing multiple access to an increasing number of uncoordinated users and sharing the limited resources become essential in this context. In this work, we revisit the random access (RA) problem and exploit the continuous angular group sparsity feature of wireless channels to propose a novel RA strategy that provides low latency, high reliability, and massive access with limited bandwidth resources in an all-in-one package. To this end, we first design a reconstruction-free goal-oriented optimization problem, which only preserves the angular information required to identify the active devices. To solve this, we propose an alternating direction method of multipliers (ADMM) and derive closed-form expressions for each ADMM step. Then, we design a clustering algorithm that assigns the users in specific groups from which we can identify active stationary devices by their angles. For mobile devices, we propose an alternating minimization algorithm to recover their data and their channel gains simultaneously, which allows us to identify active mobile users. Simulation results show significant performance gains in terms of active user detection and false alarm probabilities as compared to state-of-the-art RA schemes, even with limited number of preambles. Moreover, unlike prior work, the performance of the proposed blind goal-oriented massive access does not depend on the number of devices.

\end{abstract}
% Note that keywords are not normally used for peerreview papers.
\begin{IEEEkeywords}
	Massive random access, goal-oriented inverse problems, reconstruction-free inference, Internet of Things, machine-type communications, MIMO systems, convex optimization, atomic norm minimization.
\end{IEEEkeywords}

\IEEEpeerreviewmaketitle

\section{Introduction}\label{section1}
\IEEEPARstart{U}{biquitous} wireless connectivity and its continuous evolution will increasingly play a critical role in people's everyday life in the upcoming years. The unprecedented growth of Internet of Thing (IoT) devices in beyond 5G (B5G) and 6G communication systems will create various new applications and services: smart cities and home automation by intelligent appliances, smart manufacturing in factories for providing informed decisions to the robotics, autonomous vehicles for smart transportation, health care monitoring for better care choices and remote surgery, delivering smart education for students, to name a few. Targeting the emergence of these applications, 5G and B5G/6G specifications have identified two inevitable use cases for machine-type user equipment (UE)s in IoT, namely ultra reliable, low latency communication (URLLC) and massive machine-type communication (mMTC). These two features will co-exist in IoT, enabling data transmission from myriads of UEs anywhere and anytime \cite{ho2019next,mahmood2020six}. 

Random access (RA) is a key yet challenging component of the communication process between UEs and base station (BS) in wireless networks, and particularly in emerging generations (5G and B5G/6G). The conventional RA strategy that has commonly been employed in human-type communications is grant-based (GB) RA \cite{hasan2013random}, which consists of four main stages. In the first stage, each active UE randomly selects one of the predefined preambles from a pool of orthogonal preamble sequences and sends it to the BS. In the second stage, the BS allocates resources to the activated preambles and sends an RA response as a grant for transmitting in the next stage. In the third stage, each UE that has received a response from the BS sends a connection request message in order to demand resources for data transmission. When there is no collision for the preambles, the BS sends a contention resolution message in the fourth stage to notify active UEs of the resources pending for data transmission. In case multiple UEs select the same preamble and resources, the BS detects the collision and does not respond to the affected UEs in the fourth stage; these UEs have to restart the RA process after waiting a random time. The number of available orthogonal preambles is directly proportional to the size of channel coherence block, which is limited. Therefore, under grant-based RA, in addition to the issue of significant signaling overhead, the number of active UEs that can be granted access to the network is limited by the number of preambles. Although several contention-based strategies have recently been proposed to reduce the collision probability (e.g., see \cite{bjornson2017random,bursalioglu2016rrh}), they have severe limitations and may not be used in mMTC scenarios for several reasons: (i) mMTC is expected to operate in crowded traffic scenarios due to a massive number of devices involved; (ii) MTC/IoT UEs are often battery-operated with limited power and bandwidth; (iii) device activity patterns are sporadic and at a certain time, only few devices are active, i.e, are in sleep mode most of the time and are sporadically activated to send data, and (iv) due to short data payloads and low latency requirements, random access and data transmission have to be performed together.

Recently, an alternative RA strategy, named grant-free (GF), has been proposed for mMTC in 5G communication systems \cite{kim2020two,liu2018massive,liu2018massive2,chen2021sparse,senel2018grant,haghighatshoar2018improved,ke2020compressive,djelouat2021joint,fengler2019massive,fengler2022pilot,xie2022massive,liu2018sparse,chen2019covariance,chen2018sparse,RA_daei,huang2018deep,ahn2021active,jang2021deep}. Unlike grant-based access, active UEs in grant-free access do not have to wait for a response/grant from the BS to send their data payloads; instead, data packets are transmitted without reserving channel resources in a time division
multiplexing (TDM) manner. Furthermore, unlike GB-RA where preambles are randomly selected at each time slot, preambles in GF-RA serve as the identifier (ID) of UEs during all time slots. This strategy leads to a significant reduction in the access delay of mMTC UEs. Exploiting orthogonal preambles in GF access results in the same aforementioned issues for GB access. A key limitation is that it is not practical or feasible to assign unique orthogonal preambles to a massive number of UEs during all time slots due to limited channel coherence time. For that, several works consider non-orthogonal preambles in GF (e.g., \cite{ke2020compressive,liu2018massive}). While using non-orthogonal preambles can reduce the collision probability, it also degrades the performance of active user detection (AUD) and channels estimation (CE). It is shown in \cite{ding2019analysis} that exploiting non-orthogonal preambles does not necessarily lead to higher access rate than its orthogonal counterpart. Moreover, using a large number of non-orthogonal preambles, which is a necessity for such type of methods, is not well suited for mMTC due to limited bandwidth requirements. There are also strategies combining grant-free and grant-based, known as semi-GF \cite{ding2019simple}. Semi-GF improves the performance compared to GF, while exhibiting lower signaling overhead and latency compared to GB. Nevertheless, semi-GF cannot support massive access and connectivity. 

Given the aforementioned limitations, the quest for a random access method that supports massive access without preamble collision while simultaneously ensuring not increasing the number of preambles and signaling overhead remains open. This is a timely problem of primary theoretical and practical relevance and is the main motivation of this work. This has not yet been explored in the literature of RA. We summarize below the latest RA works and their issues, categorizing them into three groups as follows:
\begin{itemize}
	\item Compressed sensing (CS). Several RA methods, such as \cite{ke2020compressive,lin2018estimation,xie2022massive,fengler2022pilot}, employ one of the well known algorithms in CS known as approximate message passing (AMP) to promote the discrete angular sparsity of MIMO channels. \cite{ke2020compressive} proposes to use an AMP variant that jointly performs both the tasks of active user detection and CE. When the number of preambles (i.e., time resources) is very large, AMP provides acceptable results, however it performs poorly when the number of preambles is low. Channel and noise distributions in AMP are not arbitrary and should follow specific predefined rules to work properly. A large pilot (training) overhead is required, making AMP resource wasteful. There is also a very recent work \cite{RA_daei} in this regard based on continuous CS that performs the tasks of AUD and CE; however, it remains computationally intractable when the number of devices exceeds a certain number (e.g., $K>30$) and is suitable for settings with limited number of users.   
	\item Covariance-based AUD \cite{covarianced-based2020,haghighatshoar2018improved,fengler2019massive,haghighatshoar2018new,djelouat2021joint,chen2021phase}. This type of RA performs only AUD and not CE. It is based on calculating the covariance of measurement matrix. Based on the covariance measurement matrix, a maximum likelihood (ML) method is then used to detect the active devices. Again, channel and noise distributions should follow specific rules and cannot be arbitrary. The complexity increases with the number of devices. It only considers the sparsity of MTC devices and not the inherent features of channels, e.g., the continuous angular sparsity. Moreover, a huge number of antennas is required for these methods to work properly.
	\item Deep learning (DL) \cite{ahn2021active,huang2018deep,jang2021deep}. A number of works use deep neural networks to approximate the mapping between the received and transmitted signals for joint AUD and CE. This type of works usually requires a prohibitively large amount of training data, which indeed takes a very long time to collect and label, and seems to be hard to obtain in practice due to the sporadic traffic of MTC devices. Apart from that, when the system parameters change for example in subsequent coherence intervals, DL spends severe amount of resources and computational complexity to retrain the system, which is extremely burdensome and does not scale for fulfilling the stringent latency requirements of B5G systems.
\end{itemize}
%\subsection{Prior arts and their issues}
%There are many RA methods in the literature such as . In what follows, we divide these works into three categories and provide a summary of these results:\cite{ke2020compressive,djelouat2021joint,fengler2019massive,fengler2022pilot,xie2022massive,liu2018sparse,liu2018massive,bjornson2017random,chen2021sparse,chen2019covariance,chen2018sparse,RA_daei,huang2018deep,ahn2021active,jang2021deep}

\subsection{Our Approach and Contributions}
A unique, novel feature of our proposed random access strategy, coined blind goal-oriented detection (BGOD), is that it incorporates all characteristics of B5G requirements in all-in-one package. Our method builds upon the assumption that MIMO channels exhibit angular continuous \emph{group sparsity} features. This implies that only few number of components in the angular domain contribute to the channel of each UE and the angles of arrival (AoAs) corresponding to each UE lie alongside each other in a group, as shown in Figure \ref{fig.channel}(a) (the interested reader is referred to \cite{lin2018estimation} for detailed description of this feature). We first design a general optimization problem that promotes the angular group sparsity features of the channels corresponding to all UEs. This optimization problem is highly challenging and costly in terms of computational complexity when the number of UEs is very large (as is the case for massive access in B5G/6G wireless networks), and actually computationally intractable or infeasible in practice. Our first goal is to design a simple reconstruction-free framework that only identifies all active UEs, bypassing the difficult task of data recovery (DR) and CE for massive UEs. To this end, we design a novel goal-oriented optimization problem that does not keep information on the UEs' messages and complex channel coefficients, but it provides instead a way to obtain a goal-oriented continuous function whose maximum value interestingly reveals the angle information of all active UEs. We then design a clustering algorithm to place these angles into several groups and find the ID of active UEs via their angles. This is indeed the case when UEs are stationary (e.g., smart metering MTC devices) and the angles of each UE, e.g., the line of sight (LoS) angle, are known to the BS prior to the RA process. For the case where UEs are moving or their angles are not known to the BS in advance, we design an alternating minimization (AM) method to recover both the messages and the complex channel coefficients corresponding to active UEs. The ID of each active user is contained in its data payload. By the proposed strategy, the ID and data payloads of active UEs and the channels can be fully obtained blindly.

The novel features and the main contributions of our work are summarized as:  
\begin{enumerate}
	\item \textbf{Massive connectivity.} Interestingly, the proposed goal-oriented optimization problem does not depend on the number of total users, whereas the complexity of our clustering scheme does depend on the number of active UEs. Therefore, a massive number of devices with sporadic traffic can access to the network. Actually, the higher the number of antennas at the BS, the more active UEs can access the network. To the best of our knowledge, this is the first work where the RA strategy is independent of the number of devices.
	\item \textbf{High reliability.} Employing massive number of antennas at the BS and orthogonal preambles at the devices leads to precise detection of the angles and facilitates reliable data recovery. The lower the maximum length of groups (known as angular spread), the fewer observations are required at the BS for exact recovery of angles. Interestingly, devices that selected the same orthogonal preambles (due to limitation in channel coherence time) can still be distinguished by their different angles. The possibility that two active users exist at the same time with the same preamble and with the same continuous angles is very low and can be ignored.
	\item \textbf{Low latency.} We design an alternating direction method of multipliers (ADMM) in order to directly and rapidly obtain the goal-oriented continuous function by a few measurements obtained at the BS. The proposed algorithm is very fast and may fulfil the low latency requirement of B5G/6G systems. Moreover, the proposed algorithm operates in a blind way, i.e, the BS does not need to use pilots for channel estimation and does not need to coordinate with users in advance. This makes our RA strategy very fast without any extra access latency. Since there are no collisions of active users in our method, there is no need for any contention resolution step. Therefore, as shown in Figure \ref{fig.channel}(c), the proposed RA strategy is performed in only one step and with very low access delay.
	\item \textbf{Limited resources.} The goal-oriented feature introduced in this work opens up the possibility of spending as few orthogonal preambles as possible in a blind way, which results in significant resource saving. There is no need to use known pilots for channel estimation and the devices can transmit only their data. This feature is important for MTC/IoT devices, which are battery-operated and operate using limited bandwidth and low power. 
\end{enumerate}
\begin{figure}[t]
\hspace{-0cm}
	%{\includegraphics[trim={1.9cm .6cm 0cm 2.2cm}]{channel.tex}}
\begin{subfigmatrix}{2}
\subfigure[]{\includegraphics[scale=1,trim={0cm 0cm 0cm 3cm}]{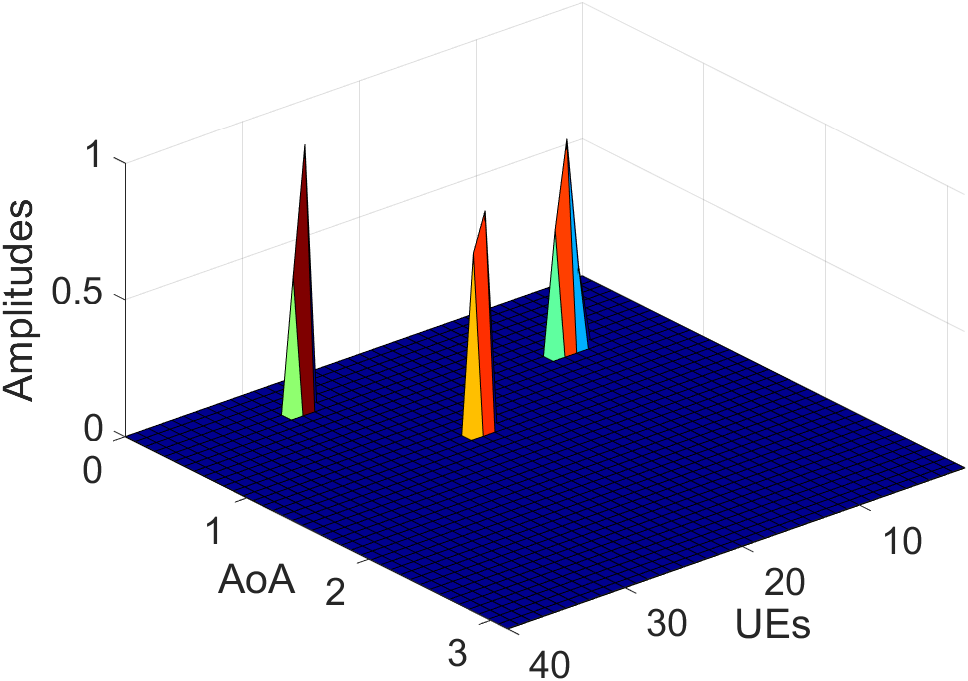}}
\subfigure[]{\includegraphics[scale=1,trim={0cm 0cm 0cm 0cm}]{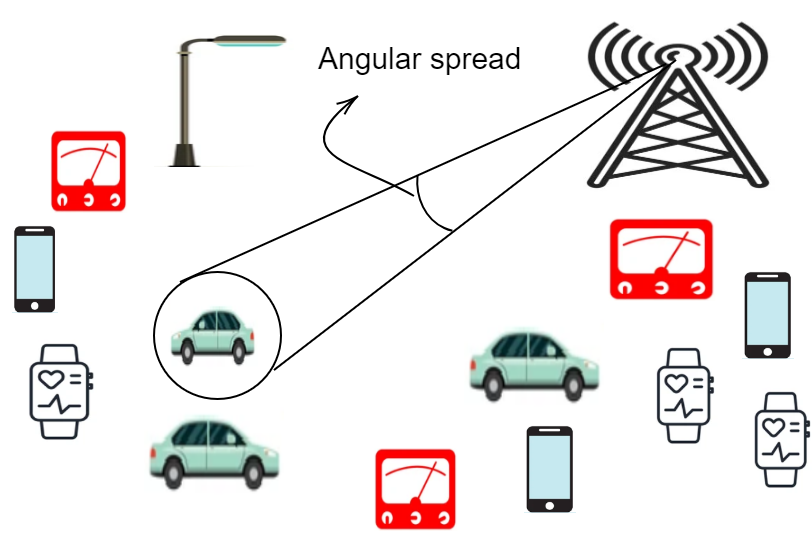}}
\subfigure[]{\includegraphics[width=9cm,trim={0cm 0cm 0cm 0cm}]{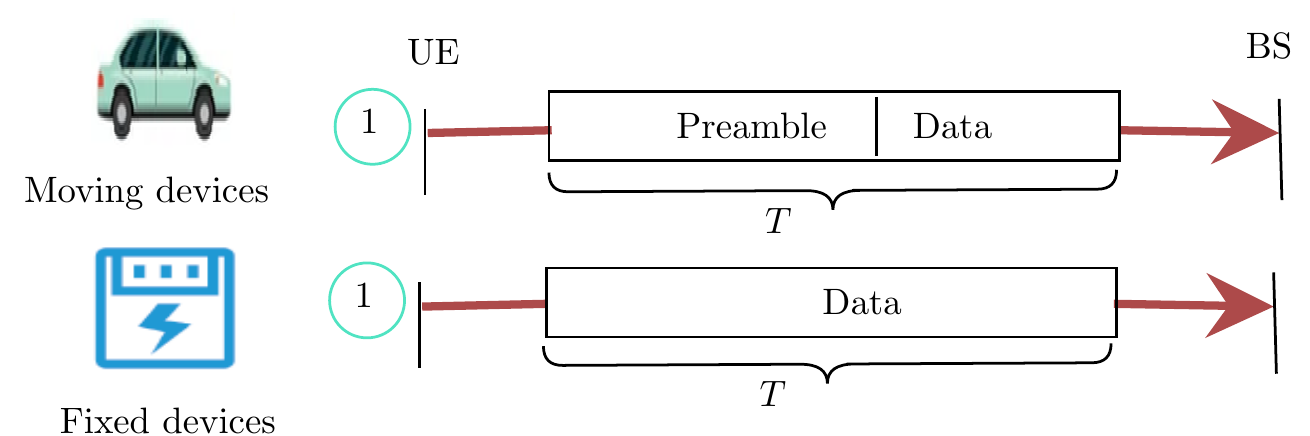}}
\end{subfigmatrix}
\caption{(a) Continuous angular group sparsity in the uplink channel. Only $K_a=3$ UEs are active and the maximum number of physical groups are considered to be three ($L_{\max}:=\max_{k}L_k=3$). (b) The angular spread of each UE with respect to the BS. (c) Proposed blind massive access scheme. Bottom image: The BS has full access to the angle information of fixed devices. Top image: UEs are moving and send random preambles to the BS, which accounts for their identification.}\label{fig.channel}
\end{figure}

\subsection{Organization} 
The remainder of this paper is structured as follows. Section \ref{sec.model} introduces the considered wireless system model with its specific features. In Section \ref{sec.proposed}, we present the proposed blind goal-oriented RA approach, which consists of four steps: goal-oriented optimization \ref{sec.goal-oriented}, ADMM \ref{sec.admm}, active UE identification \ref{sec.active_user}, and message recovery \ref{sec.data_rec}. In Section \ref{sec.simulations} we assess the performance of our algorithm and compare it with state-of-the-art RA schemes using numerical experiments. Finally, Section \ref{sec.conclusion} concludes the paper.
\subsection{Notations}
% \textit{Notations}: 
We use boldface lower and uppercase letters for vectors and matrices, respectively. The $i$-th element of a vector $\bm{x}$ and the $(i,l)$ element of a matrix $\bm{X}$ are respectively denoted by $x_i$ and $X_{(i,l)}$. The notation $j$ is employed to represent the imaginary unit. The real and imaginary parts of a complex-valued matrix $\bm{A}=\bm{A}^R+j\bm{A}^I\in\mathbb{C}^{n_1\times n_2}$ are shown by $\bm{A}^{R}$ and $\bm{A}^I$, respectively, and $\overline{\bm{A}}:=\bm{A}^R-j\bm{A}^I$ denotes the conjugate. For vector $\bm{x}\in\mathbb{C}^n$ and matrix $\bm{X}\in\mathbb{C}^{n_1\times n_2}$, the $\ell_2$ norm and Frobenius norm are defined as $\|\bm{x}\|_2:=({\sum_{i=1}^n|x(i)|^2})^{\tfrac{1}{2}}$ $\|\bm{X}\|_{F}:=\sqrt{\sum_{i=1}^{n_1}\sum_{j=1}^{n_2}|X(i,j)|^2}$, respectively. $\bm{X}\succeq \bm{0}$ means that $\bm{X}$ is a positive semidefinite matrix. $\mathcal{P}_{\Omega}(\cdot)$ is a operator transforming an arbitrary matrix to a reduced matrix with rows indexed by $\Omega$. $\langle \bm{A},\bm{B}\rangle:=\sum_{i=1}^{n_1}\sum_{l=1}^{n_2}A_{i,l}\overline{B}_{i,l}$ is the inner product of two complex-valued matrices $\bm{A}\in\mathbb{C}^{n_1\times n_2}$ and $\bm{B}\in\mathbb{C}^{n_1\times n_2}$. The Toeplitz matrix $\mathcal{T}(\bm{v})$ is defined as
\begin{align}\label{eq.toeplitz_mat}
	\mathcal{T}(\bm{v})=\begin{bmatrix}
		v_1&v_2&\hdots&v_{N}\\
		\overline{v}_2&v_1&\hdots&v_{N-1}\\
		\vdots&\vdots&\ddots&\vdots\\
		\overline{v}_N&\overline{v}_{N-1}&\hdots&v_1
	\end{bmatrix}
\end{align}
where the $(i,l)$-th element is given by $\mathcal{T}(\bm{v})_{(i,l)}=\left\{\begin{array}{cc}
	v_{i-l+1}&i\ge l\\
	\overline{v}_{l-i+1}& i<l
\end{array}\right\}$. $\bm{1}_{\Omega}$ is a vector of size $\mathbb{R}^N$ which has $1$s on the indices corresponding to the set $\Omega$ and zero elsewhere. ${\rm diag}(\bm{x})$ is a diagonal matrix whose main diagonal has elements of $\bm{x}\in\mathbb{R}^N$. $\bm{e}_1\in\mathbb{R}^N$ is the canonical vector whose first element is $1$ and zero elsewhere, i.e., $e_1(1)=1,~ e_1(i)=0, i=2,..., N$. $\bm{x}\odot\bm{y} \in\mathbb{C}^{N}$ is the element-wise operation of two vectors $\bm{x}\in\mathbb{C}^N$ and $\bm{y}\in\mathbb{C}^N$ and its $i$-th element is given by $x_i y_i$. The element-wise inequality for two vectors $\bm{x}\in\mathbb{R}^N$ and $\bm{y}\in\mathbb{R}^N$ is represented by $\bm{x}\ge \bm{y}$ which means $x_i\ge y_i, i=1,..., N$. $\mathds{E}$ and $\mathds{P}$ denote expectation and probability, respectively.

\section{System Model}\label{sec.model}
%\begin{figure}
%	
%\caption{Channel model}
%\end{figure}
We consider a wireless system in which a BS equipped with uniform linear array (ULA) of $N$ antennas is serving a large number $K$ of single-antenna UEs out of which $K_a$ are active denoted by the set $\mathcal{S}_{\rm AU}$. We assume a block fading channel, which remains constant over the coherence time $T$ and varies smoothly between adjacent coherence blocks. The channel vector in the frequency domain from $k$-th UE to the BS can be represented by \cite{you2015pilot,bajwa2010compressed},\cite[Equ. 7]{zhang2017blind}:
\begin{align}\label{eq.chann}
	\bm{h}_k=\int_{0}^{\pi}\alpha_k(\theta)\bm{a}(\theta){\rm d}\theta
\end{align}
where $\alpha_k(\theta)$ is the channel gain of user $k$ corresponding to the direction $\theta$ and $\bm{a}(\theta)$ is the array response vector of BS defined by
\begin{align}\label{eq.atoms}
	\bm{a}(\theta)=\tfrac{1}{\sqrt{N}}[1, {\rm e}^{-j2\pi \frac{d}{\lambda} \cos(\theta)},...,{\rm e}^{-j2\pi \frac{d}{\lambda} (N-1)\cos(\theta)} ]^T
\end{align}
where $\lambda$ and $d$ are the carrier wavelength and antenna spacing, respectively.
We further assume that there is limited local scattering around the BS and the channel gains $\alpha_k(\theta)$ are constrained to lie in a small region $(\theta_k^{\min},\theta_k^{\max})$ known as angular spread and composed of $L_k\ll N$ spatial angles of arrival (AoA) \cite{ke2020compressive,djelouat2021joint,ma2018sparse} (see Figures (a) and (b) of \ref{fig.channel}). Thus, \eqref{eq.chann} can be rewritten as
\begin{align}\label{eq.channel}
	\bm{h}_k=\sum_{l=1}^{L_k}\alpha_l^k\bm{a}(\theta_l^k)=:\bm{A}_k\bm{\alpha}^k\in\mathbb{C}^{N\times 1}
\end{align}
where $\alpha_l^k$ accounts for the gain of the $l$-th path, $\theta_l^k$ is the AoA of the $l$-th path for the $k$-th user, $\bm{\alpha}^k:=[\alpha_1^k,..., \alpha_{L_k}^k]^T$ and $\bm{A}_k:=[\bm{a}_r(\theta_1^k),..., \bm{a}_r(\theta_{L_k}^k)]\in\mathbb{C}^{N\times L_k}$. We consider the case where the BS has partial observations, i.e., only signals received by $M$ out of $N$ antennas, indexed by $\Omega \subseteq \{1,..., N\}$ ($|\Omega|=M$), are observed and the rest can be used for other purposes, e.g., serving the UEs at the other side of BS. By considering $\bm{\phi}_k\in\mathbb{C}^T$ as the information transmitted by $k$-th user including both preamble and data, the received signal after $T$ time slots at the BS is given by \cite{liu2018massive,chen2018sparse}:
\begin{eqnarray}\label{eq.observed}
\bm{Y}_{\Omega}&=&\mathcal{P}_{\Omega}(\bm{Y}) =\mathcal{P}_{\Omega}\left(\sum_{k\in\mathcal{S}_{\rm AU}}\bm{h}_{k}\bm{\phi}_k^H\right)+\bm{E}\nonumber\\
	&\coloneqq&\sum_{k\in\mathcal{S}_{\rm AU}}\mathcal{P}_{\Omega}(\bm{X}_k)+\bm{E}\in \mathbb{C}^{M\times T}
\end{eqnarray}  
where $\bm{E}\in\mathbb{C}^{M\times T}$ is the additive noise matrix, which has arbitrary distribution with $\|\bm{E}\|_F\le \eta$ and $\mathcal{S}_{\rm AU}\subseteq \{1,..., K\}$ is the set of active users. Inspired by \cite{chandrasekaran2012convex,candes2014towards,tang2013compressed}, each matrix $\bm{X}_k:=\sum_{l=1}^{L_k}\alpha_l^k\bm{a}(\theta_l^k)\bm{\phi}_k^{H}$ is a superposition of $L_k$ building blocks (referred to as \textit{atoms}) of the form $\bm{a}(\theta_l^k)\bm{\phi}_k^{H}$. Define the set of building blocks as an atomic set ${ \mathcal{A}_k=\{\bm{a}(\theta)\bm{\phi}_k^H, \theta\in (0,\pi)\}}$. Each $\bm{X}_k$ is composed of a sparse set of atoms in $\mathcal{A}$.

\section{Blind Massive Random Access}\label{sec.proposed}
In this section, we present the proposed massive blind random access strategy, which consists of four main stages:
\begin{enumerate}
    \item goal-oriented optimization
    \item ADMM 
    \item identification of active UEs
    \item joint data recovery (DR) and channel estimation (CE). 
\end{enumerate}
It is worth mentioning that the first three steps are sufficient to identify active UEs distinguishable by their AoAs, e.g., stationary MTC UEs. The above four steps are summarized in Algorithm \ref{algorithm.admm}.

\subsection{Goal-oriented Optimization} \label{sec.goal-oriented} In \eqref{eq.observed}, we have a system of equations with a very large number of unknowns (i.e., $K N T$) and only $M T$ observations at the BS. Leveraging the degrees of freedom of the problem in \eqref{eq.observed}, which is $\sum_{k\in\mathcal{S}_{\rm AU}}(L_k+T)$, motivates us to use a general optimization problem that encourages the features corresponding to all UEs simultaneously similar to what $\ell_0$ functional offers to encourage sparsity feature in conventional CS. Capitalizing on the results from continuous compressed sensing (CS) (see e.g., \cite{sayyari2020blind,bayat2020separating,daei2019living,candes2014towards,tang2013compressed,fernandez2016super}), such general optimization framework is as follows:
\begin{align}\label{prob.atomic_l0}
	&\min_{\substack{\bm{Z}_k\in\mathbb{C}^{N\times T}\\ k=1,..., K}} \sum_{k=1}^K \|{\bm{Z}_k}\|_{\mathcal{A}_k,0} ~s.t. \|\bm{Y}-\sum_{k=1}^K\mathcal{P}_{\Omega}(\bm{Z}_k)\|_F\le \eta
\end{align} 
where $\|\bm{Z}_k\|_{\mathcal{A},0} \coloneqq \inf\big\{L_k: \bm{Z}_k=\sum_{l=1}^{L_k}c_{l}^k\bm{a}({\theta}_l^k)\bm{\phi}_k^H, c_l^k>0 , \bm{a}(\theta_l^k)\bm{\phi}_k^H\in\mathcal{A}_k\big\}
$ is the atomic $\ell_0$ function that computes the least number of atoms needed to describe $\bm{Z}_k$ by the atoms in the atomic set $\mathcal{A}_k$ and $\sum_{k=1}^K \|{\bm{Z}_k}\|_{\mathcal{A}_k,0}$ is a function that promotes $\sum_{k=1}^K L_k$. We reformulate \eqref{prob.atomic_l0} into an equivalent form called LASSO-type and add a regularization term to ensure consistency of the measurements given by:
\begin{eqnarray}\label{prob.atomic_l0_lasso}
	&&\min_{\substack{\bm{Z}_k\in\mathbb{C}^{N\times T}\\ k=1,..., K}} \sum_{k=1}^K \|{\bm{Z}_k}\|_{\mathcal{A}_k,0}+\frac{\gamma}{2}\|\bm{Y}-\bm{Y}^{\star}\|_F^2 \nonumber\\
	&&~\textrm{s.t.}~~ \bm{Y}^{\star}=\sum_{k=1}^K\mathcal{P}_{\Omega}(\bm{Z}_k)
\end{eqnarray} 
where $\gamma>0$ is a regularization parameter that makes a balance between the noise energy and the angular sparsity feature with a role similar to $\eta>0$ in \eqref{prob.atomic_l0}. Nevertheless, the optimization problem \eqref{prob.atomic_l0_lasso} is NP-hard and intractable in general. It is therefore beneficial to work with the nearest tractable convex optimization problem whose objective function is a convex relaxation of that in \eqref{prob.atomic_l0_lasso} and is stated as follows: 
\begin{eqnarray}\label{prob.atomic_l1_lasso}
	&&\min_{\substack{\bm{Z}_k\in\mathbb{C}^{N\times T}\\ k=1,..., K\\\bm{Y}^{\star}\in\mathbb{C}^{M\times T}}} \sum_{k=1}^K \|{\bm{Z}_k}\|_{\mathcal{A}_k}+\frac{\gamma}{2}\|\bm{Y}-\bm{Y}^{\star}\|_F^2 \nonumber\\
	&&~\textrm{s.t.}~~ \bm{Y}^{\star}=\sum_{k=1}^K\mathcal{P}_{\Omega}(\bm{Z}_k)
\end{eqnarray}
where the atomic norm $\|\cdot\|_{\mathcal{A}_k}$ is the best convex surrogate for the number of atoms composing $\bm{Z}_k$ (i.e., $\|\cdot\|_{\mathcal{A}_k,0}$) and is defined as the minimum of the $\ell_1$ norm of the coefficients forming $\bm{Z}_k$:
\begin{align}
	&\|\bm{Z}_k\|_{\mathcal{A}_k}\coloneqq\inf\{t>0: \bm{Z}_k\in t{\rm conv}(\mathcal{A}_k)\}=\nonumber\\
	&\inf\{\sum_{l=1}^{L_k}c_l^k: \bm{Z}_k=\sum_{l=1}^{L_k}c_{l}^k\bm{a}({\theta}_l^k)\bm{\phi}_k^H, c_l^k>0, \bm{a}(\theta_l^k)\bm{\phi}_k^H\in\mathcal{A}_k\}
\end{align}
where ${\rm conv(\mathcal{A})}$ is the convex hull of $\mathcal{A}$. Despite convexity of the objective function in \eqref{prob.atomic_l0_lasso}, it remains computationally intractable due to the continuous nature of the angles $\theta_k$s. 
The objective of this work is to identify the AoAs corresponding to active users. The following theorem provides a novel reconstruction-free optimization problem, which contains only the information of AoAs corresponding to active users and ignores the message information and the channel coefficients corresponding to massive number of UEs. In fact, this theorem provides a goal-oriented framework of solving \eqref{prob.atomic_l1_lasso}, in which the \emph{goal} is only restricted to active user detection, i.e., finding the AoAs corresponding to the active users. Before stating this theorem, we first need to define the minimal wrap-around distance (also called \textit{minimum separation}) between angles as
\begin{align}
	\Delta \coloneqq \min_{k=1,..., K}\min_{i\neq q}|\cos(\theta_i^k)-\cos(\theta_q^k)|
\end{align} 
where the absolute value only in the latter relation is evaluated over the unit circle, e.g., $|0.01-0.09|=0.02$.

\begin{thm}\label{thm.main}
Let $c_1 \coloneqq \frac{\max_{k=1,\ldots, K}\|\bm{\phi}_k\|_2}{\sqrt{N}}$. Suppose that the continuous AoAs of active users do not have any intersections with each other, which indeed implies that at each given time, two users with the same continuous AoAs should not be active. Consider $\bm{V}\in\mathbb{C}^{M\times T}$ as the dual matrix variable corresponding to the primal variable $\bm{Y}^{\star}$ in the following optimization problem:
	\begin{eqnarray}\label{prob.goal_optimization}
		&&\min_{\substack{\bm{v}\in\mathbb{C}^{N},\bm{Z}\in \mathbb{C}^{N\times T}\\
				\bm{Y}^{\star}\in\mathbb{C}^{{M\times T}}, \bm{W}\in\mathbb{C}^{T\times T}}} {\rm Re}(v_1)+{\rm Re}({\rm tr}(\bm{W}))+\frac{\gamma}{2}\|\bm{Y}-\bm{Y}^{\star}\|_F^2\nonumber\\
		&&{\textrm{s.t.}}~~\begin{bmatrix}
			\mathcal{T}(\bm{v})&\bm{Z}\\
			\bm{Z}^H&\bm{W}
		\end{bmatrix}\succeq \bm{0}~~, \bm{Y}^{\star}=-2c_1\mathcal{P}_{\Omega}(\bm{Z}).
	\end{eqnarray}	
Then, the AoAs corresponding to the active users are uniquely identified provided that $\Delta> \frac{1}{N}$ by finding the angles that maximize the $\ell_2$ norm of the goal-oriented dual polynomial $\bm{q}_G(\theta)$ as follows:
	\begin{align}\label{eq.angle_find}
		\widehat{\theta}_l^k=\mathop{\arg\max}_{\theta\in (0,\pi)} \|\bm{q}_G(\theta)\|_2, \ l=1,\ldots, L_k, k\in\mathcal{S}_{\rm AU}
	\end{align}
where $\bm{q}_G(\theta)=(\mathcal{P}_{\Omega}^{{\rm Adj}}(\bm{V}))^H\bm{a}(\theta)$.
\end{thm}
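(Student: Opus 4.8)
The plan is to read \eqref{prob.goal_optimization} as the semidefinite reformulation of the (blind, aggregated) convex problem \eqref{prob.atomic_l1_lasso} and then recover the AoAs from its dual certificate. First I would collapse the $K$ per-user penalties $\sum_k\|\bm{Z}_k\|_{\mathcal{A}_k}$ into a single multiple-measurement-vector (MMV) atomic norm on an aggregate $\bm{Z}\in\mathbb{C}^{N\times T}$ whose atoms are $\bm{a}(\theta)\bm{b}^H$ with $\bm{b}$ free of unit norm. This is the \emph{blind} reduction: since the preambles $\bm{\phi}_k$ are unknown, each rank-one atom $\bm{a}(\theta)\bm{\phi}_k^H$ is absorbed into this larger class, and the normalization $\max_k\|\bm{\phi}_k\|_2=\sqrt{N}c_1$ is exactly what rescales the preamble-dependent atoms to unit operator norm, producing the constant $2c_1$ and the sign in the coupling $\bm{Y}^\star=-2c_1\mathcal{P}_{\Omega}(\bm{Z})$. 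The non-intersection hypothesis ensures the union of AoAs across active users is a genuine point set, so this relaxation keeps the AoA support identifiable. Applying the standard positive-semidefinite Toeplitz characterization of the MMV atomic norm, $\|\bm{Z}\|_{\mathcal{A}}=\tfrac12\min\{v_1+\mathrm{tr}(\bm{W}):\,\bigl[\begin{smallmatrix}\mathcal{T}(\bm{v})&\bm{Z}\\\bm{Z}^H&\bm{W}\end{smallmatrix}\bigr]\succeq\bm{0}\}$, then turns \eqref{prob.atomic_l1_lasso} into \eqref{prob.goal_optimization}, with $\mathrm{Re}(v_1)+\mathrm{Re}(\mathrm{tr}(\bm{W}))$ the image of the atomic-norm value under this lifting.

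Second, I would form the Lagrangian dual of \eqref{prob.goal_optimization} with respect to the linear constraint $\bm{Y}^\star=-2c_1\mathcal{P}_{\Omega}(\bm{Z})$, letting $\bm{V}\in\mathbb{C}^{M\times T}$ be its multiplier. The quadratic data term is smooth and a strictly feasible point exists (take $\bm{v}=\bm{e}_1$, $\bm{Z}=\bm{0}$, $\bm{W}=\bm{I}$, giving $\mathrm{diag}(\bm{I},\bm{I})\succ\bm{0}$), so Slater's condition yields strong duality. Eliminating the primal variables and dualizing the LMI, the off-diagonal block couples $\bm{Z}$ to $\mathcal{P}_{\Omega}^{\mathrm{Adj}}(\bm{V})$, while the Toeplitz block forces the dual feasibility condition $\sup_{\theta\in(0,\pi)}\|\bm{q}_G(\theta)\|_2\le 1$ on the vector-valued trigonometric polynomial $\bm{q}_G(\theta)=(\mathcal{P}_{\Omega}^{\mathrm{Adj}}(\bm{V}))^H\bm{a}(\theta)\in\mathbb{C}^T$. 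This bounded polynomial is the goal-oriented dual certificate; the $\ell_2$ norm appears precisely because the unknown preambles leave the second factor $\bm{b}$ of each atom free, so the scalar dual value $\bm{q}_G(\theta)^H\bm{b}$ is controlled by $\|\bm{q}_G(\theta)\|_2$ via Cauchy--Schwarz.

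Third, I would extract the AoAs by complementary slackness. Writing an optimal aggregate as $\bm{Z}^\star=\sum_j c_j\bm{a}(\theta_j)\bm{b}_j^H$ with $c_j>0$, $\|\bm{b}_j\|_2=1$, over the union support $\{\theta_j\}=\{\theta_l^k\}_{k\in\mathcal{S}_{\mathrm{AU}}}$, stationarity forces $\bm{q}_G(\theta_j)^H\bm{b}_j=1$; since $\|\bm{q}_G(\theta)\|_2\le1$ everywhere, equality in Cauchy--Schwarz yields $\bm{q}_G(\theta_j)=\bm{b}_j$ and $\|\bm{q}_G(\theta_j)\|_2=1$ at every true AoA. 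Equivalently, the Carath\'eodory--Toeplitz/Vandermonde decomposition $\mathcal{T}(\bm{v}^\star)=\sum_j p_j\,\bm{a}(\theta_j)\bm{a}(\theta_j)^H$ of the optimal Toeplitz block exposes the same frequencies. Hence the true AoAs are global maximizers of $\|\bm{q}_G(\theta)\|_2$, which justifies the recovery rule \eqref{eq.angle_find}; to guarantee they are the \emph{only} maximizers, so that \eqref{eq.angle_find} returns no spurious angle, I would invoke the super-resolution dual-certificate construction: under $\Delta>\tfrac1N$ on $\cos\theta$, build $\bm{q}_G$ from a Fej\'er-type kernel interpolating the prescribed unit vectors $\bm{b}_j$ at the support with vanishing derivative, forcing $\|\bm{q}_G(\theta)\|_2<1$ off the support.

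The hard part will be this certificate construction in the present setting, which differs from classical super-resolution in three coupled ways. The interpolation must be vector-valued (MMV), so the scalar bump of the Fej\'er kernel is replaced by a matrix-valued kernel whose spectral norm, together with that of its derivative, must be pinned below one off the support; it must be carried out in the reparametrized variable $\cos\theta$ rather than in a uniform frequency, which is where the threshold $\tfrac1N$ is established; and it must accommodate the restriction $\mathcal{P}_{\Omega}$, since discarding antennas destroys part of the shift invariance, so the certificate must instead be built from the partial observations, typically requiring $M$ large enough relative to the number of active paths. Showing that the \emph{optimal} $\bm{V}$ of \eqref{prob.goal_optimization} realizes this certificate---not merely some feasible dual point---and verifying the strict off-support contraction are the crux; the remaining convex-duality and Schur-complement steps are routine bookkeeping.
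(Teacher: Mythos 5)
Your proposal is essentially correct but travels the duality chain in the opposite direction from the paper. You lift the primal problem \eqref{prob.atomic_l1_lasso} directly into \eqref{prob.goal_optimization} by absorbing the $K$ per-user atomic sets into a single MMV atomic set with free second factor $\bm{b}$ and invoking the standard Toeplitz SDP characterization of that norm, and only then dualize once to expose $\bm{V}$ and the certificate $\bm{q}_G$. The paper instead dualizes \eqref{prob.atomic_l1_lasso} first, obtaining a dual with $K$ continuous constraints $\|(\mathcal{P}^{{\rm Adj}}_{\Omega}(\bm{V}))^H\bm{a}(\theta)\|_2\|\bm{\phi}_k\|_2\le 1$ that collapse to a single worst-case constraint through $c_1$; it then converts that continuous constraint into a finite LMI via the positive-trigonometric-polynomial theorem (its Lemma 3, proved with the Schur complement and the adjoint-Toeplitz identities), and finally dualizes again to land on \eqref{prob.goal_optimization}. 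The two routes meet because the paper's bidual is exactly your MMV relaxation --- the collapse of the $K$ dual constraints into one is the dual face of your enlargement of the atomic set --- and the paper itself uses your primal SDP characterization later, in the proof of its Lemma 2. What your route buys is a shorter path to \eqref{prob.goal_optimization} and a cleaner complementary-slackness account of why the true AoAs sit at $\|\bm{q}_G(\theta)\|_2=1$; what the paper's route buys is an explicit handle on the dual variable $\bm{V}$ and the constant $c_1$, which it needs downstream for the ADMM and for Lemma 2. On the final step, you correctly identify the off-support strict contraction of the certificate under $\Delta>1/N$ (vector-valued interpolation, the $\cos\theta$ reparametrization, and the partial observation $\mathcal{P}_{\Omega}$) as the unproven crux; be aware that the paper does not carry this out either, but simply cites the dual-certificate constructions of Cand\`es--Fernandez-Granda and Chi, so your honest flagging of the gap is, if anything, more forthcoming than the published argument. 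The only imprecision worth fixing is the factor of two: the SDP value ${\rm Re}(v_1)+{\rm Re}({\rm tr}(\bm{W}))$ equals $2\|\bm{Z}\|_{\mathcal{A}}$, not $\|\bm{Z}\|_{\mathcal{A}}$, under the standard characterization, which is consistent with the coupling $\bm{Y}^{\star}=-2c_1\mathcal{P}_{\Omega}(\bm{Z})$ after rescaling but should be stated.
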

\begin{proof}
See Appendix \ref{proof.thm_main}.
\end{proof}

\textit{Discussion}: The above theorem shows that all necessary information required for identifying active users is contained in the goal-oriented dual polynomial $\bm{q}_G(\theta)$. Figure \ref{fig.dualpol} shows a typical example of the $\ell_2$ norm of the goal-oriented function $\|\bm{q}_G(\theta)\|_2$. The active devices can be identified by finding the angles that maximizes $\|\bm{q}_G(\theta)\|_2$. The proposed optimization problem \eqref{prob.goal_optimization} is independent of the number of devices $K$ and can be solved for instance using CVX \cite{cvx} (CVX calls for SDPT3 solvers). To obtain $\|\bm{q}_G(\theta)\|_2$, we need the matrix dual solution $\bm{V}$. Most of the CVX solvers return the dual variables along with the primal variables. However, solving this optimization problem is highly challenging, in particular for massive MIMO communications where the BS is equipped with a very large number of antennas $N$. Thus, the SDPT3 solver of CVX cannot reach to a solution. In order to solve this this issue, we design a fast ADMM method (see Section \ref{sec.admm}), which significantly reduces the computational complexity and is tailored to problems encountered in massive access.
\begin{rem}(Relation between detection accuracy and $N$)
The higher the number of BS antennas $N$, the easier the minimum separation condition $\Delta>\frac{1}{N}$ gets, thus the peaks of $\|\bm{q}_G(\theta)\|_2$ can be identified more clearly and $\|\bm{q}_G(\theta)\|_2$ provides more information about active devices.
\end{rem}
\begin{rem}(Relation between detection accuracy and $T$)
The more time resources are employed at the MTC devices, the more information can be provided by $\|\bm{q}_G(\theta)\|_2$, resulting in easier identification of active devices.
\end{rem}
\begin{rem}(Detection accuracy versus $K$ and $K_a$)
The accuracy and the complexity of the proposed framework do not depend on the number of devices $K$. For moving UEs (only), the higher the number of active devices, the more difficult distinguishing which angles correspond to them is, and as a result, the detection becomes harder.	
\end{rem}
\begin{figure}[!t]
	\hspace{-.5cm}
	\includegraphics[scale=.42]{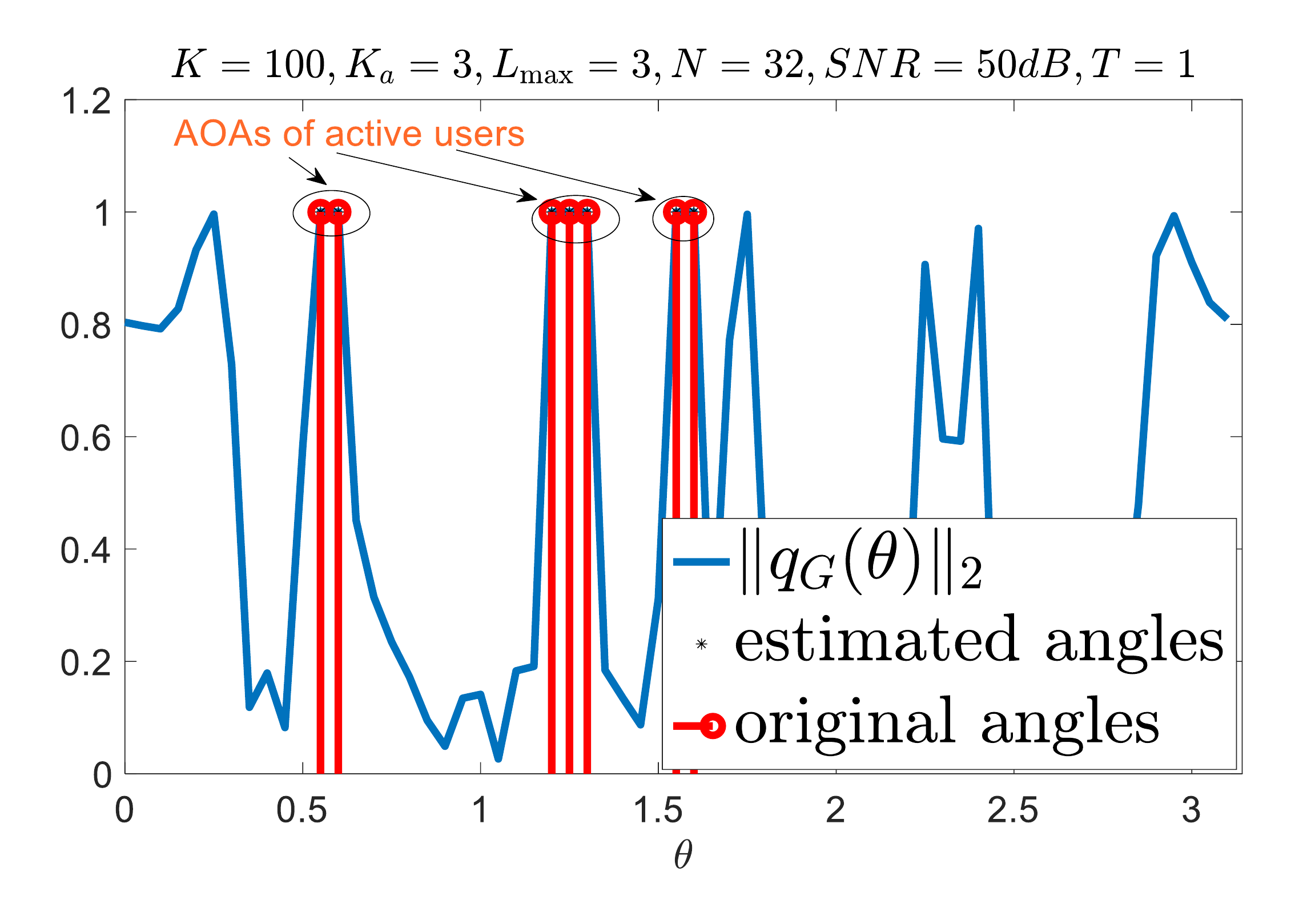}
	\caption{The $\ell_2$ norm of the goal-oriented dual polynomial function $\|\bm{q}_{G}(\theta)\|_2$. The angles for which $\|\bm{q}_{G}(\theta)\|_2$ achieves its maximum determine the angles of active UEs.}\label{fig.dualpol}
\end{figure}

\subsection{Proposed ADMM}\label{sec.admm}
Solving the optimization problem \eqref{prob.goal_optimization} with SDPT3 is prohibitive when $N$ and $T$ are very large. For that, we propose here an ADMM algorithm that can be executed in a significantly faster time. The general idea is to first form an augmented Lagrangian function for the optimization problem and then to split it into a sum of separable functions. In fact, each step of ADMM involves finding a local minimum on a variable \cite{boyd2011distributed}. The steps are repeated until some stopping criteria is satisfied. To apply this method, we first find an augmented version of \eqref{prob.goal_optimization} by incorporating an intermediate variable $\bm{\Psi}$ into \eqref{prob.goal_optimization} in order to decouple the semidefinite constraint from the affine constraints. This leads to the following equivalent problem:
\begin{align}\label{prob.admm1}
	&\min_{\substack{\bm{v}\in\mathbb{C}^{N},\bm{Z}\in \mathbb{C}^{N\times T}\\
			\bm{W}\in\mathbb{C}^{T\times T}\\
			\bm{\Psi}\in \mathbb{C}^{N+T\times N+T}}} {\rm Re}(v_1)+{\rm Re}({\rm tr}(\bm{W}))+\frac{\gamma}{2}\|\bm{Y}+2c_1\mathcal{P}_{\Omega}(\bm{Z})\|_F^2\nonumber\\
	&{\rm s.t.}~~\begin{bmatrix}
		\mathcal{T}(\bm{v})&\bm{Z}\\
		\bm{Z}^H&\bm{W}
	\end{bmatrix}=\begin{bmatrix}
		\bm{\Psi}_0&\bm{\Psi}_1\\
		\bm{\Psi}_1^H&\overline{\bm{\Psi}}
	\end{bmatrix} \coloneqq \bm{\Psi}\\
	&\bm{\Psi}\succeq \bm{0}.
\end{align}  
The augmented Lagrangian function for \eqref{prob.admm1} is given as
\begin{eqnarray}\label{eq.L_rho}
	&&\mathcal{L}_{\rho}(\bm{v},\bm{W},\bm{Z},\bm{\Lambda},\bm{\Psi}) = {\rm Re}(v_1)+{\rm Re}({\rm tr}(\bm{W}))\nonumber\\
	&&+\frac{\gamma}{2}\|\bm{Y}+2c_1\mathcal{P}_{\Omega}(\bm{Z})\|_F^2\nonumber\\
	&&+{\rm Re}\Bigg\langle \begin{bmatrix}
		\bm{\Lambda}_0&\bm{\Lambda}_1\\
		\bm{\Lambda}_1 & \overline{\bm{\Lambda}}
	\end{bmatrix}, \begin{bmatrix}
		\bm{\Psi}_0&\bm{\Psi}_1\\
		\bm{\Psi}_1^H&\overline{\bm{\Psi}}
	\end{bmatrix}- \begin{bmatrix}
		\mathcal{T}(\bm{v})&\bm{Z}\\
		\bm{Z}^H&\bm{W}
	\end{bmatrix} \Bigg \rangle\nonumber\\
	&&+\frac{\rho}{2}\Bigg\|\begin{bmatrix}
		\bm{\Psi}_0&\bm{\Psi}_1\\
		\bm{\Psi}_1^H&\overline{\bm{\Psi}}
	\end{bmatrix}-\begin{bmatrix}
		\mathcal{T}(\bm{v})&\bm{Z}\\
		\bm{Z}^H&\bm{W}
	\end{bmatrix}\Bigg\|_F^2
\end{eqnarray} 
where $\rho>0$ is a regularization parameter and $\bm{\Lambda}=\begin{bmatrix}
	\bm{\Lambda}_0&\bm{\Lambda}_1\\
	\bm{\Lambda}_1 & \overline{\bm{\Lambda}}
\end{bmatrix}$ is the Lagrangian dual multiplier.  Denoting the $t$-th step of $\bm{\Psi}$ and $\bm{\Lambda}$ by $\bm{\Psi}^{t}$ and $\bm{\Lambda}^{t}$, ADMM successively updates the following steps to minimize the Lagrangian function $\mathcal{L}_{\rho}$:
\begin{align}
	&(\bm{v}^{t+1}, \bm{W}^{t+1}, \bm{Z}^{t+1})=\mathop{\arg\min}_{\substack{\bm{v}\in\mathbb{C}^{N}, \bm{W}\in\mathbb{C}^{T\times T},\\ \bm{Z}\in\mathbb{C}^{N\times T}}}\mathcal{L}_{\rho}(\bm{v},\bm{W},\bm{Z},\bm{\Lambda}^{t},\bm{\Psi}^{t})\\
	&\bm{\Psi}^{t+1}=\mathop{\arg\min}_{\bm{\Psi}\succeq \bm{0}}\mathcal{L}_{\rho}(\bm{v}^t,\bm{W}^t,\bm{Z}^t,\bm{\Lambda}^{t},\bm{\Psi})\\
	&\bm{\Lambda}^{t+1}=\bm{\Lambda}^{t}+\rho \Bigg(  \begin{bmatrix}
		\bm{\Psi}_0^{t+1}&\bm{\Psi}_1^{t+1}\\
		(\bm{\Psi}_1^{t+1})^H&\overline{\bm{\Psi}}^{t+1}
	\end{bmatrix}-\begin{bmatrix}
		\mathcal{T}(\bm{v}^{t+1})&\bm{Z}^{t+1}\\
		(\bm{Z}^{t+1})^H&\bm{W}^{t+1}
	\end{bmatrix} \Bigg),\label{Lambda_update}
\end{align}
where $t$ indicates the iteration number. To make the method practical, we need to obtain efficient implementations for all of the steps. Obviously, the Lagrangian function $L_{\rho}$ is convex with respect to $\bm{Z}$ and $\bm{W}$. Hence, we can obtain closed-form update solutions for $\bm{Z}$ and $\bm{W}$ by setting the partial derivative of $\mathcal{L}_{\rho}$ equal to zero as follows:
\begin{align}
	&\frac{\partial \mathcal{L}_{\rho}}{\partial \bm{Z}}=2c_1\gamma\mathcal{P}_{\Omega}^{{\rm Adj}}(\bm{Y}+2c_1 \mathcal{P}_{\Omega}(\bm{Z}))-2 \bm{\Lambda}_1-2\rho (\bm{\Psi}_1-\bm{Z})=\bm{0}\label{diff_Lrho_Z}\\
	&\frac{\partial \mathcal{L}_{\rho}}{\partial \bm{W}}=\bm{I}-\overline{\bm{\Lambda}}-\rho(\overline{\bm{\Psi}}-\bm{W})=\bm{0}
\end{align}
which lead to closed-form relations
\begin{align}\label{update_Z&W}
	&\bm{Z}^{t+1} = {\scriptstyle \bigg(4c_1^2\gamma \bm{P}_{\Omega}^{{\rm Adj}}\bm{P}_{\Omega}+2\rho \bm{I}_{N}\bigg)^{-1}\bigg(2\bm{\Lambda}_1^{t}+2\rho\bm{\Psi}_1^{t}-2\gamma c_1 \mathcal{P}^{{\rm Adj}}_{\Omega}\bm{Y}\bigg)}\\
	&\bm{W}^{t+1} = \overline{\bm{\Psi}}^{t}+\frac{\overline{\bm{\Lambda}}^{t}-\bm{I}}{\rho}
\end{align} 
where $\bm{P}_{\Omega}$ and $\bm{P}_{\Omega}^{{\rm Adj}}$ are matrix forms of the linear operators $\mathcal{P}_{\Omega}$ and $\mathcal{P}^{{\rm Adj}}_{\Omega}$ which by using MATLAB notations are obtained as
\begin{align}
	\bm{P}_{\Omega}=\bm{D}(\Omega,:),~\bm{P}^{{\rm Adj}}_{\Omega}=\bm{D}(:,\Omega)
\end{align} 
in which $\bm{D}={\rm diag}(\bm{1}_{\Omega})\in\mathbb{R}^{N\times N}$. The closed-form expression for $\bm{v}$ is provided in the following lemma:
\begin{lem}\label{lem.v_cal}
Let $\bm{g}\in\mathbb{R}^N$ be a vector whose elements are $g(1)=\frac{1}{N}$ and $g(k)=\frac{1}{2(N-k+1)},~ k\neq1$. Define $\bm{C}\in\mathbb{R}^{N\times N}$ as a matrix composed of $-1$s on the lower triangular parts and $1$s elsewhere, i.e., $\bm{C}_{k,l}=-1,k>l$ and $\bm{C}_{k,l}=1,k\le l$. Then, the optimal vector $\bm{v}^{t+1}\in\mathbb{C}^N$ in the $(t+1)$-th iteration of ADMM that minimizes $\mathcal{L}_{\rho}$ is given by
	\begin{align}\label{update_v}
		&{\bm{v}}^{t+1}=\bm{g}\odot \Bigg(-\frac{\bm{e}_1}{\rho}+\mathcal{T}^{{\rm Adj}}(\bm{\Psi}^{R}_0)+\mathcal{T}_1^{{\rm Adj}}(\bm{\Psi}^{I}_0)+\nonumber\\
		&\frac{\mathcal{T}^{{\rm Adj}}(\bm{\Lambda}_0^R+\bm{C}\odot \bm{\Lambda}_0^I)}{\rho}\Bigg),
	\end{align}
	where $\mathcal{T}_1(\bm{z}):=\bm{C}\circ \mathcal{T}(\bm{z})$ for an arbitrary vector $\bm{z}\in\mathbb{C}^{N}$ and $\mathcal{T}_1^{{\rm Adj}}$ is the adjoint operator of $\mathcal{T}_1$.
\end{lem}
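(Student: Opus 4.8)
The plan is to exploit the block structure of $\mathcal{L}_{\rho}$. In the joint minimization defining $(\bm{v}^{t+1},\bm{W}^{t+1},\bm{Z}^{t+1})$, the vector $\bm{v}$ enters only through the top-left block $\mathcal{T}(\bm{v})$, whereas $\bm{Z}$ occupies the off-diagonal blocks and $\bm{W}$ the bottom-right block. Since the linear coupling ${\rm Re}\langle\bm{\Lambda},\cdot\rangle$ and the quadratic penalty $\tfrac{\rho}{2}\|\cdot\|_F^2$ both split additively across these blocks, the minimization over $\bm{v}$ decouples from the one over $(\bm{Z},\bm{W})$. First I would collect the $\bm{v}$-dependent terms into
\begin{align}
f(\bm{v})={\rm Re}(v_1)-{\rm Re}\langle\bm{\Lambda}_0,\mathcal{T}(\bm{v})\rangle+\tfrac{\rho}{2}\|\bm{\Psi}_0-\mathcal{T}(\bm{v})\|_F^2 ,
\end{align}
a strongly convex quadratic whose unique minimizer is pinned down by a single stationarity condition.

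The key structural observation I would establish next is that $\|\mathcal{T}(\bm{v})\|_F^2$ is a \emph{diagonal} quadratic form in $\bm{v}$: the scalar $v_1$ appears $N$ times (the main diagonal), and for $k\ge 2$ the entries $v_k$ and $\overline{v}_k$ each appear $N-k+1$ times (on the $(k{-}1)$-th sub- and super-diagonals). Hence $\|\mathcal{T}(\bm{v})\|_F^2=\sum_{k=1}^N g(k)^{-1}|v_k|^2$, with $g(1)^{-1}=N$ and $g(k)^{-1}=2(N-k+1)$. This is precisely why no matrix inversion is needed and why the answer has the form $\bm{g}\odot(\cdots)$: the Hessian of $f$ in $\bm{v}$ is $\rho\,{\rm diag}(\bm{g})^{-1}$, so its inverse is $\rho^{-1}{\rm diag}(\bm{g})$ and the coordinate equations fully decouple.

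Then I would differentiate the remaining terms, using the Wirtinger condition $\partial f/\partial\overline{\bm{v}}=\bm{0}$ (equivalently, annihilating the real and imaginary gradients separately). The term ${\rm Re}(v_1)$ supplies the $-\bm{e}_1/\rho$ contribution, while expanding the Frobenius norm produces the cross term $-\rho\,{\rm Re}\langle\bm{\Psi}_0,\mathcal{T}(\bm{v})\rangle$ alongside the coupling $-{\rm Re}\langle\bm{\Lambda}_0,\mathcal{T}(\bm{v})\rangle$. The delicate point is the adjoint bookkeeping: for a data matrix $\bm{M}$, the derivative with respect to $v_k$ collects the sum along the $(k{-}1)$-th sub-diagonal of $\bm{M}$ together with the \emph{conjugated} sum along its super-diagonal. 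Splitting $\bm{M}$ into real and imaginary parts, the real parts add (the two diagonals reinforce) and produce the symmetric diagonal-summation adjoint $\mathcal{T}^{\rm Adj}(\bm{M}^R)$, whereas the imaginary parts subtract (reflecting that $\overline{v}_k$ sits on the super-diagonal) and produce the sign-weighted adjoint $\mathcal{T}_1^{\rm Adj}(\bm{M}^I)=\mathcal{T}^{\rm Adj}(\bm{C}\odot\bm{M}^I)$, with $\bm{C}$ encoding the lower/upper sign reversal. Applying this with $\bm{M}=\bm{\Psi}_0$ and $\bm{M}=\bm{\Lambda}_0$, and merging the latter pair via $\mathcal{T}^{\rm Adj}(\bm{\Lambda}_0^R)+\mathcal{T}_1^{\rm Adj}(\bm{\Lambda}_0^I)=\mathcal{T}^{\rm Adj}(\bm{\Lambda}_0^R+\bm{C}\odot\bm{\Lambda}_0^I)$, reproduces the three terms inside the bracket.

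Finally, dividing the per-coordinate stationarity equation by the diagonal Hessian weight $\rho\,g(k)^{-1}$ and stacking over $k$ recovers $\bm{v}^{t+1}=\bm{g}\odot(\cdots)$ as stated. I expect the main obstacle to be exactly the adjoint bookkeeping of the previous step, i.e., rigorously tracking how the conjugation-induced asymmetry between the sub- and super-diagonals of $\mathcal{T}(\bm{v})$ sends the real and imaginary parts of $\bm{\Psi}_0$ and $\bm{\Lambda}_0$ onto $\mathcal{T}^{\rm Adj}$ and $\mathcal{T}_1^{\rm Adj}$ with the correct signs. Once the diagonal structure of the quadratic term and this adjoint identity are settled, the remaining algebra is routine.
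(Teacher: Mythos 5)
Your proposal is correct and follows essentially the same route as the paper: isolate the $\bm{v}$-dependent part of $\mathcal{L}_{\rho}$, split into real and imaginary components (which is where $\mathcal{T}^{\rm Adj}$ and $\mathcal{T}_1^{\rm Adj}$ arise), and exploit the fact that $\mathcal{T}^{\rm Adj}\circ\mathcal{T}$ is diagonal with weights $N$ and $2(N-k+1)$ — your ``diagonal Hessian'' observation is exactly the content of the paper's Corollary~\ref{corol}, and your adjoint bookkeeping is its Lemmas~\ref{lem.adjointoperator_calculation} and~\ref{lem.adjointoper_cal_lower_tringular}. The identity $\mathcal{T}^{\rm Adj}(\bm{\Lambda}_0^R)+\mathcal{T}_1^{\rm Adj}(\bm{\Lambda}_0^I)=\mathcal{T}^{\rm Adj}(\bm{\Lambda}_0^R+\bm{C}\odot\bm{\Lambda}_0^I)$ you use to merge terms matches the paper's final expression (up to the paper's own inconsistent sign convention for $\bm{C}$).
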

\begin{proof}
See Appendix \ref{proof.lem.v_cal}.
\end{proof}

The update of $\bm{\Psi}$ requires to compute a projection onto the cone of positive semidefinite Hermitian matrices which can be regarded as the most costly part and is provided below:
\begin{eqnarray}\label{eq.Psi_update}
	\bm{\Psi}^{t+1}&=&\mathop{\arg\min}_{\bm{\Psi}\succeq \bm{0}} \Bigg\|\begin{bmatrix}
		\bm{\Psi}_0&\bm{\Psi}_1\\
		\bm{\Psi}_1^H&\overline{\bm{\Psi}}
	\end{bmatrix}-\begin{bmatrix}
		\mathcal{T}(\bm{v})&\bm{Z}\\
		\bm{Z}^H&\bm{W}
	\end{bmatrix}\nonumber\\
	&+&\frac{1}{\rho}\begin{bmatrix}
		\bm{\Lambda}_0&\bm{\Lambda}_1\\
		\bm{\Lambda}_1 & \overline{\bm{\Lambda}}
	\end{bmatrix}\Bigg\|_F^2. 
\end{eqnarray} 
The relation \eqref{eq.Psi_update} can be solved by computing the eigenvalue decomposition of $\begin{bmatrix}
	\mathcal{T}(\bm{v})&\bm{Z}\\
	\bm{Z}^H&\bm{W}
\end{bmatrix}-\frac{\bm{\Lambda}}{\rho}$ and retaining only the directions corresponding to positive eigenvalues. By performing the steps \eqref{update_Z&W}, \eqref{update_v}, \eqref{eq.Psi_update} and \eqref{Lambda_update} iteratively, the dual matrix multiplier $\bm{\Lambda}_1\in\mathbb{C}^{N\times T}$ is found. However, according to Theorem \ref{thm.main}, to achieve the AoAs corresponding to the active users, we need an estimate for the dual matrix $\bm{V}\in\mathbb{C}^{M\times T}$. In the following lemma whose proof is provided in Appendix \ref{proof.lem.dual_rel}, we find a closed-form relation that clearly specifies the connection between these variables. 
\begin{lem}\label{lem.dual_var+dual_multiplier}
	The dual solution corresponding to the primal variable $\bm{Y}^{\star}$ denoted by $\bm{V}\in\mathbb{C}^{N\times T}$ is linked with the dual matrix multiplier $\bm{\Lambda}_1\in\mathbb{C}^{N\times T}$ obtained in the proposed ADMM algorithm in Subsection \ref{sec.admm} via the following closed-form relation
	\begin{align}
		\mathcal{P}^{{\rm Adj}}(\bm{V})=\frac{\bm{\Lambda}_1}{c_1}.
	\end{align} 
\end{lem}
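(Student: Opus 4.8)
The plan is to obtain the identity by matching the stationarity (KKT) conditions of the original goal-oriented program \eqref{prob.goal_optimization} with those of its ADMM reformulation \eqref{prob.admm1}. The guiding observation is that \eqref{prob.admm1} differs from \eqref{prob.goal_optimization} only by eliminating $\bm{Y}^{\star}$ through the substitution $\bm{Y}^{\star}=-2c_1\mathcal{P}_{\Omega}(\bm{Z})$ and by splitting the positive semidefinite constraint via the auxiliary variable $\bm{\Psi}\succeq\bm{0}$. Hence the two programs share the same optimal primal triple $(\bm{v},\bm{Z},\bm{W})$, and at convergence the ADMM multiplier $\bm{\Lambda}$ plays the role of the Lagrange multiplier of the block (semidefinite) equality constraint; in particular its off-diagonal block $\bm{\Lambda}_1$ is the component coupled to $\bm{Z}$ and $\bm{Z}^H$.

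First I would make the dual variable $\bm{V}$ of Theorem~\ref{thm.main} explicit. Forming the Lagrangian of \eqref{prob.goal_optimization}, I attach $\bm{V}$ to the affine constraint $\bm{Y}^{\star}+2c_1\mathcal{P}_{\Omega}(\bm{Z})=\bm{0}$ through a term $+{\rm Re}\langle\bm{V},\bm{Y}^{\star}+2c_1\mathcal{P}_{\Omega}(\bm{Z})\rangle$, using the inner product and ${\rm Re}(\cdot)$ conventions of the Notations. Setting the (Wirtinger) derivative with respect to the primal variable $\bm{Y}^{\star}$ to zero gives $\gamma(\bm{Y}^{\star}-\bm{Y})+\bm{V}=\bm{0}$, i.e. $\bm{V}=\gamma(\bm{Y}-\bm{Y}^{\star})$. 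Substituting the constraint $\bm{Y}^{\star}=-2c_1\mathcal{P}_{\Omega}(\bm{Z})$ then expresses the dual solution purely through the optimal $\bm{Z}$ as $\bm{V}=\gamma\bigl(\bm{Y}+2c_1\mathcal{P}_{\Omega}(\bm{Z})\bigr)$.

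Next I would read off the stationarity condition with respect to $\bm{Z}$ in the ADMM problem, which is already computed in \eqref{diff_Lrho_Z}: at a stationary point $\partial\mathcal{L}_{\rho}/\partial\bm{Z}=\bm{0}$, and at convergence primal feasibility forces $\bm{\Psi}_1=\bm{Z}$, so the augmented penalty contribution $-2\rho(\bm{\Psi}_1-\bm{Z})$ vanishes. What remains is $2c_1\gamma\,\mathcal{P}_{\Omega}^{\rm Adj}\bigl(\bm{Y}+2c_1\mathcal{P}_{\Omega}(\bm{Z})\bigr)-2\bm{\Lambda}_1=\bm{0}$. Recognizing the argument of $\mathcal{P}_{\Omega}^{\rm Adj}$ as exactly $\bm{V}/\gamma$ from the previous step, this collapses to $2c_1\,\mathcal{P}_{\Omega}^{\rm Adj}(\bm{V})=2\bm{\Lambda}_1$, which after cancelling the factor of two is the claimed identity $\mathcal{P}_{\Omega}^{\rm Adj}(\bm{V})=\bm{\Lambda}_1/c_1$.

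The main obstacle is not the algebra but justifying rigorously that the off-diagonal ADMM multiplier block $\bm{\Lambda}_1$ coincides with the off-diagonal part of the Lagrange multiplier of the original block constraint. I would argue this through the standard ADMM--duality correspondence: once $\bm{\Psi}$ equals the block matrix appearing in the constraint of \eqref{prob.admm1} at convergence, the augmented quadratic term has zero gradient and the stationarity system of the augmented Lagrangian \eqref{eq.L_rho} reduces to the ordinary KKT system of \eqref{prob.goal_optimization}, with $\bm{\Lambda}$ serving as the semidefinite-constraint multiplier. Care must additionally be taken with the complex inner-product and conjugation conventions so that the factor $2$ produced by the Hermitian off-diagonal blocks is tracked consistently; this bookkeeping is routine once the derivative in \eqref{diff_Lrho_Z} is adopted as the reference convention.
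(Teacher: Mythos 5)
Your proof is correct, and it reaches the two key identities the paper needs --- $\bm{V}=\gamma(\bm{Y}+2c_1\mathcal{P}_{\Omega}(\bm{Z}))$ and $2\gamma c_1\mathcal{P}_{\Omega}^{\rm Adj}(\bm{Y}+2c_1\mathcal{P}_{\Omega}(\bm{Z}))-2\bm{\Lambda}_1=\bm{0}$ --- but the first half travels a genuinely different road. The paper does not differentiate the Lagrangian of \eqref{prob.goal_optimization} with respect to $\bm{Y}^{\star}$; instead it rewrites \eqref{prob.goal_optimization} as an unconstrained atomic-norm problem $\min_{\bm{Z}}2\|\bm{Z}\|_{\mathcal{A}_{\star}}+\tfrac{\gamma}{2}\|\bm{Y}+2c_1\mathcal{P}_{\Omega}(\bm{Z})\|_F^2$ via the SDP characterization of $\|\cdot\|_{\mathcal{A}_{\star}}$ from Tang et al., applies the subdifferential optimality condition $\bm{0}\in\partial J(\widehat{\bm{Z}})$, bounds the dual atomic norm to extract $\|\widehat{\bm{Z}}\|_{\mathcal{A}_{\star}}=-\langle\widehat{\bm{Z}},\gamma c_1\mathcal{P}_{\Omega}^{\rm Adj}(\bm{Y}+2c_1\mathcal{P}_{\Omega}(\widehat{\bm{Z}}))\rangle$, and then invokes strong duality against \eqref{prob.dual} to identify $\widehat{\bm{V}}=\gamma(\bm{Y}+2c_1\mathcal{P}_{\Omega}(\bm{Z}))$. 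Your direct stationarity argument in $\bm{Y}^{\star}$ is shorter and avoids both the atomic-norm machinery and the strong-duality matching; what the paper's detour buys is an explicit certification that the $\bm{V}$ in question is the optimal dual variable of the original atomic-norm dual \eqref{prob.dual}, which is the object actually used in the dual-polynomial localization of Theorem \ref{thm.main}. Your identification of $\bm{V}$ as the multiplier of the affine constraint in \eqref{prob.goal_optimization} is consistent with how the theorem statement defines it, so this is a stylistic rather than substantive difference. For the second half, your justification that the augmented term drops because primal feasibility at convergence forces $\bm{\Psi}_1=\bm{Z}$ is in fact cleaner than the paper's ``ignore the $\rho$ term,'' and the remaining cancellation $2c_1\mathcal{P}_{\Omega}^{\rm Adj}(\bm{V})=2\bm{\Lambda}_1$ is identical. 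The only caveat is the one you already flag: the complex-derivative factor conventions must be taken from \eqref{diff_Lrho_Z} so that the factors of $2$ on $\bm{\Lambda}_1$ and on the data-fidelity gradient match; with that convention fixed, the argument closes.
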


%    First, we obtain closed form solutions for $\bm{Z}^{t+1}$ and $\bm{W}^{t+1}$ which are obtained by

\subsection{Identification of Active Users}\label{sec.active_user}
Once $\mathcal{P}^{{\rm Adj}}(\bm{V})$ is found using Lemma \ref{lem.dual_var+dual_multiplier}, we can use the relation \eqref{eq.angle_find} in Theorem \ref{thm.main} to estimate the active angles corresponding to the active devices. Although we have now full access to the angles, there is still ambiguity in which angles corresponds to which active user. Applying a clustering algorithm, we can place angles into several clusters, each of which represents the angles of an active device. If the number of active devices $K_a$ is known beforehand (an assumption widely used in the literature, e.g., \cite{ke2020compressive}), the $k$-means clustering algorithm can be employed with known number of clusters. When $K_a$ is not known, which seems to be more reasonable in practice, we can use \textit{elbow} clustering method \cite{liu2020determine}, which chooses the optimal number of clusters by identifying a sharp elbow (knee point) in the graph of explained variations versus clusters. The key idea is that adding another cluster does not provide much better modeling of the angles. The elbow clustering algorithm provides the angles estimate
$\{\widehat{\bm{\theta}}^k\}_{k=1}^{\widehat{K}_a}$ in $\widehat{K}_a$ clusters as follows
\begin{align}
	[\{\widehat{\bm{\theta}}^k\}_{k=1}^{\widehat{K}_a},\widehat{K}_a,\{\widehat{L}_k\}_{k=1}^{\widehat{K}_a}]={\rm elbow}(\bm{\widehat{\theta}}).    
\end{align}
The number of clusters provides an estimate of $K_a$ and the number of angles within the $k$-th cluster gives an estimate of $\widehat{L}_k$. 

If the devices are stationary and the BS have full information on the angles (for instance only the line of sight (LoS) angle) of each device before random access, then the BS can exactly understand at this stage which device is active and may provide an estimate for $\mathcal{S}_{\rm AU}$, denoted by $\widehat{\mathcal{S}}_{\rm AU}$, and the RA task is finished. If angle information of the devices is not provided or devices do not remain stationary in a known place, the BS can still identify the active users by recovering their unique preambles; this is explained in the next subsection.

\subsection{Joint Data Recovery and Channel Estimation}\label{sec.data_rec}
To recover the preamble and data of each device, we employ an alternating minimization scheme, in which an initial estimate of $\bm{\phi}_k$ is first generated and the following optimization is then solved as a means to find the complex channel amplitudes:
\begin{align}\label{eq.c_estimate}
	[\widehat{\bm{\alpha}}^1,..., \widehat{\bm{\alpha}}^{K_a}]=\mathop{\arg\min}_{\substack{{\alpha}_1^k,..., \alpha^k_{\widehat{L}_k},\\k=1,..., \widehat{K}_a}}\|\bm{Y}-\sum_{k\in\widehat{\mathcal{S}}_{\rm AU}}\sum_{l=1}^{\widehat{L}_k}\alpha_l^k\mathcal{P}_{\Omega}(\bm{a}(\widehat{\theta}_l^k)\widehat{\bm{\phi}}_k^H)\|_F
\end{align}
which can be obtained by 
\begin{align}\label{alpha_hat}
	\widehat{\bm{\alpha}}^k=(\mathcal{P}_{\Omega}\bm{A}_k)^{\dagger} \bm{Y} \bm{\phi}^{\dagger}_k
\end{align}
where $\bm{A}_k:=[\bm{a}(\theta^k_1),..., \bm{a}(\theta^k_{\widehat{L}_k})]\in\mathbb{C}^{N\times \widehat{{L}}_k}$ and $\bm{\phi}^{\dagger}_k$ is the $k$-th column of the pseudo inverse of $\bm{\Phi}:=[\bm{\phi}_1^T,...,\bm{\phi}_{\widehat{K}_a}^T]^T\in\mathbb{C}^{\widehat{K}_a\times T}$ denoted by ${\bm{\Phi}}^{\dagger}$.

Having now an estimate of $\widehat{\bm{\alpha}}^k$, we can update the estimate $\widehat{\bm{\phi}}_k$ by solving 
\begin{align}
	[\widehat{\bm{\phi}}_1,...,\widehat{\bm{\phi}}_{\widehat{K}_a}]=\mathop{\arg\min}_{\substack{\phi_1,..., \phi_{\widehat{K}_a}\\\bm{\phi}_k\ge \bm{0}}}\|\bm{Y}-\sum_{k\in\widehat{\mathcal{S}}_{\rm AU}}\sum_{l=1}^{\widehat{L}_k}\widehat{\alpha}_l^k\mathcal{P}_{\Omega}(\bm{a}(\widehat{\theta}_l^k){\bm{\phi}}_k^H)\|_F
\end{align}
where the non-negative constraint $\bm{\phi}_k\ge \bm{0}$ is included to remove any ambiguities caused by multiplications of two elements, as well as to uniquely identify the preamble and data of UEs. The latter optimization problem leads to the following closed-form relation
\begin{align}\label{phi_hat}
	\widehat{\bm{\Phi}}=\Big(\bm{B}^{\dagger} \bm{Y}\Big)_{+}\in\mathbb{C}^{\widehat{K}_a\times T}
	\end{align}
where $\bm{B} \coloneqq [\mathcal{P}_{\Omega}\bm{A}_1 \widehat{\bm{\alpha}}^1,...,\mathcal{P}_{\Omega}\bm{A}_{\widehat{K}_a} \widehat{\bm{\alpha}}^{\widehat{K}_a}]\in\mathbb{C}^{M\times \widehat{K}_a}$ and $(\cdot)_+$ keeps the non-negative parts of a matrix as they are and makes  zero elsewhere. By performing steps \eqref{alpha_hat} and \eqref{phi_hat} alternatively, the data and complex channel coefficients are uniquely found within few iterations. Since the transmitted data contains the unique user
ID, the active user indices are obtained after recovering the users' data $\bm{\phi}_k$s.

The pseudocode of the proposed method, which summarizes the aforementioned steps, is provided in Algorithm \ref{algorithm.admm}.

%\begin{figure}[htb]
{\centering
	%	\resizebox{.5\textwidth}{!}{
		\begin{minipage}{.48\textwidth}
			\begin{algorithm}[H]
				\caption{Blind Goal-Oriented Detection (BGOD)}
				%	$ RMSPI and  GRMSPI $
				\begin{algorithmic}[1]\label{algorithm.admm}
					\REQUIRE $\bm{Y}\in\mathbb{C}^{M\times T}$, ${\rm iter}^{ADMM}_{\max}$, ${\rm iter}^{AM}_{\max}$
					\STATE Initialize the iteration index as $t=0$.
					\STATE Initialize the data vectors as $\bm{\phi}_k \sim \mathcal{CN}(\bm{0},\bm{I}_T), k=1,..., K_a$. 
					%		\STATE $y \leftarrow 1$
					\STATE Set $\bm{\Psi}^0=\bm{\Lambda}^0=\bm{0}$.
					\WHILE {$t\le {\rm iter}^{ADMM}_{\max}$}
					\STATE Obtain $\bm{Z}^{t+1}$, $\bm{W}^{t+1}$ and $\bm{v}^{t+1}$ according to \eqref{update_Z&W} and \eqref{update_v},
					
					%					 \eqref{eq.Psi_update} and \eqref{Lambda_update}, respectively.
					\STATE Compute the eigenvalue decomposition of the matrix $\begin{bmatrix}
						\mathcal{T}(\bm{v})&\bm{Z}\\
						\bm{Z}^H&\bm{W}
					\end{bmatrix}-\frac{\bm{\Lambda}}{\rho}$ in \eqref{eq.Psi_update} and set all non-negative eigenvalues to zero in order to obtain $\bm{\Psi}^{t+1}$ in \eqref{eq.Psi_update}.
					\STATE Update the dual Lagrange matrix multiplier $\bm{\Lambda}^{t+1}$ according to \eqref{Lambda_update}.
					\STATE $t\rightarrow t+1$
					\ENDWHILE
					\STATE Obtain $\mathcal{P}^{{\rm Adj}}(\widehat{\bm{V}})$ using Lemma \ref{lem.dual_var+dual_multiplier}.
					\STATE Localize the angles by discretizing $\theta$ on a fine grid up to a desired accuracy and identify where the $\ell_2$ norm of the polynomial $\bm{q}_{G}(\theta)$ achieves to its maximum as in Theorem \ref{thm.main}.
					\STATE \text{Perform the elbow method:} $[\{\widehat{\bm{\theta}}^k\}_{k\in \widehat{\mathcal{S}}_{\rm AU}},\widehat{K}_a,\{\widehat{L}_k\}_{k\in \widehat{\mathcal{S}}_{\rm AU}}]={\rm elbow}(\bm{\widehat{\theta}})$
					. 
					\FOR {it=1 to ${\rm iter}^{AM}_{max}$}
					\STATE Compute the complex channel coefficients $\widehat{\bm{\alpha}}^k$ using \eqref{alpha_hat}.   
					\STATE Compute the data estimate via \eqref{phi_hat}.
					\ENDFOR				
					
				\end{algorithmic} 
				Return: $\widehat{\mathcal{S}}_{\rm AU}, \{\widehat{\bm{\theta}^k}\}_{ k\in\widehat{\mathcal{S}}_{\rm AU}},\{\widehat{\bm{\phi}_k}\}_{ k\in\widehat{\mathcal{S}}_{\rm AU}}, \{\widehat{\bm{\alpha}^k}\}_{ k\in\widehat{\mathcal{S}}_{\rm AU}}, \widehat{K}_a= |\widehat{\mathcal{S}}_{\rm AU}|, \{\widehat{L}_k\}_{ k\in\widehat{\mathcal{S}}_{\rm AU}}, \widehat{\bm{h}}_k=\sum_{l=1}^{\widehat{L}_k}\widehat{\alpha}_l^k\bm{a}(\widehat{\theta}_l^k), k\in\widehat{\mathcal{S}}_{\rm AU}$. 
			\end{algorithm} 
		\end{minipage}
	}
	%}
%\end{figure}
%\begin{figure}[t]
%	\hspace{-.4cm}
%	\includegraphics[scale=0.37]{rapblock.pdf}
%	\caption{{\color{\change} Data transmission procedure of cMBB. Step A is dedicated to RAP for AUE and CE and step B is dedicated to coherent data transmission.}}\label{fig.rap}
%\end{figure}

%Our algorithm works for crowded scenarios. Therefore, our scheme for RA, is that the active users send arbitrary pilots in $T$ time slots. The receiver estimates the users' channels and users' identity exactly and without any ambiguity with the lowest overhead. Then, the users send their data through the channel and the BS estimates the data of users by \eqref{eq.phi_estimate}. Unlike previous works in RA, there are no limitations in the number of active users and consequently the number of pilots. 

\section{Simulation Results}\label{sec.simulations}
In this section, we provide numerical results to assess the performance of BGOD and compare it with state-of-the-art RA schemes, namely covariance-based \cite{haghighatshoar2018improved,haghighatshoar2018new,fengler2019massive} and AMP approaches \cite{ke2020compressive}. For that, we designed optimal pilot settings and distributions for these works, as exactly mentioned in their references. This is a challenging (or unfair) setting for our proposed scheme, since $\bm{\phi}_k$s serve as pilots for the aforementioned methods considered perfectly known at the BS, while in the proposed BGOD, $\phi_k$s are users' data and are fully unknown to the BS. The RA performance is quantified in terms of detection and false alarm probabilities, defined by $\mathds{P}_d=\frac{\mathds{E}|\mathcal{S}_{\rm AU}\bigcap \widehat{\mathcal{S}}_{\rm AU}|}{K_a}$ and $\mathds{P}_{fa}=\frac{\mathds{E}|\widehat{\mathcal{S}}_{\rm AU} \setminus \mathcal{S}_{\rm AU}|}{K-K_a}$, respectively, where $|\widehat{\mathcal{S}}_{\rm AU} \setminus \mathcal{S}_{\rm AU}|$ counts the number of elements in $\widehat{\mathcal{S}}_{\rm AU}$ that are not present in $\mathcal{S}_{\rm AU}$. Channel estimation and data recovery performance is evaluated by the normalized mean square error (NMSE) provided respectively by ${\rm NMSE}_{\rm CE}=\mathds{E}\frac{\|\widehat{\bm{H}}_E-\bm{H}_E\|_F}{\|\bm{H}_E\|_F}$ and ${\rm NMSE}_{\rm DR}=\mathds{E}\frac{\|\widehat{\bm{\Phi}}_E-\bm{\Phi}_E\|_F}{\|\bm{\Phi}_E\|_F}$. Here, $\bm{H}_E\in\mathbb{C}^{N\times K}$ and $\widehat{\bm{H}}_E\in\mathbb{C}^{N\times K}$ are extended matrices with columns $\bm{h}_k, k\in \mathcal{S}_{\rm AU}$ and $\widehat{\bm{h}}_k, k\in \widehat{\mathcal{S}}_{\rm AU}$, respectively, and are zero elsewhere. Similarly, $\bm{\Phi}_E\in\mathbb{C}^{K\times T}$ and $\widehat{\bm{\Phi}}_E\in\mathbb{C}^{K\times T}$ are extended matrices whose rows in the indices $\mathcal{S}_{\rm AU}$ and $\widehat{\mathcal{S}}_{\rm AU}$ are given by $\bm{\Phi}\in\mathbb{C}^{K_a\times T}$ and $\widehat{\bm{\Phi}}\in\mathbb{C}^{\widehat{K}_a\times T}$, respectively, and are zero elsewhere. To calculate $\mathds{P}_{d}$, $\mathds{P}_{fa}$ and ${\rm NMSE}$, we perform $50$ Monte Carlo iterations in our experiments. The computational complexity is also averaged over $50$ realizations in all experiments. The regularization parameter $\gamma$ is set to $\frac{1}{\sigma^2}$, where $\sigma^2=\mathds{E}|E_{i,l}|^2, \forall i, l$ is the variance of each noise element. We define the signal to noise ratio (in ${\rm dB}$) by ${\rm SNR}=10\log_{10}(\frac{\|\mathcal{P}_{\Omega}(\bm{X})\|_F^2}{MT\sigma^2})$. To fully exploit the BS antenna capabilities,  which is also the case in AMP and covariance-based methods, we assume first that the BS observes the output of all of its antennas, i.e., $\Omega=\{1,...N\}$ with $M=N$. 

Figure \ref{fig.detection_versus_N} shows the effect of the number $N$ of BS antennas on the performance, with parameters $K=100$, $K_a=3$, $L_{\max}=3$, and ${\rm SNR}=30 {\rm dB}$, when the least number of time resources are employed at the devices (i.e., $T=1$). We observe that the detection probability of BGOD tends to one while the false alarm probability gets zero for increasing $N$. 
%Since when BGOD starts to detect perfectly, the computational complexity increases with $N$. 
In contrast, AMP \cite{ke2020compressive} performs poorly in that case while the covariance-based method \cite{haghighatshoar2018improved,fengler2019massive} does not provide acceptable detection performance. 
Figure \ref{fig.detection_versus_Ka} depicts the performance in a scenario where the number $K_a$ of active devices increases while everything else is kept fixed at $N=64$, $K=100$, $L_{\max}=3$, $T=1$, and ${\rm SNR}=30{\rm dB}$. BGOD shows superior performance in terms of detection and false alarm probability compared to AMP and covariance-based methods. However, the computational complexity of BGOD is slightly higher than existing RA approaches. In Figure \ref{fig.detection_versus_T} we study the effect of increasing the time resources $T$, while keeping the other parameters fixed, as $N=64$, $K=100$, $K_a=3$, $L_{\max}=3$, and ${\rm SNR}=50{\rm dB}$. Although the detection performance of AMP and covariance-based methods improves when the number of known pilots increases, it remains inferior compared to BGOD, whose computational complexity though increases with $T$. It should be mentioned that the additional computational complexity comes from the fact that BGOD simultaneously performs active device detection, channel estimation and data recovery, while AMP does only active detection and channel estimation, and covariance-based carries out only active detection. 
Figure \ref{fig.M_change} shows the effect of the number $M$ of observed elements at the BS for $N=128$, $T=1$, ${\rm SNR}=30{\rm dB}$, $L_{\max}=3$, $K_a=20$, and $K=500$. We see that observing only few array elements is sufficient to achieve very good detection performance. Moreover, the complexity is not affected by $M$ when BGOD starts to detect active devices perfectly. 

Finally, we evaluate CE and DR performance of BGOD. In Figure \ref{fig.CE_performance}, we observe that BGOD estimates very well the channels corresponding to the active users while AMP method exhibits poor CE performance. Note that this result is obtained for AMP knowing in advance all pilot sequences $\bm{\phi}_k$s with optimal distribution settings as in \cite{ke2020compressive}, whereas BGOD works in a blind way and not only estimates channels but also recovers the data of active users simultaneously. This also justifies the additional time BGOD requires, as seen at the right side in Figure \ref{fig.CE_performance}.
\begin{figure}[t]
	\hspace{-0cm}
	\begin{subfigmatrix}{2}
		\subfigure[]{\includegraphics[scale=1,trim={0cm 0cm 0cm 0cm}]{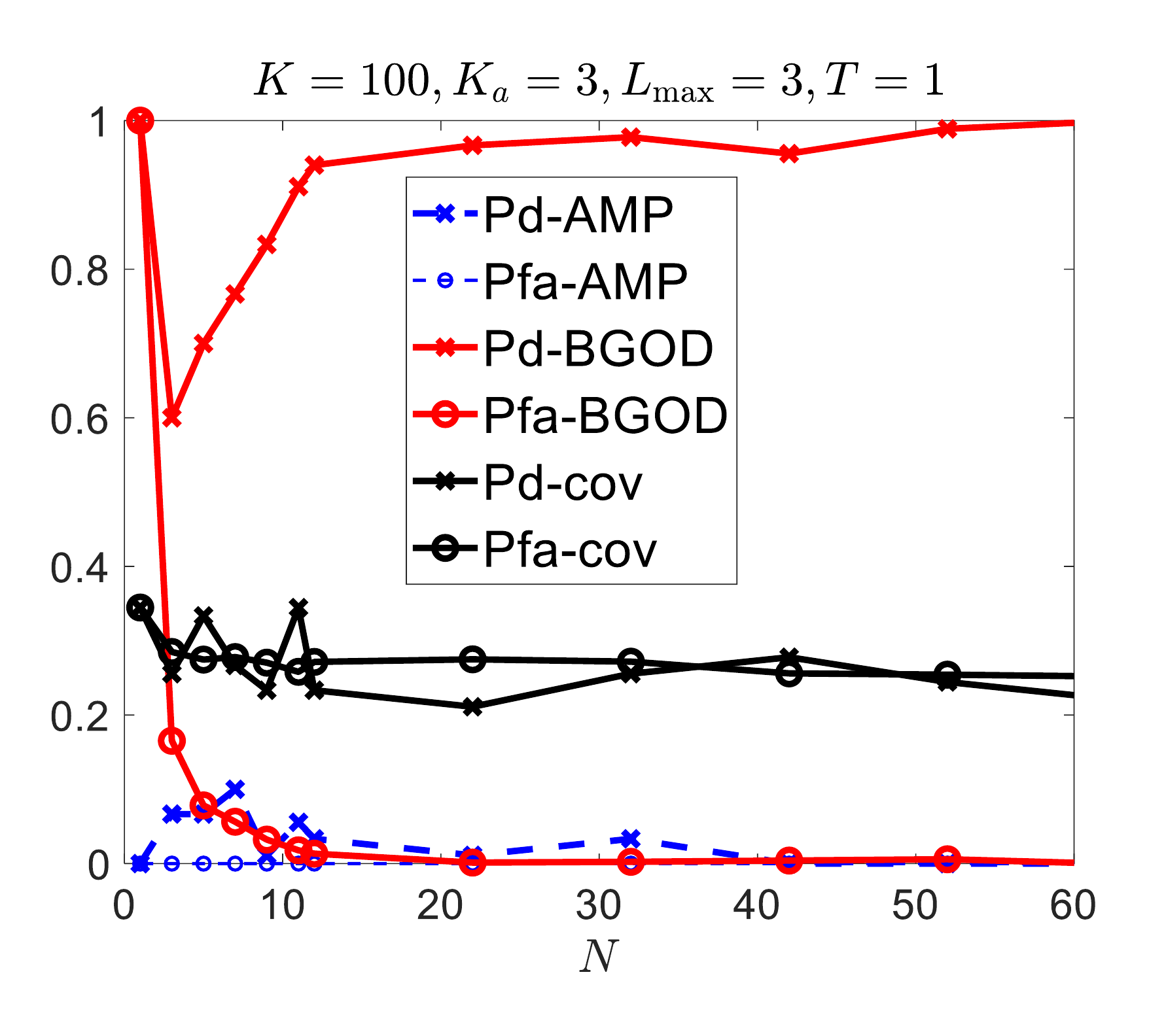}}
		\subfigure[]{\includegraphics[scale=1,trim={0cm 0cm 0cm 0cm}]{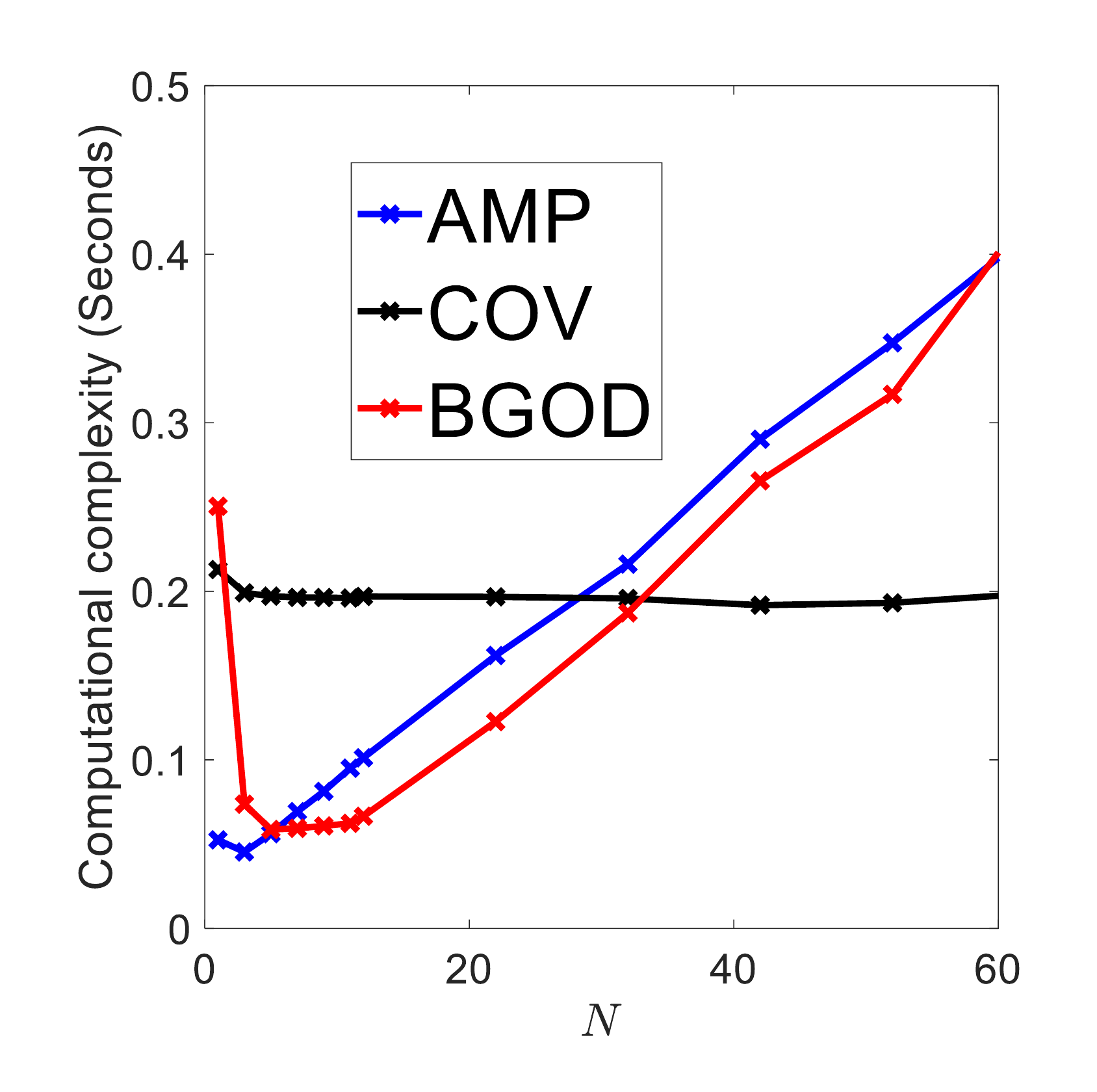}}
	\end{subfigmatrix}
	%	\mbox{\subfigure[]{\includegraphics[scale=.5,trim={0cm 0cm 0cm 0cm}]{main_fig1.tex}}\quad
		%		%		\subfigure[]{\includegraphics[width=2.34in]{paperfigR9.pdf}\label{fig.paperfigR9}}\quad
		%		
		%		\subfigure[]{\includegraphics[scale=.5,trim={0cm 0cm 0cm 0cm}]{main_fig2.tex}}}
	%	\input{main_fig1.tex}
	%	{\includegraphics[scale=.7,trim={0cm 0cm 0cm 0cm}]{main_fig1.tex}}
	%	\input{channel.tex}
	%	\includegraphics[scale=0.48]{channel.png}
	\caption{Detection performance comparison between the proposed BGOD method and AMP and covariance-based schemes versus the values of $N$ for $K=100, K_a=3, L_{\max}=3, T=1$, and ${\rm SNR}=30 {\rm dB}$. (a) Detection accuracy (b) Computational complexity.}\label{fig.detection_versus_N}
\end{figure}
\begin{figure}[t]
	\hspace{-0cm}
	\begin{subfigmatrix}{2}
		\subfigure[]{\includegraphics[]{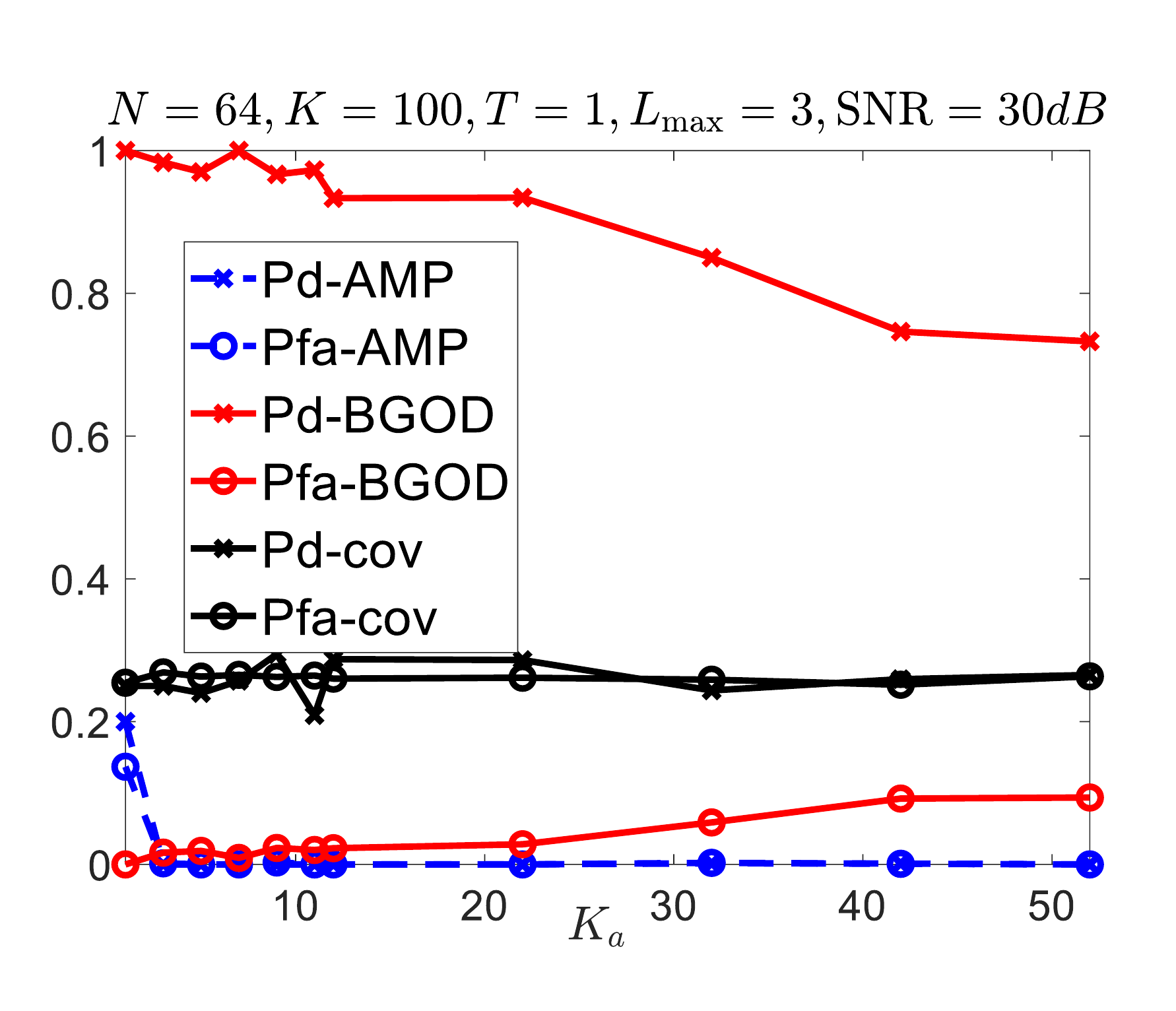}}
		\subfigure[]{\includegraphics[]{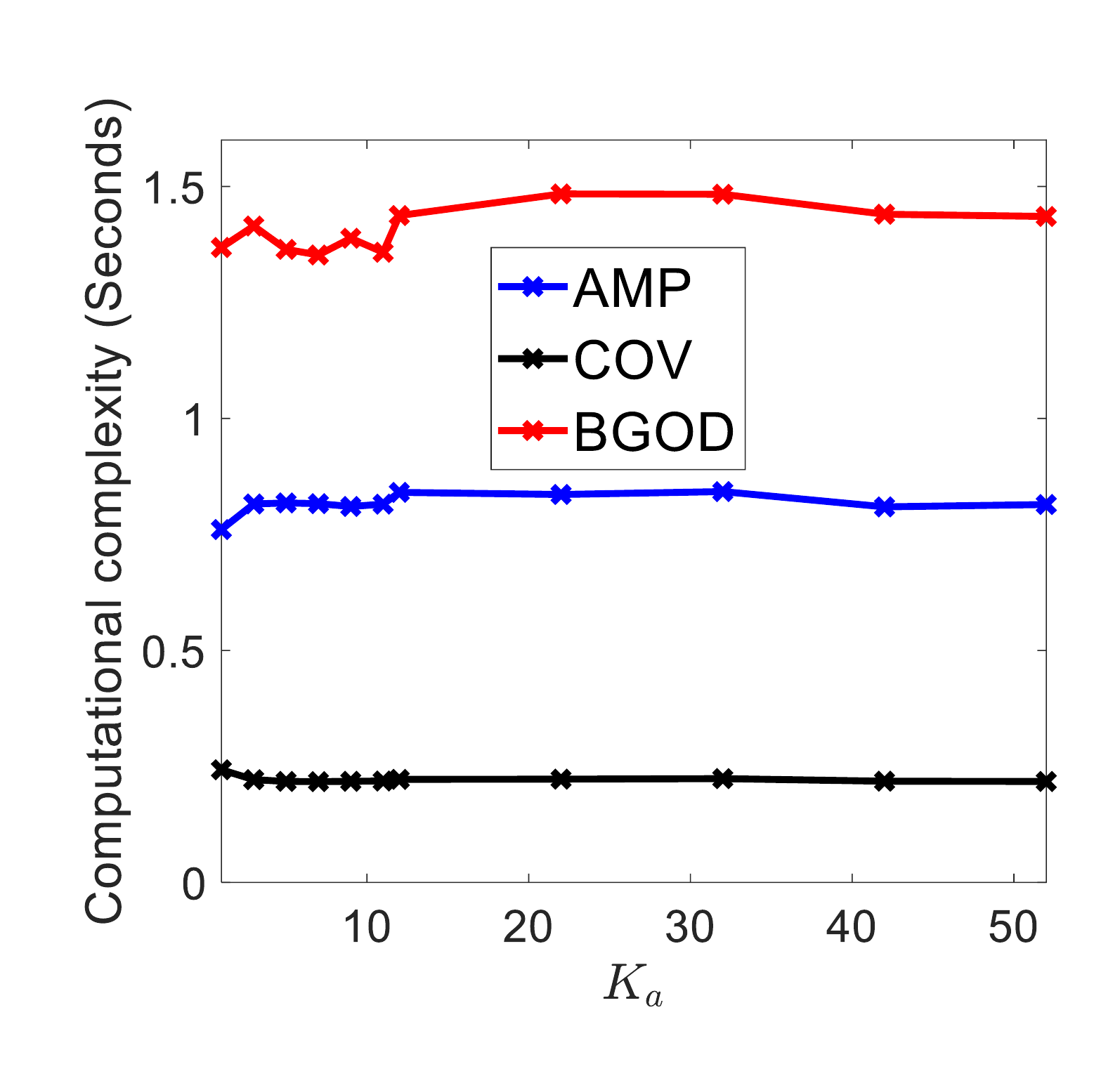}}
	\end{subfigmatrix}
	%	\mbox{\subfigure[]{\includegraphics[scale=.5,trim={0cm 0cm 0cm 0cm}]{main_fig1.tex}}\quad
		%		%		\subfigure[]{\includegraphics[width=2.34in]{paperfigR9.pdf}\label{fig.paperfigR9}}\quad
		%		
		%		\subfigure[]{\includegraphics[scale=.5,trim={0cm 0cm 0cm 0cm}]{main_fig2.tex}}}
	%	\input{main_fig1.tex}
	%	{\includegraphics[scale=.7,trim={0cm 0cm 0cm 0cm}]{main_fig1.tex}}
	%	\input{channel.tex}
	%	\includegraphics[scale=0.48]{channel.png}
	\caption{Detection performance comparison between the proposed BGOD method and AMP and covariance-based methods versus the number of active devices $K_a$ for $N=64, K=100, L_{\max}=3, T=1$, and ${\rm SNR}=30 {\rm dB}$. (a) Detection accuracy (b) Computational complexity.}\label{fig.detection_versus_Ka}
\end{figure}
\begin{figure}[t]
	\hspace{-0cm}
	\begin{subfigmatrix}{2}
		\subfigure[]{\includegraphics[]{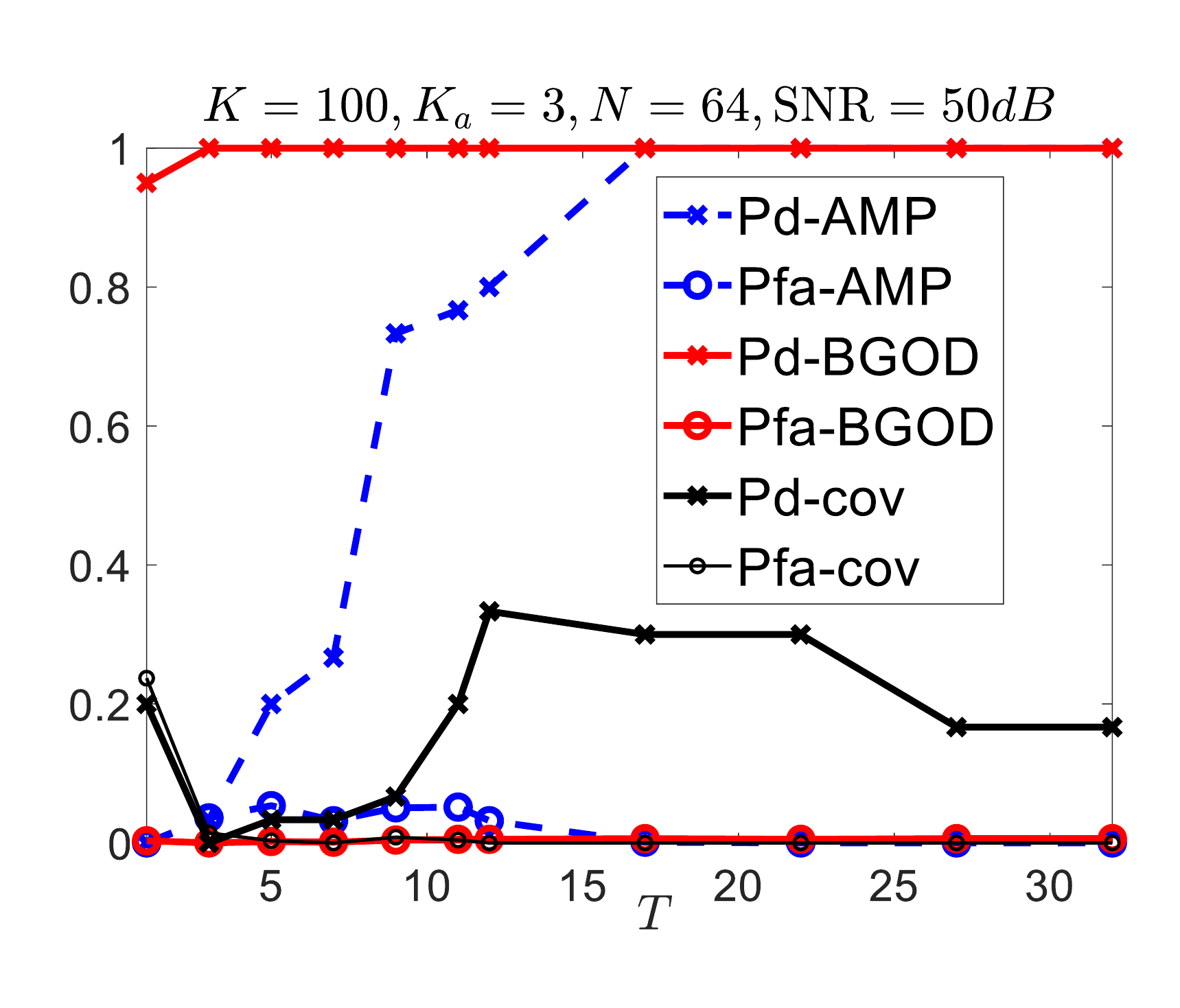}}
		\subfigure[]{\includegraphics[]{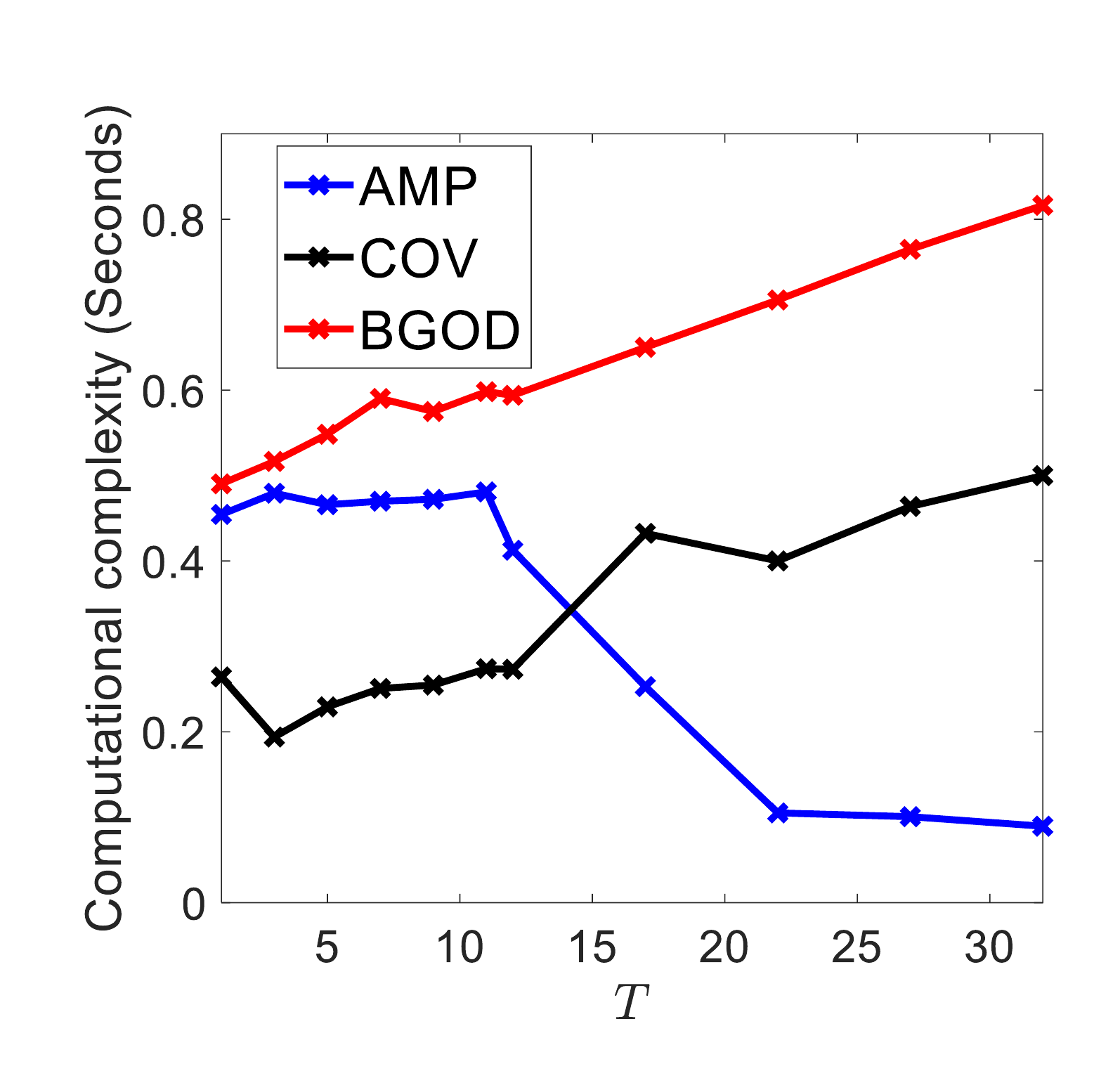}}
	\end{subfigmatrix}
	%	\mbox{\subfigure[]{\includegraphics[scale=.5,trim={0cm 0cm 0cm 0cm}]{main_fig1.tex}}\quad
		%		%		\subfigure[]{\includegraphics[width=2.34in]{paperfigR9.pdf}\label{fig.paperfigR9}}\quad
		%		
		%		\subfigure[]{\includegraphics[scale=.5,trim={0cm 0cm 0cm 0cm}]{main_fig2.tex}}}
	%	\input{main_fig1.tex}
	%	{\includegraphics[scale=.7,trim={0cm 0cm 0cm 0cm}]{main_fig1.tex}}
	%	\input{channel.tex}
	%	\includegraphics[scale=0.48]{channel.png}
	\caption{Detection performance comparison between the proposed BGOD method and AMP and covariance-based schemes versus the number of preambles $T$ for $N=64, K=100, K_a=3, L_{\max}=3$, and ${\rm SNR}=50 {\rm dB}$. (a) Detection accuracy (b) Computational complexity.}\label{fig.detection_versus_T}
\end{figure}
\begin{figure}[t]
	\hspace{-0cm}
	\begin{subfigmatrix}{2}
		\subfigure[]{\includegraphics[]{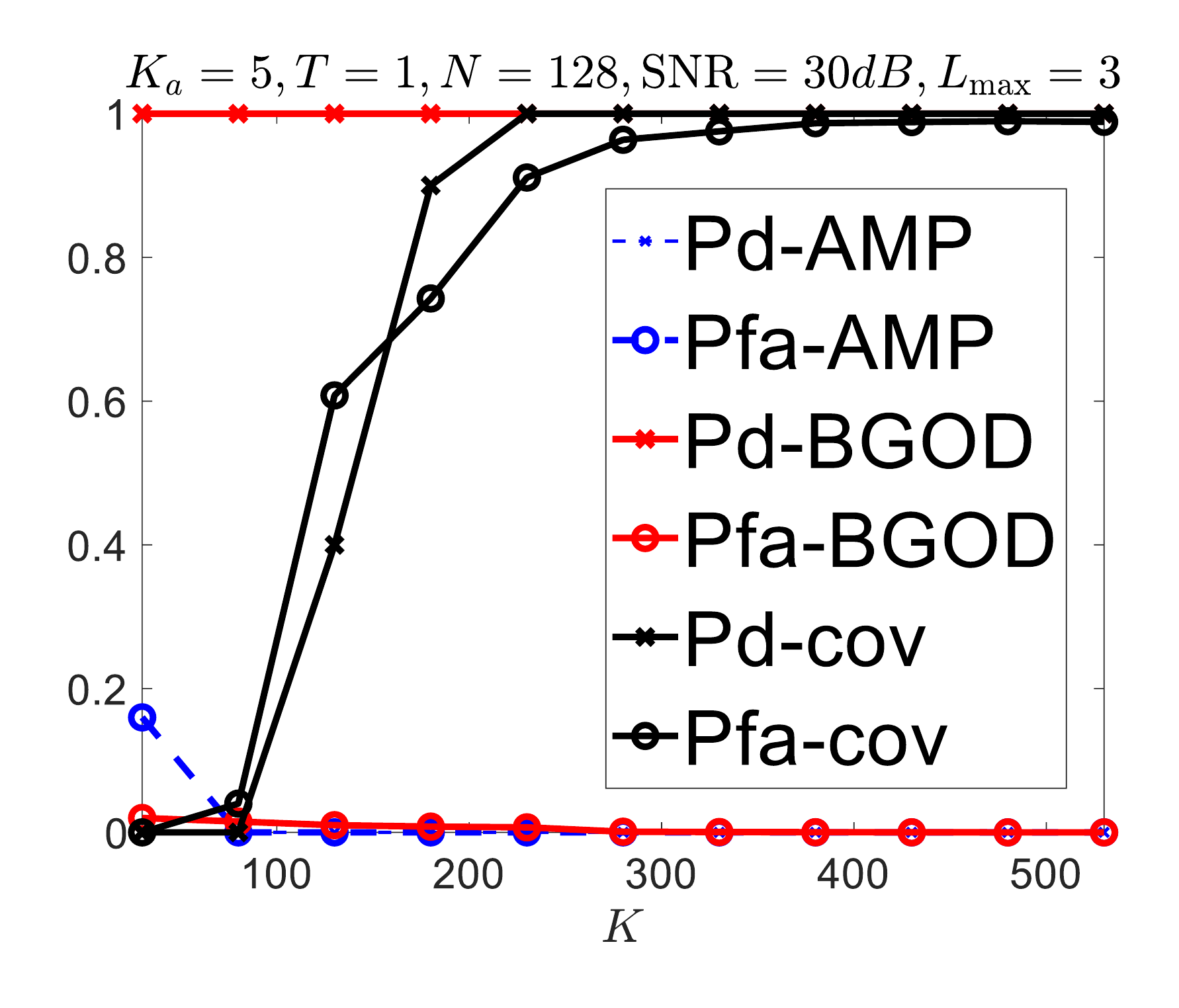}}
		\subfigure[]{\includegraphics[]{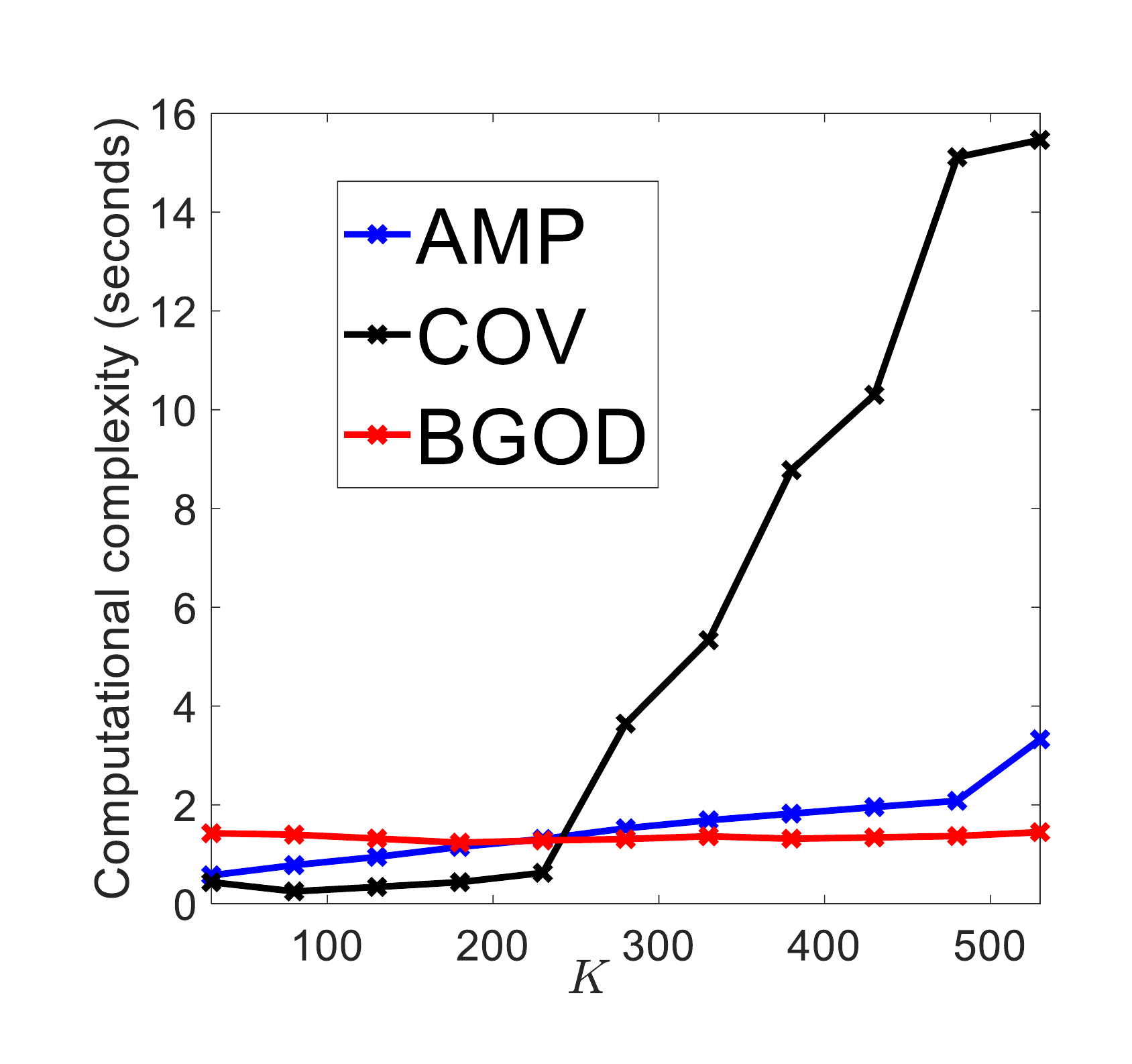}}
	\end{subfigmatrix}
	%	\mbox{\subfigure[]{\includegraphics[scale=.5,trim={0cm 0cm 0cm 0cm}]{main_fig1.tex}}\quad
		%		%		\subfigure[]{\includegraphics[width=2.34in]{paperfigR9.pdf}\label{fig.paperfigR9}}\quad
		%		
		%		\subfigure[]{\includegraphics[scale=.5,trim={0cm 0cm 0cm 0cm}]{main_fig2.tex}}}
	%	\input{main_fig1.tex}
	%	{\includegraphics[scale=.7,trim={0cm 0cm 0cm 0cm}]{main_fig1.tex}}
	%	\input{channel.tex}
	%	\includegraphics[scale=0.48]{channel.png}
	\caption{Detection performance comparison between the proposed BGOD method and AMP and covariance-based schemes versus the number of devices $K$ for $N=128, K_a=5, T=1, L_{\max}=3$, and ${\rm SNR}=30 {\rm dB}$. (a) Detection accuracy (b) Computational complexity.}\label{fig.detection_versus_K}
\end{figure}
\begin{figure}[t]
	\hspace{-0cm}
	\begin{subfigmatrix}{2}
		\subfigure[]{\includegraphics[]{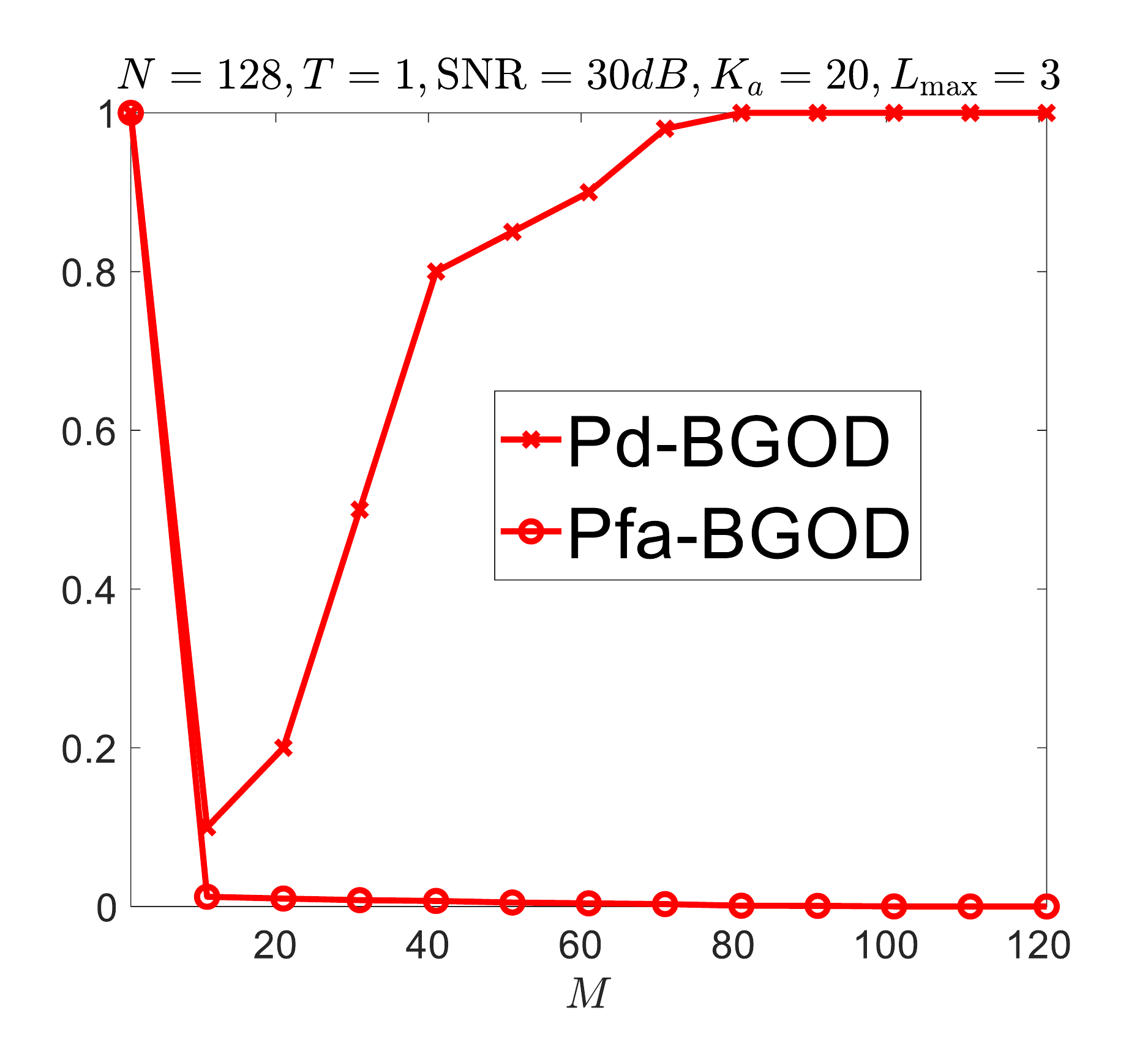}}
		\subfigure[]{\includegraphics[]{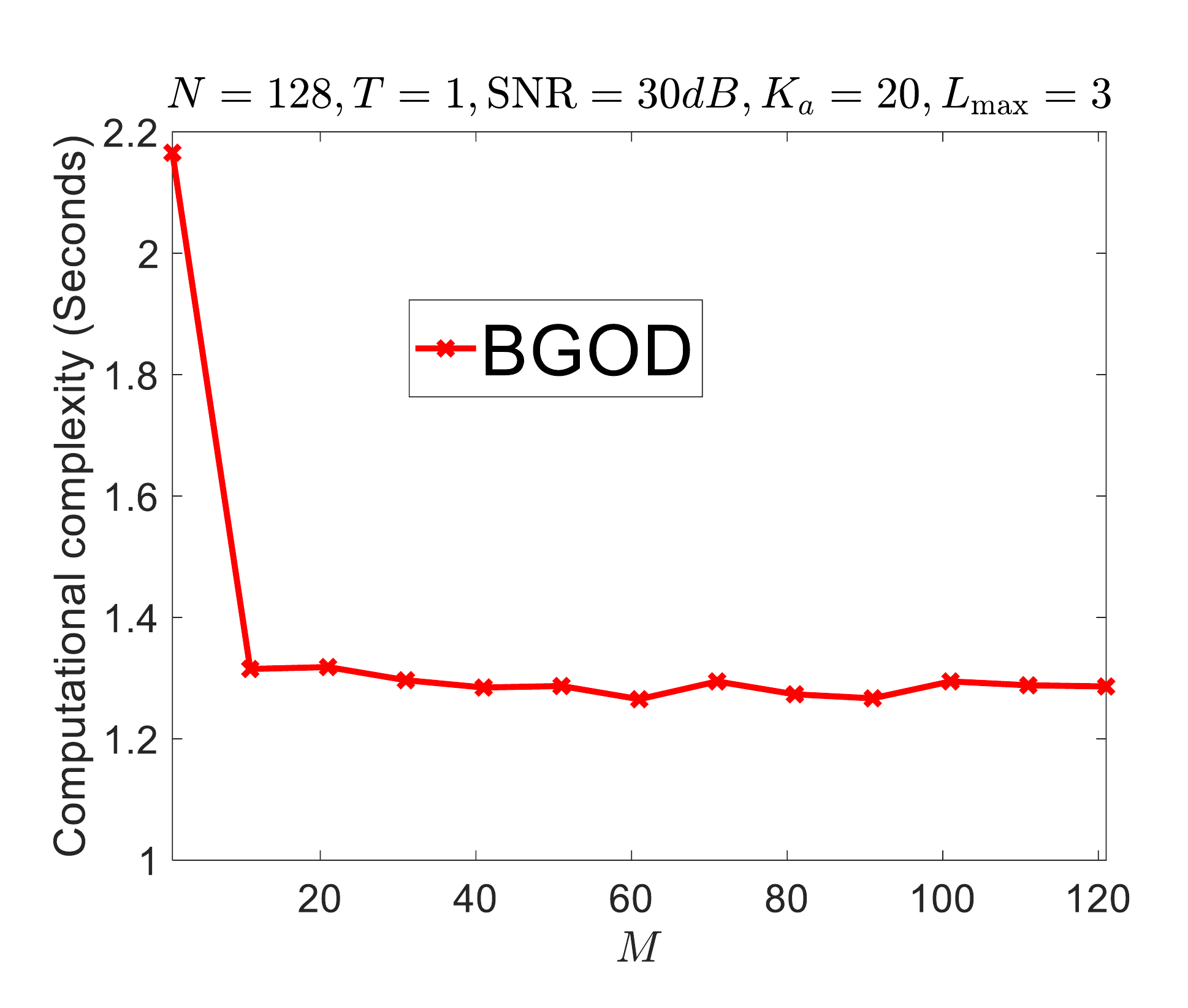}}
	\end{subfigmatrix}
	%	\mbox{\subfigure[]{\includegraphics[scale=.5,trim={0cm 0cm 0cm 0cm}]{main_fig1.tex}}\quad
		%		%		\subfigure[]{\includegraphics[width=2.34in]{paperfigR9.pdf}\label{fig.paperfigR9}}\quad
		%		
		%		\subfigure[]{\includegraphics[scale=.5,trim={0cm 0cm 0cm 0cm}]{main_fig2.tex}}}
	%	\input{main_fig1.tex}
	%	{\includegraphics[scale=.7,trim={0cm 0cm 0cm 0cm}]{main_fig1.tex}}
	%	\input{channel.tex}
	%	\includegraphics[scale=0.48]{channel.png}
	\caption{Random access performance of our proposed method versus the number $M$ of observed arrays at the BS. Figures (a) and (b) show detection accuracy and computational complexity, respectively.}\label{fig.M_change}
\end{figure}
\begin{figure}[t]
	\hspace{-0cm}
	\begin{subfigmatrix}{2}
		\subfigure[]{\includegraphics[]{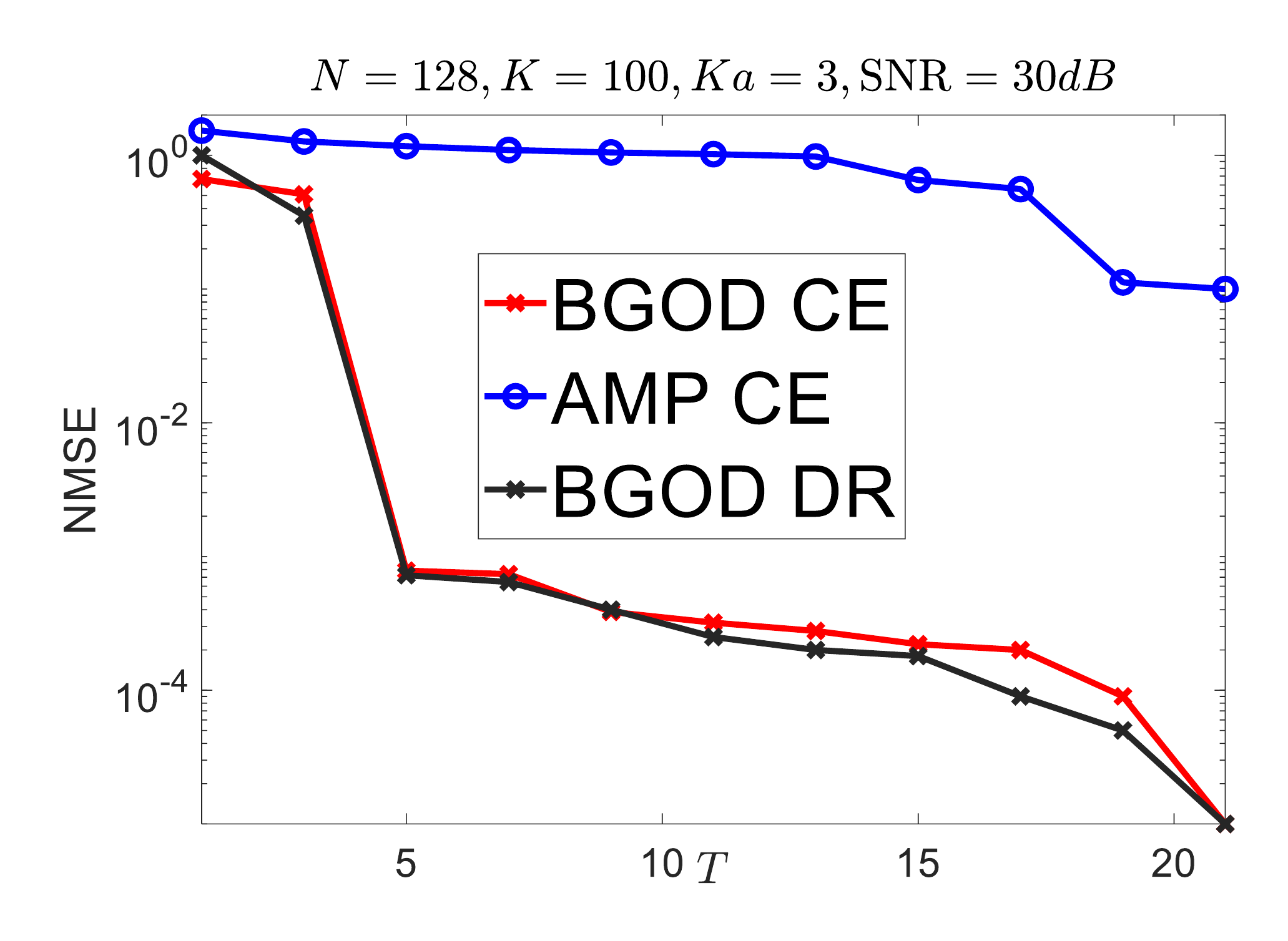}}
		\subfigure[]{\includegraphics[]{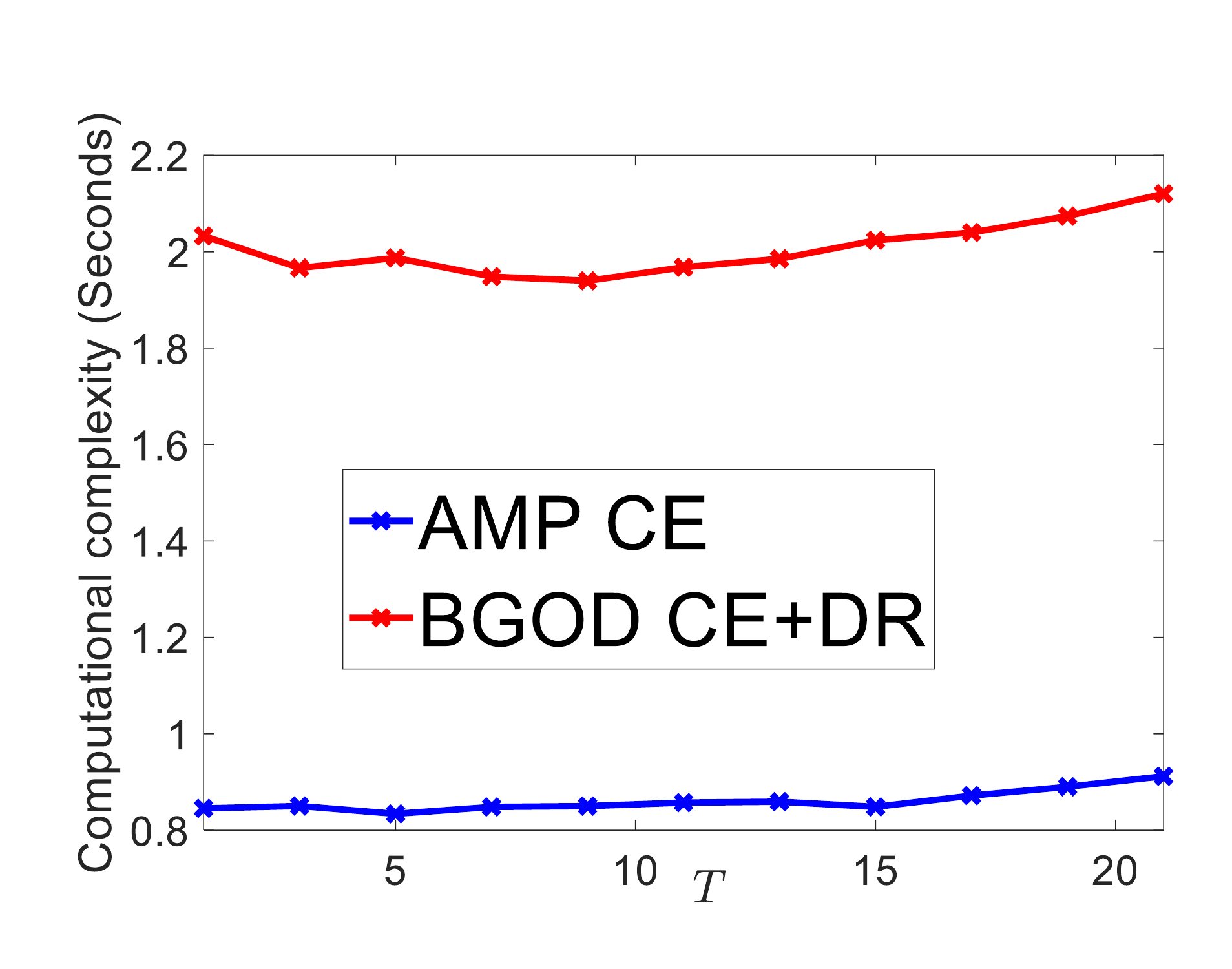}}
	\end{subfigmatrix}
	%	\mbox{\subfigure[]{\includegraphics[scale=.5,trim={0cm 0cm 0cm 0cm}]{main_fig1.tex}}\quad
		%		%		\subfigure[]{\includegraphics[width=2.34in]{paperfigR9.pdf}\label{fig.paperfigR9}}\quad
		%		
		%		\subfigure[]{\includegraphics[scale=.5,trim={0cm 0cm 0cm 0cm}]{main_fig2.tex}}}
	%	\input{main_fig1.tex}
	%	{\includegraphics[scale=.7,trim={0cm 0cm 0cm 0cm}]{main_fig1.tex}}
	%	\input{channel.tex}
	%	\includegraphics[scale=0.48]{channel.png}
	\caption{CE and DR performance of our proposed method versus the time resources $T$ for $N=128, K=100, K_a=5, L_{\max}=3$, and ${\rm SNR}=30 {\rm dB}$. CE performance is compared with AMP and the DR performance of BGOD is shown with cross-dash line. Figures (a) and (b) show accuracy and computational complexity, respectively. The computational complexity of our algorithm relates to both CE and DR simultaneously, whereas only CE is performed in AMP.}\label{fig.CE_performance}
\end{figure}

\section{Conclusion}\label{sec.conclusion}
In this work, we have proposed a novel random access strategy for massive connectivity in future wireless networks. Our scheme is based on a reconstruction-free optimization task, for which we have proposed goal-oriented optimization that helps in finding a relevant continuous function containing sufficient information to obtain the active users identity. In fact, we have proposed a method to achieve the goal of active user detection without reconstructing the corresponding channels and messages, which makes our strategy independent of the number of devices involved. Simulation results have shown the significant performance gains that can be achieved when using our strategy compared to existing state-of-the-art schemes. This makes the proposed blind goal-oriented random access a promising candidate for supporting massive connectivity in future wireless networks. 

%In this work, we proposed a novel strategy for RA in B5G which is based on a reconstruction-free optimization. For this task, we proposed a goal-oriented optimization which helps to find a valuable continuous function. This function has the sufficient information to obtain the ID of active users. In fact, we proposed a way to achieve the goal of active user detection without reconstructing the corresponding channels and messages. This makes our strategy fully independent of the number of devices. As simulation results verify, our method meets all of the URLLC and massive connectivity requirements of B5G and provides superior performance compared to the existing RA methods. 

\appendices
\section{Proof of Theorem \ref{thm.main}}\label{proof.thm_main}
To prove the theorem, we first obtain the dual problem of \eqref{prob.atomic_l1_lasso} in the following: (proved in Appendix \ref{proof.dualprob}.)
\begin{align}\label{prob.dual}
	&\max_{\bm{V}\in\mathbb{C}^{N\times T}}{\rm Re}\langle \bm{V}, \bm{Y}\rangle_F-\frac{1}{2\gamma}\|\bm{V}\|_F^2\nonumber\\
	&~~{\textrm s.t.}~~ \|(\mathcal{P}^{{\rm Adj}}_{\Omega}(\bm{V}))^H\bm{a}(\theta_k)\|_2\|\bm{\phi}_k\|_2\le 1\nonumber\\
	&\forall \theta_k \in (0,\pi), k=1,..., K 
	%	\begin{bmatrix}
		%		\bm{I}&\bm{V}^H\\
		%		\bm{V}&\bm{Z}_i
		%	\end{bmatrix}\succeq \bm{0},\mathcal{T}^{*}(\bm{Z}_i)=\mathcal{T}^{*}(\bm{I}), i=1,..., K, ~\mathcal{P}_{\Omega}(\bm{V})=\bm{0}	
\end{align}
where $\mathcal{P}^{{\rm Adj}}_{\Omega}$ is the adjoint operator of $\mathcal{P}_{\Omega}$ defined as
\begin{align}
	(\mathcal{P}^{{\rm Adj}}_{\Omega}(\bm{V}))_{(i,l)}=\left\{\begin{array}{cc}
		V_{(i,l)}& i\in\Omega\\
		0& \mathrm{o.w.}
	\end{array}\right\}\forall i,l =1,..., N.
\end{align}
There are $K$ continuous constraints in the above problem, each of which indeed contains infinite number of constraints. This makes the optimization problem highly challenging. In what follows, we state a lemma which converts these infinite constrains into a finite number of linear matrix inequalities, which is tractable using off-the-shell solvers.
\begin{lem}\label{lem.LMI}
	Let $\bm{V}\in\mathbb{C}^{M\times T}$ and $c_1:=\frac{\max_{k=1,..., K}\|\bm{\phi}_k\|_2}{\sqrt{N}}$, then
	\begin{align}\label{contin_constarints}
		&	\|(\mathcal{P}^{{\rm Adj}}_{\Omega}(\bm{V}))^H\bm{a}(\theta_k)\|_2\|\bm{\phi}_k\|_2\le 1~~\forall \theta_k \in (0,\pi), k=1,..., K 
	\end{align} 
	holds if and only if there exists a Hermitian matrix $\bm{Q}\in\mathbb{C}^{N\times N}$ such that
	\begin{align}\label{sdp_constraints}
		&\begin{bmatrix}
			\bm{Q}&\mathcal{P}^{{\rm Adj}}_{\Omega}(\bm{V}) c_1\\
			(\mathcal{P}^{{\rm Adj}}_{\Omega}(\bm{V}))^H c_1&\bm{I}_T
		\end{bmatrix}\succeq \bm{0},\langle \bm{\varTheta}_{q}, \bm{Q}  \rangle=1_{q=0},\nonumber\\
		& q=-N+1,..., N-1,
	\end{align}
	where $\bm{\varTheta}_q$ is the elementary Toeplitz matrix with ones on the $q$-th diagonal and zero elsewhere.
\end{lem}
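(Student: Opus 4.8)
The plan is to reduce the whole family of continuous constraints to a single semidefinite characterization and then prove the two implications separately. First I would observe that the steering vector $\bm{a}(\theta)$ in \eqref{contin_constarints} is independent of $k$, so the $K$ inequalities hold simultaneously if and only if they hold for the largest coefficient $\max_k\|\bm{\phi}_k\|_2=\sqrt{N}\,c_1$; that is, \eqref{contin_constarints} is equivalent to $\|\bm{U}^{H}\big(\sqrt{N}\,\bm{a}(\theta)\big)\|_2\le 1$ for every $\theta\in(0,\pi)$, where $\bm{U}:=c_1\,\mathcal{P}^{\rm Adj}_{\Omega}(\bm{V})\in\mathbb{C}^{N\times T}$ and $\sqrt{N}\,\bm{a}(\theta)$ is the unnormalized array vector. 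The constant $c_1$ has been defined precisely so that the $\tfrac{1}{\sqrt{N}}$ normalization in \eqref{eq.atoms} is absorbed, and the lemma then amounts to the claim that $\sup_\theta\|\bm{U}^H(\sqrt{N}\bm{a}(\theta))\|_2\le 1$ holds if and only if there is a Hermitian $\bm{Q}$ satisfying the Toeplitz trace constraints $\langle\bm{\varTheta}_q,\bm{Q}\rangle=1_{q=0}$ together with the linear matrix inequality in \eqref{sdp_constraints}, whose off-diagonal block is exactly $\bm{U}$.

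For the \emph{if} direction I would assume \eqref{sdp_constraints} and apply a Schur complement with respect to the strictly positive block $\bm{I}_T$, which gives $\bm{Q}\succeq\bm{U}\bm{U}^H$. Contracting this inequality with $\sqrt{N}\,\bm{a}(\theta)$ yields $(\sqrt{N}\bm{a}(\theta))^H\bm{Q}(\sqrt{N}\bm{a}(\theta))\ge\|\bm{U}^H(\sqrt{N}\bm{a}(\theta))\|_2^2$. The role of the trace constraints is to force the left-hand side to be the constant $1$: expanding the Hermitian form over the diagonals indexed by $q=i-l$ gives $(\sqrt{N}\bm{a}(\theta))^H\bm{Q}(\sqrt{N}\bm{a}(\theta))=\sum_{q=-(N-1)}^{N-1}\big(\sum_{i-l=q}Q_{i,l}\big)\,{\rm e}^{\,{\rm j}2\pi\frac{d}{\lambda}q\cos\theta}$, and the conditions $\langle\bm{\varTheta}_q,\bm{Q}\rangle=1_{q=0}$ annihilate every diagonal sum except $q=0$, which equals $1$. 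Hence $\|\bm{U}^H(\sqrt{N}\bm{a}(\theta))\|_2^2\le 1$ for all $\theta$, recovering \eqref{contin_constarints}.

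The \emph{only if} direction is the crux and the step I expect to be the main obstacle. Here I would start from the nonnegative scalar trigonometric polynomial $p(\theta):=1-\|\bm{U}^H(\sqrt{N}\bm{a}(\theta))\|_2^2\ge 0$, which, as a function of $z={\rm e}^{\,{\rm j}2\pi\frac{d}{\lambda}\cos\theta}$, has degree at most $N-1$. The task is to manufacture the Hermitian witness $\bm{Q}$ with the prescribed constant diagonal sums and the positive semidefinite completion. I would obtain it from the vector-valued Fej\'er--Riesz theorem (equivalently, the bounded-real/KYP lemma), which certifies the nonnegativity of $p$ through a Gram representation and is exactly the device used to derive the semidefinite form of the dual atomic-norm constraint in \cite{tang2013compressed,candes2014towards}. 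Concretely, the boundedness condition $\sup_\theta\|\bm{U}^H(\sqrt{N}\bm{a}(\theta))\|_2\le 1$ is the dual-norm condition whose SDP reformulation produces both the linear matrix inequality in \eqref{sdp_constraints} and the Toeplitz orthogonality relations $\langle\bm{\varTheta}_q,\bm{Q}\rangle=1_{q=0}$; verifying that the $\bm{Q}$ delivered by this theorem indeed satisfies the trace constraints (so that the associated Hermitian form is identically $1$) is where the careful bookkeeping lies. Finally I would substitute $\bm{U}=c_1\mathcal{P}^{\rm Adj}_{\Omega}(\bm{V})$ back to recover the factor $c_1$ multiplying $\mathcal{P}^{\rm Adj}_{\Omega}(\bm{V})$ in the off-diagonal block, completing the equivalence.
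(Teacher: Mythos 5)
Your proposal is correct and follows essentially the same route as the paper: reduce the $K$ constraints to a single one via $\max_k\|\bm{\phi}_k\|_2$ and the normalization absorbed into $c_1$, prove the ``if'' direction by a Schur complement giving $\bm{Q}\succeq \bm{U}\bm{U}^H$ together with the diagonal-sum constraints forcing $\bm{f}(\theta)^H\bm{Q}\bm{f}(\theta)\equiv 1$, and prove the ``only if'' direction by a Fej\'er--Riesz spectral factorization of the nonnegative gap polynomial $1-\sum_{i}|\bm{f}(\theta)^H\bm{a}_i|^2=|\bm{f}(\theta)^H\bm{b}|^2$, from which the paper explicitly assembles the witness $\bm{Q}=\sum_i\bm{a}_i\bm{a}_i^H+\bm{b}\bm{b}^H$. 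The only detail you leave implicit is this explicit construction of $\bm{Q}$, but the bookkeeping you defer is exactly the diagonal-sum computation the paper carries out.
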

\begin{proof}
See Appendix \ref{proof.lmi}
\end{proof}

By using this result, the dual problem can be stated in the following equivalent form:
\begin{align}\label{prob.sdp_dual}
	&\max_{\substack{\bm{V}\in\mathbb{C}^{M\times T}\\\bm{Q}\in\mathbb{C}^{N\times N}}}{\rm Re}\langle \bm{V}, \bm{Y}\rangle_F-\frac{1}{2\gamma}\|\bm{V}\|_F^2\nonumber\\
	&\begin{bmatrix}
		\bm{Q}&\mathcal{P}^{{\rm Adj}}_{\Omega}(\bm{V}) c_1\\
		(\mathcal{P}^{{\rm Adj}}_{\Omega}(\bm{V}))^H c_1&\bm{I}_T
	\end{bmatrix}\succeq \bm{0},\nonumber\\
	&\langle \bm{\varTheta}_{q}, \bm{Q}  \rangle=1_{q=0},~	
	q=-N+1,..., N-1.
\end{align}
This problem is a semidefinite programming. Now, we can find the dual of \eqref{prob.sdp_dual} to obtain a problem that is suitable for applying a low complexity ADMM algorithm. First, we write the Lagrange dual function regarding the problem \eqref{prob.sdp_dual} 
\begin{align}\label{lag_dual_fun}
	&\mathcal{L}=\inf_{\bm{V},\bm{Q}}-{\rm Re}\langle \bm{V}, \bm{Y}\rangle_F+\frac{1}{2\gamma}\|\bm{V}\|_F^2-v_1(1-\langle \bm{\varTheta}_0,\bm{Q}\rangle)+\nonumber\\
	&+\sum_{q=2}^{N} v_q\langle\bm{\varTheta}_{q-1},\bm{Q} \rangle+\sum_{q=2}^{N} \overline{v}_{q}\langle\bm{\varTheta}_{-q+1},\bm{Q} \rangle\nonumber\\
	&-\Bigg\langle\begin{bmatrix}
		\bm{\Gamma}&\bm{Z}\\
		\bm{Z}^H&\bm{W}
	\end{bmatrix}, \begin{bmatrix}
		\bm{Q}&\mathcal{P}^{{\rm Adj}}_{\Omega}(\bm{V}) c_1\\
		(\mathcal{P}^{{\rm Adj}}_{\Omega}(\bm{V}))^H c_1&\bm{I}_T
	\end{bmatrix} \Bigg\rangle,
\end{align}
where $\bm{W}\in\mathbb{C}^{T\times T}$, $\bm{\Gamma}\in\mathbb{C}^{N\times N}$, $\bm{Z}\in\mathbb{C}^{N\times T}$ and $\bm{v}\in\mathbb{C}^{N}$ are dual multipliers corresponding to the problem \eqref{prob.sdp_dual}. To prevent \eqref{lag_dual_fun} from getting unbounded, it immediately follows that
\begin{align}
	\bm{\Gamma}=v_1\bm{\varTheta}_0+\sum_{q=2}^{N}v_q\bm{\varTheta}_{q-1}+\overline{v}_q\bm{\varTheta}_{-q+1}=\mathcal{T}(\bm{v})
\end{align}  
where the last part comes from \cite[Equations 2.25, 2.26]{dumitrescu2017positive}. By taking derivative of $\mathcal{L}$ with respect to $\bm{V}$, we have that
\begin{align}
	\widehat{\bm{V}}=\gamma (\bm{Y}+2 c_1 \mathcal{P}_{\Omega}(\bm{Z}))
\end{align} 
which by replacing it into the objective function and forming the dual problem of \eqref{prob.sdp_dual}, we achieve \eqref{prob.goal_optimization}. To identify the angles corresponding to active users, we leverage an adapted version of the results in \cite[Equation 2.2]{candes2014towards} and \cite[Proposition 1]{chi2016guaranteed} (which is indeed for a single device) to conclude that the angles corresponding to the $k$-th active user can be identified by finding angles for which the $\ell_2$ norm of the dual polynomial $\bm{q}_k(\theta)=(\mathcal{P}^{{\rm Adj}}_{\Omega}(\bm{V}))^H\bm{a}(\theta)\|\bm{\phi}_k\|_2$ achieves one. Extending this result to all active users, we can identify all of the angles corresponding to all of the active users by finding locations where $\|\bm{q}_G(\theta)\|_2$ in \eqref{eq.angle_find} maximizes.
%\begin{align}
%\widehat{ \theta}_k=\mathop{\arg\max}\|q\|_2
%\end{align}
%
%
% we can now find the single dual polynomial $\bm{q}$To identify active devices, we use the following proposition:
%\begin{prop}
%Denote the set of angles corresponding to $k$-th device by $\mathcal{S}^a_k$. If there exists multiple vector-valued polynomials $\bm{q}_k(\theta):=(\mathcal{P}^{Adj}_{\Omega}(\bm{V}))^H\bm{a}(\theta), \theta\in\mathcal{S}^a_k$ satisfying
%\begin{align}
%\|\bm{q}_G(\theta)\|_2< \frac{1}{\max_{k=1,..., K}\|\bm{\phi}_k\|_2} \forall \theta \in 
%\end{align}
%\end{prop}
\section{Derivation of the dual problem}\label{proof.dualprob}
By writing the Lagrangian function of the convex optimization problem \eqref{prob.atomic_l1_lasso}, we have:
\begin{align}\label{d1}
& \mathcal{L}(\bm{Z}_1,...,\bm{Z}_K,\bm{Y}^{\star},\bm{V})=\sum_{k=1}^K\|\bm{Z}_k\|_{\mathcal{A}_k}+\frac{\gamma}{2}\|\bm{Y}-\bm{Y}^{\star}\|_F^2+\nonumber\\
&{\rm Re}~\langle \bm{Y}^{\star}-\sum_{k=1}^K \mathcal{P}_{\Omega}(\bm{Z}_k),\bm{V}\rangle.
\end{align}
Minimizing $\mathcal{L}$ in \eqref{d1} with respect to $\bm{Z}_k$s and $\bm{Y}^{\star}$ gives the following:
\begin{align}\label{d2}
&\inf_{\{\bm{Z}_k\}_{k=1}^K}\sum_{k=1}^K\Big[\|\bm{Z}_k\|_{\mathcal{A}_k}-{\rm Re}\langle\mathcal{P}_{\Omega}(\bm{Z}_k),\bm
V \rangle\Big]+\nonumber\\
&\inf_{\bm{Y}^{\star}}\frac{\gamma}{2}\|\bm{Y}-\bm{Y}^{\star}\|_F^2+{\rm Re}~\langle \bm{Y}^{\star},\bm{V}\rangle\stackrel{(\RN{1})}{=}\nonumber\\
&\inf_{\{\bm{Z}_k\}_{k=1}^K} \sum_{k=1}^K\|\bm{Z}_k\|_{\mathcal{A}_k}(1-\|\mathcal{P}^{{\rm Adj}}_{\Omega}(\bm{V})\|^{d}_{\mathcal{A}_k})+\nonumber\\
&\inf_{\bm{Y}^{\star}}\frac{\gamma}{2}\|\bm{Y}-\bm{Y}^{\star}\|_F^2+{\rm Re}\langle \bm{Y}^{\star},\bm{V}\rangle\stackrel{(\RN{2})}{=}\inf_{\bm{Y}^{\star}}\frac{\gamma}{2}\|\bm{Y}-\bm{Y}^{\star}\|_F^2+\nonumber\\
&{\rm Re}\langle \bm{Y}^{\star},\bm{V}\rangle={\rm Re}~\langle \bm{Y},\bm{V}\rangle -\frac{1}{2\gamma}\|\bm{V}\|_F^2
\end{align}
where in the first optimization in $(\RN{1})$, we used the relation 
\begin{align}
Re \langle \mathcal{P}_{\Omega}(\bm{Z}_k), \bm{V} \rangle=Re \langle \bm{Z}_k, \mathcal{P}^{{\rm Adj}}_{\Omega}(\bm{V}) \rangle \le 
\|\bm{Z}_k\|_{\mathcal{A}_k}\|\mathcal{P}^{{\rm Adj}}_{\Omega}(\bm{V}) \|^d_{\mathcal{A}_k},
\end{align}
which comes from the H\"{o}lder's inequality. As the upper bound in the H\"{o}lder's inequality is achievable, the infimum in the first term of $(\RN{1})$ reaches its lower bound. Here, 
$\|\cdot\|^d_{\mathcal{A}_k}$ is the dual norm associated with $\|\cdot\|_{\mathcal{A}_k}$ defined as
\begin{align}\label{d3}
&\|\mathcal{P}^{{\rm Adj}}_{\Omega}(\bm{V})\|^d_{\mathcal{A}_k}:=\sup_{\|\bm{Z}\|_{\mathcal{A}_k}\le 1}\langle \mathcal{P}^{{\rm Adj}}_{\Omega}(\bm{V}), \bm{Z}\rangle 
=\sup_{\theta_k\in (0,\pi),\bm{\phi}_k}\nonumber\\
&\langle \mathcal{P}^{{\rm Adj}}_{\Omega}(\bm{V}), \bm{a}(\theta_k)\bm{\phi}^H_k\rangle=\sup_{\theta_k\in (0,\pi),\bm{\phi}_k}\langle (\mathcal{P}^{{\rm Adj}}_{\Omega}(\bm{V}))^H\bm{a}(\theta_k),\bm{\phi}_k\rangle\nonumber\\
&=\sup_{\theta_k\in (0,\pi),\bm{\phi}_k}\|(\mathcal{P}^{{\rm Adj}}_{\Omega}(\bm{V}))^H\bm{a}(\theta_k)\|_2\|\bm{\phi}_k\|_2
\end{align}
where in the last step above, we used again H\"{o}lder's inequality.

In order to have a bounded objective function in \eqref{d2}, we should have $\|\mathcal{P}^{{\rm Adj}}_{\Omega}(\bm{V})\|^{d}_{\mathcal{A}_k}\le 1$ for all $k=1,..., K$ making the objective function in the first optimization in \eqref{d2} equal to zero, thus leading to the equality $(\RN{2})$. By minimizing the objective function in $(\RN{2})$ with respect to $\bm{Y}^{\star}$, we achieve $\bm{Y}^{\star}=\bm{Y}-\frac{\bm{V}}{\gamma}$ and after replacing into the objective function \eqref{d2}, the Lagrangian function reads as $\langle \bm{Y},\bm{V}\rangle -\frac{1}{2\gamma}\|\bm{V}\|_F^2$. Thus, the resulting dual problem becomes in the form of the following optimization:
\begin{align}\label{d4}
&\max_{\bm{V}\in\mathbb{C}^{N\times T}}{\rm Re}\langle \bm{V}, \bm{Y}\rangle_F-\frac{1}{2\gamma}\|\bm{V}\|_F^2\nonumber\\
&{\rm s.t.}~~\|\mathcal{P}^{{\rm Adj}}_{\Omega}(\bm{V})\|^d_{\mathcal{A}_k}\le 1,~k=1,..., K.
\end{align}
Combining \eqref{d3} and \eqref{d4}, leads to
\eqref{prob.dual}. 

\section{Proof of Lemma \ref{lem.LMI}}\label{proof.lmi}
First, we begin with the fact that the constraints 
\begin{align}
	&	\|(\mathcal{P}^{{\rm Adj}}_{\Omega}(\bm{V}))^H\bm{a}(\theta_k)\|_2\|\bm{\phi}_k\|_2\le 1~\forall \theta_k \in (0,\pi), k=1,..., K 
\end{align} 
are equivalent to 
\begin{align}
	&	\|(\mathcal{P}^{{\rm Adj}}_{\Omega}(\bm{V}))^H\bm{a}(\theta_k)\|_2\max_{k=1,..., K}\|\bm{\phi}_k\|_2\le 1~\forall \theta_k \in (0,\pi), 
\end{align} 
which then by defining the notation $c_1$ and \eqref{eq.atoms} can be rewritten as follows:
\begin{align}\label{pl1}
	\sum_{i=1}^T \Big|\sum_{l=1}^N (\mathcal{P}^{{\rm Adj}}_{\Omega}(\bm{V}))_{(l,i)} c_1 {\rm e}^{-j 2\pi (l-1) \cos(\theta)}\Big|^2\le 1,~~\forall \theta\in (0,\pi).
\end{align}
Defining the $i$-th column of $\mathcal{P}^{{\rm Adj}}_{\Omega}(\bm{V})$ as $\bm{a}_i \coloneqq c_1[(\mathcal{P}^{{\rm Adj}}_{\Omega}(\bm{V}))_{(1,i)},..., (\mathcal{P}^{{\rm Adj}}_{\Omega}(\bm{V}))_{(N,i)}]^T$ and $\bm{f}(\theta) \coloneqq [1,..., {\rm e}^{j 2\pi (N-1) \cos(\theta)}]^T$, \eqref{pl1} can be reformulated as
\begin{align}
	\sum_{i=1}^T|\bm{f}(\theta)^H\bm{a}_i|^2\le 1,
\end{align} 
which implies that the polynomial $1-\bm{f}(\theta)^H\sum_{i=1}^T\bm{a}_i\bm{a}_i^H\bm{f}(\theta)$ is non-negative. Based on \cite[Theorem 1.1]{dumitrescu2017positive}, there exists a polynomial $\bm{q}_1(\theta):=\sum_{i=1}^N b_i{\rm e}^{-j 2\pi (i-1) \cos(\theta)}=\bm{f}(\theta)^H\bm{b}$ for some $\bm{b}\in\mathbb{C}^{N}$ such that
\begin{align}\label{pl2}
	1-\bm{f}(\theta)^H\sum_{i=1}^T\bm{a}_i\bm{a}_i^H\bm{f}(\theta)=|q_1(\theta)|^2=\bm{f}(\theta)^H\bm{b}\bm{b}^H\bm{f}(\theta).
\end{align}
Set $\bm{Q}=\sum_{i=1}^T\bm{a}_i\bm{a}_i^H+\bm{b}\bm{b}^H$. Since both $\bm{Q}$ and $\bm{Q}-\sum_{i=1}^T\bm{a}_i\bm{a}_i^H$ are positive semidefinite, using Schur complement lemma \cite[Appendix A.5.5]{boyd2004convex} leads to the semidefinite constraint
\begin{align}
	\begin{bmatrix}
		\bm{Q}& \mathcal{P}^{{\rm Adj}}_{\Omega}(\bm{V}) c_1\\
		(\mathcal{P}^{{\rm Adj}}_{\Omega}(\bm{V}))^H c_1&\bm{I}_T
	\end{bmatrix}\succeq \bm{0}. 
\end{align} 
Further, by \eqref{pl2}, we have:
\begin{align}
	\bm{f}^H(\theta)\bm{Q}\bm{f}(\theta)=\langle \bm{Q}, \bm{f}(\theta)\bm{f}^H(\theta) \rangle=1, \forall \theta \in (0,\pi).
\end{align} 
This can be further simplified to
\begin{align}\label{pl3}
	&\langle \bm{Q}, \mathcal{T}(\bm{f}(\theta))\rangle=\langle \mathcal{T}^{{\rm Adj}}(\bm{Q}), \bm{f}^R(\theta)\rangle-j\langle \mathcal{T}^{{\rm Adj}}(\bm{C}\odot \bm{Q}), \bm{f}^I(\theta)\rangle\nonumber\\
	&~\forall \theta\in (0,\pi)
\end{align}
where $\bm{f}^R(\theta)$ and $\bm{f}^I(\theta)$ are the real and imaginary parts of $\bm{f}(\theta)$, respectively. $\bm{C}$ is a matrix composed of $-1$s on the lower triangular elements and $1$s elsewhere, i.e., $\bm{C}_{k,l}=-1,k<l$ and $\bm{C}_{k,l}=1,k\ge l$. Satisfying \eqref{pl3} for all $\theta$ is only possible when
\begin{align}
	&(\mathcal{T}^{{\rm Adj}}(\bm{Q}))_1=1\label{pl41}\\
	&(\mathcal{T}^{{\rm Adj}}(\bm{Q}))_k=0,\label{pl42}\\ 
	&(\mathcal{T}^{{\rm Adj}}(\bm{C}\odot\bm{Q}))_k=0\label{pl43}\\
	&k=2,\ldots, N.
\end{align}
Using the result of Lemma \ref{lem.adjointoperator_calculation}, we have that \eqref{pl41} simplifies to
\begin{align}\label{pl5}
	\sum_{i=1}^N Q(i,i)=1.
\end{align}
For \eqref{pl42} and \eqref{pl43} to hold, we should have
\begin{align}\label{pl7}
	\sum_{i=k}^N {\rm Re}~ Q(i-k+1,i)=0
\end{align}
and 
\begin{align}
	\sum_{i=k}^N Q(i-k+1,i)=\sum_{i=k}^N Q(i,i-k+1),	
\end{align} 
which leads to
\begin{align}\label{pl6}
	\sum_{i=k}^N {\rm Im}~ Q(i-k+1,i)=0.
\end{align}
Combining \eqref{pl6} and \eqref{pl7} gives $\sum_{i=k}^N Q(i-k+1,i)=\sum_{i=k}^N Q(i,i-k+1)=0$, which along with the relation \eqref{pl5} could be all simply written as
\begin{align}\label{pl8}
	\langle \bm{\varTheta}_{q},\bm{Q}\rangle= 1_{q=0}, ~q=-N+1,..., N-1.	
\end{align}

Conversely, if \eqref{sdp_constraints} holds, then by Schur complement lemma \cite[Appendix A.5.5]{boyd2004convex}, we have 
\begin{align}\label{pl9}
	\bm{Q}-\sum_{i=1}^T \bm{a}_i\bm{a}_i^H\succeq \bm{0}
\end{align}
and \eqref{pl8}. Thus, by \eqref{pl9}, we may write
\begin{align}\label{pl10}
	&\sum_{i=1}^T|\bm{f}(\theta)^H\bm{a}_i|^2=\bm{f}(\theta)^H\Big(\sum_{i=1}^T\bm{a}_i\bm{a}_i^H\Big)\bm{f}(\theta)\le \bm{f}(\theta)^H\bm{Q}\bm{f}(\theta)\nonumber\\
	&=\langle \bm{Q}, \bm{f}(\theta)\bm{f}(\theta)^H\rangle=\langle \bm{Q}, \mathcal{T}(\bm{f}(\theta)) \rangle.
\end{align}
Using \eqref{eq.toep_rel_complex}, we can proceed \eqref{pl10} by writing
\begin{align}\label{pl11}
	\langle \bm{Q}, \mathcal{T}(\bm{f}(\theta)) \rangle=\langle \mathcal{T}^{{\rm Adj}}(\bm{Q}), \bm{f}^R(\theta)\rangle-j\langle \mathcal{T}^{{\rm Adj}}(\bm{C}\odot \bm{Q}),\bm{f}^I(\theta)\rangle.
\end{align}
Due to the relation $\langle \bm{\varTheta}_{q}, \bm{Q}  \rangle=1_{q=0}, q=-N+1,..., N-1$, it follows that 
\begin{align}
	&(\mathcal{T}^{{\rm Adj}}(\bm{Q}))_1=1\\
	&(\mathcal{T}^{{\rm Adj}}(\bm{Q}))_k=0, k=2,\ldots, N\\
	&(\mathcal{T}^{{\rm Adj}}(\bm{C}\odot\bm{Q}))_k=0, k=2,\ldots, N,
\end{align}
which gives $\langle \bm{Q}, \mathcal{T}(\bm{f}(\theta)) \rangle=1$ by \eqref{pl3}. Thus, we have $\sum_{i=1}^T|\bm{f}(\theta)^H\bm{a}_i|^2\le 1$, which is equivalent to \eqref{contin_constarints}.

\section{Proof of Lemmas \ref{lem.adjointoperator_calculation} and \ref{lem.adjointoper_cal_lower_tringular}}\label{proof.lem.adjoint_operator}
For an arbitrary complex-valued vector $\bm{v}\in\mathbb{C}^N$, the Toeplitz matrix $\mathcal{T}(\bm{v})$ defined in \eqref{eq.toeplitz_mat} can be splitted into its real and imaginary components as
\begin{align}\label{eq.toep_rel_complex}
	\mathcal{T}(\bm{v})=\mathcal{T}(\bm{v}^{R})+j \bm{C}\odot \mathcal{T}(\bm{v}^I).	
\end{align}
As it is observed from the latter relation, the Toeplitz operator $\mathcal{T}:\mathbb{C}^N\rightarrow \mathbb{C}^{N\times N}$ is not a linear operator in general. Thus, obtaining the adjoint operator requires splitting the input space of the Toeplitz operator into real and complex-valued, in which cases it is linear and we can obtain its adjoint operator.  Therefore, for an arbitrary complex-valued matrix $\bm{A}\in\mathbb{C}^{N\times N}$ and the real-valued vector $\bm{v}^R\in\mathbb{R}^{N}$, the adjoint operator $\mathcal{T}^*$ is obtained from the following formula
\begin{align}\label{eq.main_formula}
	\langle \mathcal{T}(\bm{v}^R),\bm{A}\rangle= \langle \bm{v}^R,\mathcal{T}^{{\rm Adj}}(\bm{A})\rangle. 
\end{align}
The left-hand side of \eqref{eq.main_formula} can be written as
\begin{align}\label{eq.r1}
	&	\sum_{\substack{i,l=1\\i\le l}}^N \overline{A}_{i,l} v^R_{|i-l|+1}+\sum_{\substack{i,l=1\\i> l}}^N \overline{A}_{i,l} \overline{v}^R_{|i-l|+1}\stackrel{(a)}{=}\nonumber\\
	&\sum_{i=1}^N\sum_{k=\max\{1,i-N+1\}}^i \overline{A}_{i,i-k+1} v_{k}+\nonumber\\
	&\sum_{l=1}^N\sum_{q=\max\{1,i-N+1\}}^i \overline{A}_{l-q+1,l} v^R_{q}\stackrel{(b)}{=}\nonumber\\
	&v^R_1\sum_{i=1}^N \overline{A}_{i,i}+\sum_{k=2}^Nv^R_{k}\Big(\sum_{i=k}^N \overline{A}_{i,i-k+1}+\overline{A}_{i-k+1,i}\Big) 
\end{align}
where in (a) we use change of variables $k=i-l+1$ and $q=l-i+1$, for the first and second terms, respectively, and in (b), we change the order of summation and determine the lower and upper limit of summation correspondingly. Combining \eqref{eq.main_formula} and \eqref{eq.r1}, we get the following result
\begin{align}
	&(\mathcal{T}^{\rm {\rm Adj}}(\bm{A}))_1=\sum_{i=k}^N {A}_{i,i}\nonumber\\	
	&(\mathcal{T}^{\rm {\rm Adj}}(\bm{A}))_k=\sum_{i=k}^N {A}_{i,i-k+1}+{A}_{i-k+1,i},~ k=2:N.
\end{align}
The proof of Lemma \ref{lem.adjointoper_cal_lower_tringular} proceeds in a similar manner. First, by the definition of the operator $\mathcal{T}_1$, it follows that
\begin{align}\label{eq.toeplitz1_mat}
	\mathcal{T}_1(\bm{v})=\begin{bmatrix}
		v_1&v_2&\hdots&v_{N}\\
		-\overline{v}_2&v_1&\hdots&v_{N-1}\\
		\vdots&\vdots&\ddots&\vdots\\
		-\overline{v}_N&-\overline{v}_{N-1}&\hdots&v_1
	\end{bmatrix}.
\end{align}
Using the relation  \eqref{eq.main_formula} for $\mathcal{T}_1$, we can derive the adjoint operator for arbitrary real-valued vector $\bm{v}^I$ by using the following relation
\begin{align}\label{eq.main_formula1}
	\langle \mathcal{T}_1(\bm{v}^I),\bm{A}\rangle= \langle \bm{v}^I,\mathcal{T}_1^{{\rm Adj}}(\bm{A})\rangle. 
\end{align}
Proceeding with the left-hand side and leveraging the same reasons as in \eqref{eq.r1}, leads to
\begin{align}\label{eq.rr1}
	&	\sum_{\substack{i,l=1\\i\le l}}^N \overline{A}_{i,l} v^I_{|i-l|+1}-\sum_{\substack{i,l=1\\i> l}}^N \overline{A}_{i,l} {v}^I_{|i-l|+1}\stackrel{(\RN{1})}{=}\nonumber\\
	&\sum_{i=1}^N\sum_{k=\max\{1,i-N+1\}}^i \overline{A}_{i,i-k+1} v^I_{k}-\nonumber\\
	&\sum_{l=1}^N\sum_{q=\max\{1,i-N+1\}}^i \overline{A}_{l-q+1,l} v^I_{q}\stackrel{(\RN{2})}{=}\nonumber\\
	&v^I_1\sum_{i=1}^N \overline{A}_{i,i}+\sum_{k=2}^Nv^I_{k}\Big(\sum_{i=k}^N \overline{A}_{i,i-k+1}-\overline{A}_{i-k+1,i}\Big) 
\end{align}
where a change of summation order is used in the last part. Combining \eqref{eq.rr1} and \eqref{eq.main_formula1}, we reach to the result of \eqref{eq.T1adj}.

\section{Proof of Lemma \ref{lem.v_cal}}\label{proof.lem.v_cal}
Define $\bm{v}:=\bm{v}^R+j\bm{v}^I$. 
To prove the result, we first rewrite the Lagrangian function $\mathcal{L}_{\rho}$ in \eqref{eq.L_rho} and only keep the terms related to $\bm{v}$ as other terms do not affect the minimization over $v$. Thus, we have
\begin{align}
	\mathcal{L}_{\rho}(\bm{v})=v_1^R-{\rm Re}\langle \bm{\Lambda}_0, \mathcal{T}(\bm{v}) \rangle+\frac{\rho}{2}\|\bm{\Psi}_0-\mathcal{T}(\bm{v})\|_F^2.
\end{align}
By splitting into real and imaginary parts and using \eqref{eq.toep_rel_complex}, it follows that
\begin{align}
	&\mathcal{L}_{\rho}(\bm{v})=v_1^R-{\rm Re}\bigg(\langle \mathcal{T}^{{\rm Adj}}(\bm{\Lambda}_0), \bm{v}^R\rangle\bigg)+\nonumber\\
	&{\rm Re}\bigg( j \langle \mathcal{T}^{{\rm Adj}}(\bm{C}\odot \bm{\Lambda}_0), \bm{v}^I\rangle\bigg)+\frac{\rho}{2}\|\bm{\Psi}^R_0-\mathcal{T}(\bm{v}^R)\|_F^2+\nonumber\\
	&\frac{\rho}{2}\|\bm{\Psi}^I_0-\underbrace{\bm{C}\odot\mathcal{T}(\bm{v}^I)}_{=:\mathcal{T}_1^{{\rm Adj}}(\bm{v}^I)}\|_F^2,
\end{align}
which can be further simplified to
\begin{align}\label{rel_v1}
	&\mathcal{L}_{\rho}(\bm{v})=v_1^R-\langle \mathcal{T}^{{\rm Adj}}(\bm{\Lambda}_0^R), \bm{v}^R\rangle-\nonumber\\
	& \langle \mathcal(\bm{C}\odot \bm{\Lambda}_0^R), \bm{v}^I\rangle+\frac{\rho}{2}\|\bm{\Psi}^R_0-\mathcal{T}(\bm{v}^R)\|_F^2+\nonumber\\
	&\frac{\rho}{2}\|\bm{\Psi}^I_0-\underbrace{\bm{C}\odot\mathcal{T}(\bm{v}^I)}_{=:\mathcal{T}_1^{{\rm Adj}}(\bm{v}^I)}\|_F^2.
\end{align}
Taking derivative of $\mathcal{L}_{\rho}$ with respect to $\bm{v}^R$, we have
\begin{align}\label{rel_v2}
	\frac{\partial \mathcal{L}_{\rho}}{\partial \bm{v}^R}=\bm{e}_1-\mathcal{T}^{{\rm Adj}}(\bm{\Lambda}_0^R)-\rho \mathcal{T}^{{\rm Adj}}(\bm{\Psi}_0^R-\mathcal{T}(\bm{v}^R))=\bm{0}.
\end{align}
Moreover, taking the derivative of $\mathcal{L}_{\rho}$ with respect to $\bm{v}^I$, we reach to the following expression
\begin{align}\label{rel_v3}
	-\mathcal{T}^{{\rm Adj}}(\bm{C}\odot\bm{\Lambda}_0^I)-\rho \mathcal{T}_1^{{\rm Adj}}(\bm{\Psi}_0^I-\mathcal{T}_1(\bm{v}^I))=\bm{0}.
\end{align}
Simplifying \eqref{rel_v2} and \eqref{rel_v3} requires to characterize $\mathcal{T}^{{\rm Adj}}(\cdot)$ and $\mathcal{T}_1^{{\rm Adj}}(\cdot)$, which is done in the following lemmas and proved in Appendix \ref{proof.lem.adjoint_operator}.
\begin{lem}\label{lem.adjointoperator_calculation}
	Let $\bm{A}\in\mathbb{C}^{N\times N}$ be an arbitrary complex matrix. The adjoint operator of the Toeplitz operator $\mathcal{T}$ denoted by $\mathcal{T}^{{\rm Adj}}: \mathbb{C}^{N\times N}\rightarrow \mathbb{C}$ is obtained by
	\begin{align}\label{T_adj}
		&(\mathcal{T}^{{\rm Adj}}(\bm{A}))_{1}= \sum_{i=1}^N \bm{A}_{i,i},\\
		&(\mathcal{T}^{{\rm Adj}}(\bm{A}))_{k}=\sum_{i=k}^N \big(\bm{A}_{i,i-k+1}+\bm{A}_{i-k+1,i}\big), ~k\neq 1.
	\end{align}	
\end{lem}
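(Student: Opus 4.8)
The plan is to prove \eqref{T_adj} directly from the defining relation of the adjoint, namely the identity $\langle \mathcal{T}(\bm{v}^R),\bm{A}\rangle=\langle \bm{v}^R,\mathcal{T}^{{\rm Adj}}(\bm{A})\rangle$ required to hold for every real vector $\bm{v}^R\in\mathbb{R}^N$. The first conceptual point I would stress is that $\mathcal{T}$ is \emph{not} $\mathbb{C}$-linear, since the lower-triangular entries of \eqref{eq.toeplitz_mat} carry a conjugation $\overline{v}_{l-i+1}$; hence the adjoint is only meaningful after restricting the input to the real subspace $\mathbb{R}^N$, on which $\mathcal{T}$ becomes a genuine linear map sending $\bm{v}^R$ to a real symmetric Toeplitz matrix. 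This restriction is exactly why the full derivation treats the real and imaginary parts separately (and why the companion operator $\mathcal{T}_1$ is introduced for the imaginary component).

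With linearity secured, I would expand the left-hand inner product using the block form of $\mathcal{T}(\bm{v})$ in \eqref{eq.toeplitz_mat}, splitting the double sum over $(i,l)$ into the lower-triangular part $i\ge l$ and the strictly-upper part $i<l$:
\[
\langle \mathcal{T}(\bm{v}^R),\bm{A}\rangle=\sum_{i\ge l}v^R_{i-l+1}\overline{A}_{i,l}+\sum_{i<l}v^R_{l-i+1}\overline{A}_{i,l}.
\]
For real $\bm{v}^R$ the conjugation in the lower triangle is inert. The core manoeuvre is then a change of summation index, $k=i-l+1$ in the first sum and $q=l-i+1$ in the second, which converts both sums into sums over a diagonal offset running from $1$ to $N$; re-indexing and swapping the order of summation then isolates the scalar coefficient multiplying each fixed $v^R_k$.

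The bookkeeping splits into two cases, and this is where I expect the only real care to be needed. For $k=1$ the two sums both collapse onto the main diagonal and coincide, producing the single term $\sum_{i=1}^N A_{i,i}$; for $k\ge 2$ the offset $+(k-1)$ contributes $\sum_{i\ge k}A_{i,i-k+1}$ from the lower triangle while the offset $-(k-1)$ contributes $\sum_{i\ge k}A_{i-k+1,i}$ from the upper triangle, and these \emph{add} rather than cancel, giving the symmetric pair $A_{i,i-k+1}+A_{i-k+1,i}$. Matching these coefficients against $\langle \bm{v}^R,\mathcal{T}^{{\rm Adj}}(\bm{A})\rangle=\sum_k v^R_k\,\overline{(\mathcal{T}^{{\rm Adj}}(\bm{A}))_k}$ and reading off each $(\mathcal{T}^{{\rm Adj}}(\bm{A}))_k$ then delivers \eqref{T_adj}; note that the complex conjugate in this pairing is precisely what removes the conjugation from $\overline{A}$ and leaves plain $A$ in the final formula.

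An equivalent and perhaps cleaner route I would keep in mind is to write $\mathcal{T}(\bm{v}^R)=v^R_1\bm{\varTheta}_0+\sum_{k\ge 2}v^R_k(\bm{\varTheta}_{k-1}+\bm{\varTheta}_{-(k-1)})$ in the elementary-Toeplitz basis and use $\langle \bm{\varTheta}_q,\bm{A}\rangle=\sum_{i-l=q}\overline{A}_{i,l}$, which makes the diagonal-versus-off-diagonal distinction transparent and bypasses the explicit change of variables. Either way, the main (and essentially only) obstacle is combinatorial rather than analytic: carefully tracking the conjugation convention of the inner product together with the asymmetry between the single main diagonal ($k=1$) and the paired off-diagonals ($k\ge 2$), so that the term $A_{i,i-k+1}+A_{i-k+1,i}$ emerges with the correct summation limits $i=k,\dots,N$.
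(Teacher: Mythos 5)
Your proposal is correct and follows essentially the same route as the paper's own proof: restrict $\mathcal{T}$ to real inputs (where it is genuinely linear), expand $\langle \mathcal{T}(\bm{v}^R),\bm{A}\rangle$ over the lower and upper triangles, change variables to the diagonal offsets $k=i-l+1$ and $q=l-i+1$, swap the summation order, and read off the coefficient of each $v^R_k$. Your explicit remark that the conjugation in the pairing $\langle \bm{v}^R,\mathcal{T}^{{\rm Adj}}(\bm{A})\rangle$ is what turns $\overline{A}$ into $A$ in the final formula is a correct and slightly more careful rendering of a step the paper leaves implicit.
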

\begin{lem}\label{lem.adjointoper_cal_lower_tringular}
	Consider the operator $\mathcal{T}_1(\cdot): \mathbb{C}^{N}\rightarrow \mathbb{C}^{N\times N}$, which for any arbitrary vector $\bm{v}$ is defined as $\mathcal{T}_1(\bm{v}) \coloneqq \bm{C}\circ \mathcal{T}(\bm{v})$ where $\bm{C}\in\mathbb{C}^{N\times N}$ is a matrix composed of $-1$s on lower triangular parts and $1$s elsewhere, i.e., $\bm{C}_{k,l}=-1,k<l$ and $\bm{C}_{k,l}=1,k\ge l$. Then, the adjoint operator of $\mathcal{T}_1$ denoted by $\mathcal{T}_1^{{\rm Adj}}$ for any arbitrary matrix $\bm{A}\in \mathbb{C}^{N\times N}$ is obtained as
	\begin{align}\label{eq.T1adj}
		&(\mathcal{T}_1^{{\rm Adj}}(\bm{A}))_{1}= \sum_{i=1}^N \bm{A}_{i,i},\\
		&(\mathcal{T}_1^{{\rm Adj}}(\bm{A}))_{k}=\sum_{i=k}^N \big(\bm{A}_{i-k+1,i}-\bm{A}_{i,i-k+1}\big), ~k\neq 1.
	\end{align}		
\end{lem}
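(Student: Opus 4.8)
The plan is to obtain $\mathcal{T}_1^{{\rm Adj}}$ exactly as $\mathcal{T}^{{\rm Adj}}$ was obtained in Lemma \ref{lem.adjointoperator_calculation}, exploiting the observation that $\mathcal{T}_1=\bm{C}\circ\mathcal{T}$ coincides with $\mathcal{T}$ on and above the diagonal and only negates the strictly lower-triangular (conjugate) entries, as displayed in \eqref{eq.toeplitz1_mat}. Since $\mathcal{T}$ (and hence $\mathcal{T}_1$) is conjugate-linear on the lower triangle, it is not $\mathbb{C}$-linear; however, restricted to a real-valued input $\bm{v}^I$ it \emph{is} linear, which is precisely the regime in which it is used inside Lemma \ref{lem.v_cal}. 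I would therefore work throughout with real $\bm{v}^I$ and define the adjoint through the pairing \eqref{eq.main_formula1}, namely $\langle \mathcal{T}_1(\bm{v}^I),\bm{A}\rangle=\langle \bm{v}^I,\mathcal{T}_1^{{\rm Adj}}(\bm{A})\rangle$.

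The concrete steps are as follows. First I would write out the entrywise form of $\mathcal{T}_1(\bm{v}^I)$, whose $(i,l)$ entry equals $v^I_{|i-l|+1}$ when $i\le l$ and $-v^I_{|i-l|+1}$ when $i>l$. Substituting this into the left side of the pairing and expanding $\langle\cdot,\cdot\rangle$ gives the two-block sum in \eqref{eq.rr1}, with a $+1$ weight on the $i\le l$ block and a $-1$ weight on the $i>l$ block. Next I would perform the changes of variable $k=i-l+1$ and $q=l-i+1$ on the two blocks and interchange the order of summation, so that the coefficient of each $v^I_k$ is collected as an explicit sum over $i$. Reading off the $k=1$ coefficient yields $\sum_{i=1}^N A_{i,i}$, while for $k\ge 2$ the upper block contributes $A_{i-k+1,i}$ and the lower block contributes $-A_{i,i-k+1}$, producing $\sum_{i=k}^N\big(A_{i-k+1,i}-A_{i,i-k+1}\big)$. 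Matching these coefficients against $\langle\bm{v}^I,\mathcal{T}_1^{{\rm Adj}}(\bm{A})\rangle$ (and using that $\bm{v}^I$ is real) then gives \eqref{eq.T1adj}.

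The main obstacle I anticipate is purely bookkeeping: correctly tracking the index ranges after the two changes of variable and the summation swap, and, most importantly, keeping the sign conventions of $\bm{C}$ straight so that the minus lands on the correct one of the two off-diagonal terms. This sign is exactly what distinguishes the present lemma from Lemma \ref{lem.adjointoperator_calculation}, where the strictly lower-triangular entries carried a $+1$ and the two terms combined additively as $A_{i,i-k+1}+A_{i-k+1,i}$; here the negation forces the antisymmetric combination $A_{i-k+1,i}-A_{i,i-k+1}$. To guard against a sign slip I would cross-check the final formula on a small case (e.g. $N=2$ or $N=3$) by directly pairing $\mathcal{T}_1(\bm{v}^I)$ with a generic $\bm{A}$ and confirming that the coefficients of $v^I_1$ and $v^I_2$ agree with \eqref{eq.T1adj}.
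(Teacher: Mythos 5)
Your proposal is correct and follows essentially the same route as the paper's proof in Appendix \ref{proof.lem.adjoint_operator}: restrict to a real input $\bm{v}^I$ so that $\mathcal{T}_1$ is linear, expand the pairing $\langle \mathcal{T}_1(\bm{v}^I),\bm{A}\rangle$ over the blocks $i\le l$ and $i>l$, change variables, swap the order of summation, and read off the coefficient of each $v^I_k$. In fact your sign bookkeeping is the more reliable of the two: the antisymmetric combination $\bm{A}_{i-k+1,i}-\bm{A}_{i,i-k+1}$ you obtain (minus on the strictly lower-triangular term, which is where $\bm{C}$ negates) agrees with the lemma statement \eqref{eq.T1adj} and with a direct $N=2$ check, whereas the paper's displayed derivation \eqref{eq.rr1} writes the two off-diagonal contributions with their signs transposed.
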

From Lemmas \ref{lem.adjointoperator_calculation} and \ref{lem.adjointoper_cal_lower_tringular}, we can also obtain the expressions $\mathcal{T}^{{\rm Adj}}(\mathcal{T}(\bm{z}))$ and $\mathcal{T}_1^{{\rm Adj}}(\mathcal{T}_1(\bm{z}))$, which are given in the following corollary and proved in Appendix \ref{proof.corol}.
\begin{corl}\label{corol}
	Consider the operators $\mathcal{T}(\cdot)$ and $\mathcal{T}_1(\cdot)$ which are defined in \eqref{eq.toeplitz_mat} and \eqref{eq.toeplitz1_mat}, respectively. Then, for an arbitrary $\bm{z}\in\mathbb{C}^N$, the following relations hold
	\begin{align}
		&\mathcal{T}^{{\rm Adj}}(\mathcal{T}(\bm{z}))_1=\mathcal{T}_1^{{\rm Adj}}(\mathcal{T}_1(\bm{z}))_1=N v_1\nonumber\\
		&\mathcal{T}^{{\rm Adj}}(\mathcal{T}(\bm{z}))_k=\mathcal{T}_1^{{\rm Adj}}(\mathcal{T}_1(\bm{z}))_k=2 {\rm Re}(v_k) (N-k+1).
	\end{align} 	
\end{corl}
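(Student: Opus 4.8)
The plan is to prove Corollary \ref{corol} by direct substitution: I feed the explicit Toeplitz entries of $\mathcal{T}(\bm{z})$ and $\mathcal{T}_1(\bm{z})$ into the closed-form adjoint expressions already established in Lemmas \ref{lem.adjointoperator_calculation} and \ref{lem.adjointoper_cal_lower_tringular}. No new machinery is required; the whole argument is index bookkeeping, so I would fix the conventions carefully at the outset to control the sign and conjugation patterns.

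First, for the $\mathcal{T}^{{\rm Adj}}(\mathcal{T}(\bm{z}))$ part, I set $\bm{A}=\mathcal{T}(\bm{z})$ in Lemma \ref{lem.adjointoperator_calculation}. The first component follows at once from the fact that every diagonal entry of a Toeplitz matrix equals $z_1$, so $(\mathcal{T}^{{\rm Adj}}(\mathcal{T}(\bm{z})))_1=\sum_{i=1}^N z_1 = N z_1$. For $k\neq 1$ I read off the two families of entries appearing in the formula $\sum_{i=k}^N (A_{i,i-k+1}+A_{i-k+1,i})$: the entry $(i,i-k+1)$ lies below the diagonal, so by \eqref{eq.toeplitz_mat} it equals $z_{i-(i-k+1)+1}=z_k$, while the entry $(i-k+1,i)$ lies above the diagonal and equals $\overline{z}_{i-(i-k+1)+1}=\overline{z}_k$. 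Since $i$ ranges over the $N-k+1$ values from $k$ to $N$ and each summand contributes $z_k+\overline{z}_k=2{\rm Re}(z_k)$, the sum collapses to $2{\rm Re}(z_k)(N-k+1)$, as claimed.

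Next, for $\mathcal{T}_1^{{\rm Adj}}(\mathcal{T}_1(\bm{z}))$, I set $\bm{A}=\mathcal{T}_1(\bm{z})=\bm{C}\circ\mathcal{T}(\bm{z})$ in Lemma \ref{lem.adjointoper_cal_lower_tringular}, whose $k\neq 1$ formula reads $\sum_{i=k}^N (A_{i-k+1,i}-A_{i,i-k+1})$. Here I track the Hadamard sign mask $\bm{C}$ (which is $+1$ on and above the diagonal and $-1$ strictly below, consistent with \eqref{eq.toeplitz1_mat}) together with the subtraction built into the adjoint formula. The above-diagonal entry $(i-k+1,i)$ keeps its $+1$ mask and equals $\overline{z}_k$; the below-diagonal entry $(i,i-k+1)$ carries a $-1$ mask and thus equals $-z_k$. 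The formula's subtraction then turns this into $\overline{z}_k-(-z_k)=\overline{z}_k+z_k=2{\rm Re}(z_k)$ per term, and summing over the $N-k+1$ values of $i$ again yields $2{\rm Re}(z_k)(N-k+1)$. The first component is identical to the previous case, since the diagonal is untouched by $\bm{C}$, giving $N z_1$.

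The computation is routine once the index conventions are fixed, and I do not expect a genuine obstacle. The one place that demands care — and where a sign slip is easiest — is the interplay in the $\mathcal{T}_1$ case between the $-1$ entries of $\bm{C}$ strictly below the diagonal and the minus sign already present in the $\mathcal{T}_1^{{\rm Adj}}$ formula of Lemma \ref{lem.adjointoper_cal_lower_tringular}; these two signs must combine so that the conjugate and non-conjugate copies of $z_k$ \emph{add} rather than cancel, reproducing $2{\rm Re}(z_k)$ exactly as in the $\mathcal{T}$ case. I would therefore sanity-check this step on a small example (say $N=2$ or $N=3$) to lock down the signs before committing to the general index manipulation.
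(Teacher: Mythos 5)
Your proposal is correct and follows essentially the same route as the paper's own proof: both substitute the explicit Toeplitz entries of $\mathcal{T}(\bm{z})$ and $\mathcal{T}_1(\bm{z})$ into the adjoint formulas of Lemmas \ref{lem.adjointoperator_calculation} and \ref{lem.adjointoper_cal_lower_tringular} and count the $N-k+1$ terms, with the sign of the mask $\bm{C}$ cancelling against the subtraction in $\mathcal{T}_1^{{\rm Adj}}$ to give $z_k+\overline{z}_k=2\,{\rm Re}(z_k)$. Your sign bookkeeping is consistent (the paper's two stated conventions for $\mathcal{T}$ and $\bm{C}$ are mutually conjugate, but the result $2\,{\rm Re}(z_k)(N-k+1)$ is unaffected), so no gap.
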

Leveraging the results provided in Lemmas \ref{lem.adjointoperator_calculation}, \ref{lem.adjointoper_cal_lower_tringular} and Corollary \ref{corol}, we can proceed \eqref{rel_v2} and \eqref{rel_v3} as follows
\begin{align}
	\bm{v}^{R}=\bm{g} \odot \bigg(-\frac{\bm{e}_1}{\rho}+\mathcal{T}^{{\rm Adj}}(\bm{\Psi}_0^R+\frac{\bm{\Lambda}_0^R}{\rho})\bigg)
\end{align}
and 
\begin{align}
	\bm{v}^{I}=\bm{g} \odot \bigg(\mathcal{T}_1^{{\rm Adj}}(\bm{\Psi}_0^I)+\frac{\mathcal{T}^{{\rm Adj}}(\bm{C}\odot\bm{\Lambda}_0^I)}{\rho}\bigg),
\end{align}
which can be combined to achieve
\begin{align}
	&\bm{v}=\bm{g}\odot \Bigg(-\frac{\bm{e}_1}{\rho}+\mathcal{T}^{{\rm Adj}}(\bm{\Psi}^{R}_0)+\mathcal{T}_1^{{\rm Adj}}(\bm{\Psi}^{I}_0)+\nonumber\\
	&\frac{\mathcal{T}^{{\rm Adj}}(\bm{\Lambda}_0^R+\bm{C}\odot \bm{\Lambda}_0^I)}{\rho}\Bigg).
\end{align}
% \section{Proof of Lemma \ref{lem.adjointoper_cal_lower_tringular}}\label{proof.lem.adjoint_operator1}

\section{Proof of corollaries }\label{proof.corol}
Using the relations in Lemmas \ref{lem.adjointoperator_calculation} and \ref{lem.adjointoper_cal_lower_tringular}, i.e., \eqref{T_adj} and \eqref{eq.T1adj}, we have that for an arbitrary vector $\bm{v}\in\mathbb{C}^{N}$
\begin{align}
	&(\mathcal{T}^{{\rm Adj}}\mathcal{T}(\bm{v}))_1=\sum_{i=1}^N \mathcal{T}(\bm{v})_{(i,i)}=\sum_{i=1}^N v_1=Nv_1,\\
	&(\mathcal{T}^{{\rm Adj}}\mathcal{T}(\bm{v}))_k=\sum_{i=1}^N \mathcal{T}(\bm{v})_{(i,i-k+1)}+\mathcal{T}(\bm{v})_{(i-k+1,i)}=\nonumber\\
	&\sum_{i=k}^N v_k+\overline{v}_k=2 {\rm Re}(v_k) (N-k+1).
\end{align}
Moreover, for $\mathcal{T}_1^{{\rm Adj}}(\mathcal{T}_1(\bm{v}))$, we have
\begin{align}
	&(\mathcal{T}_1^{{\rm Adj}}\mathcal{T}(\bm{v}))_1=\sum_{i=1}^N \mathcal{T}(\bm{v})_{(i,i)}=\sum_{i=1}^N v_1=Nv_1,\\
	&(\mathcal{T}^{{\rm Adj}}\mathcal{T}(\bm{v}))_k=\sum_{i=1}^N \mathcal{T}(\bm{v})_{(i,i-k+1)}-\mathcal{T}(\bm{v})_{(i-k+1,i)}=\nonumber\\
	&\sum_{i=k}^N v_k-(-\overline{v}_k)=2 {\rm Re}(v_k) (N-k+1).
\end{align}
\section{Proof of Lemma \ref{lem.dual_var+dual_multiplier}}\label{proof.lem.dual_rel}
To prove the result, we first borrow a useful lemma from \cite{tang2013compressed}, which states that for any $\bm{Z}\in\mathbb{C}^{N\times T}$ 

\begin{align}
&\|\bm{Z}\|_{\mathcal{A}_{\star}}=\min_{\bm{v}\in\mathbb{C}^{N},\bm{E}\in\mathbb{C}^{T\times T}} {\rm Re}\Big(\frac{{v}_1}{2}+ \frac{{\rm tr}(\bm{E})}{2}\Big)\\
&\begin{bmatrix}
\mathcal{T}(\bm{v})& \bm{Z}\\
\bm{Z}^H&\bm{E}
\end{bmatrix}\succeq \bm{0}.
\end{align}
Suppose that $\widehat{\bm{Z}}$ is the optimal solution of the goal-oriented optimization problem 
in \eqref{prob.goal_optimization}, which is alternatively the optimal solution of
\begin{align}\label{e6}
\min_{\bm{Z}}2\|\bm{Z}\|_{\mathcal{A}_{\star}}+\frac{\gamma}{2}\|\bm{Y}+2c_1\mathcal{P}_{\Omega}(\bm{Z})\|_F^2=:J(\bm{Z}).
\end{align}
Since the objective function is convex, we must have $\bm{0}\in\partial J(\widehat{\bm{Z}})$ which leads to
\begin{align}\label{e1}
\partial \|\cdot\|_{\mathcal{A}_{\star}}(\widehat{\bm{Z}})=-\gamma c_1 \mathcal{P}_{\Omega}^{{\rm Adj}}(\bm{Y}+2c_1 \mathcal{P}_{\Omega}(\widehat{\bm{Z}})). 
\end{align}
Also, the definition of subdifferential function imposes the following relation for any arbitrary $\bm{Z}$:
\begin{align}\label{e2}
\|\bm{Z}\|_{\mathcal{A}_{\star}}\ge \|\widehat{\bm{Z}}\|_{\mathcal{A}_{\star}}+\langle \bm{Z}-\widehat{\bm{Z}}, -\gamma c_1 \mathcal{P}_{\Omega}^{{\rm Adj}}(\bm{Y}+2c_1 \mathcal{P}_{\Omega}(\widehat{\bm{Z}})) \rangle.
\end{align}
This also implies that
\begin{align}\label{e3}
&\inf_{\bm{Z}} \Bigg[\|\bm{Z}\|_{\mathcal{A}_{\star}}+\langle \bm{Z}, \gamma c_1 \mathcal{P}_{\Omega}^{{\rm Adj}}(\bm{Y}+2c_1 \mathcal{P}_{\Omega}(\bm{Z})) \rangle\Bigg] \ge\nonumber\\
& \|\widehat{\bm{Z}}\|_{\mathcal{A}_{\star}}+\langle \widehat{\bm{Z}}, \gamma c_1 \mathcal{P}_{\Omega}^{{\rm Adj}}(\bm{Y}+2c_1 \mathcal{P}_{\Omega}(\widehat{\bm{Z}})) \rangle.
\end{align}
The minimization problem in the relation \eqref{e3} remains bounded only when
\begin{align}
\|\gamma c_1 \mathcal{P}_{\Omega}^{{\rm Adj}}(\bm{Y}+2c_1 \mathcal{P}_{\Omega}(\bm{Z})) \rangle\|^{d}_{\mathcal{A}_{\star}}\le 1
\end{align}
which indeed leads to
\begin{align}\label{e5}
\|\widehat{\bm{Z}}\|_{\mathcal{A}_{\star}}=-\langle \widehat{\bm{Z}}, \gamma c_1 \mathcal{P}_{\Omega}^{{\rm Adj}}(\bm{Y}+2c_1 \mathcal{P}_{\Omega}(\widehat{\bm{Z}})) \rangle.
\end{align}
By replacing \eqref{e5} into $J(\cdot)$ function in \eqref{e6}, we have that
\begin{align}\label{e7}
&J(\widehat{ \bm{Z}})=\frac{\gamma}{2}\|\bm{Y}+2c_1\mathcal{P}_{\Omega}(\widehat{\bm{Z}})\|_F^2-2\gamma c_1 \Big\langle \mathcal{P}_{\Omega}(\widehat{\bm{Z}}), \nonumber\\
&\bm{Y}+2c_1\mathcal{P}_{\Omega}(\widehat{\bm{Z}}) \Big\rangle=\gamma \Big\langle \bm{Y}, \bm{Y}+2c_1\mathcal{P}_{\Omega}(\widehat{\bm{Z}}) \Big\rangle\nonumber\\
&-\frac{\gamma}{2}\|\bm{Y}+2c_1\mathcal{P}_{\Omega}(\widehat{\bm{Z}})\|_F^2.
\end{align}
By leveraging strong duality in convex optimization, the objective function of the primal \eqref{e6} and dual \eqref{prob.dual} optimizations must be equal. Hence, by combining \eqref{e7} and \eqref{prob.dual}, we can find out that
\begin{align}\label{e8}
\widehat{\bm{V}}=\gamma(\bm{Y}+2c_1 \mathcal{P}_{\Omega}(\bm{Z})).
\end{align}
In addition, by minimizing the Lagrangian function $\mathcal{L}_{0}$ and ignoring the $\rho$ term (as $\rho$ is only for ADMM algorithm), we have $2 \gamma c_1\mathcal{P}_{\Omega}^{{\rm Adj}}(\bm{Y}+2c_1 \mathcal{P}_{\Omega}(\bm{Z}))-2\bm{\Lambda}_1=\bm{0}$, which by \eqref{e8} leads to 
%
%combining \eqref{e8} and \eqref{diff_Lrho_Z}, we have that
\begin{align}
2 c_1\mathcal{P}_{\Omega}^{{\rm Adj}}(\widehat{\bm{V}})-2\bm{\Lambda}_1=0
\end{align}
and proves the final result.

\ifCLASSOPTIONcaptionsoff

\fi
\bibliographystyle{ieeetr}
\bibliography{references1}
\end{document}